%%%%%%%%%%%%%%%%%%%%%%%%%%%%%%%%%%%%%%%%%%%%%%%%%%%%%%%%%%%%%%%%%%%%%%%%%%%
%%
%% 9+20 pages, 6+6 figures, and 2 tables; title updated, Published in PRL. See also the companion paper "Third moments of qudit Clifford orbits and 3-designs based on magic orbits (arXiv:2410.13575)"
%%
%% DOI: 10.1103/PhysRevLett.134.160801
%%
%%%%%%%%%%%%%%%%%%%%%%%%%%%%%%%%%%%%%%%%%%%%%%%%%%%%%%%%%%%%%%%%%%%%%%%%%%%%%%%

\documentclass[aps,pra,twocolumn,nofootinbib,superscriptaddress,10pt,floatfix,longbibliography]{revtex4-2}

\usepackage{float}
\usepackage{minitoc}
\usepackage[toc,page,header]{appendix}
\usepackage{physics}
\usepackage{xfrac}
\usepackage{graphicx}
\usepackage{subfigure}
\usepackage{mathdots}
\usepackage{amsfonts,amssymb,amsmath}
\usepackage{bm}
\usepackage{mathrsfs}
\usepackage[]{graphics,graphicx,epsfig}
\usepackage{amsthm}
\usepackage{csquotes}
\MakeOuterQuote{"}
\usepackage{epstopdf}
\usepackage{tikz}
\usepackage[ruled,linesnumbered]{algorithm2e}
\usepackage{paralist}
\usepackage{diagbox}
\usepackage[inline]{enumitem}
\usepackage{qcircuit}
\usepackage{dsfont}
\usepackage{hyperref}

\addtolength{\textwidth}{-0.5in}
\addtolength{\oddsidemargin}{0.25in}
\addtolength{\evensidemargin}{0.25in}

\def\identity{\leavevmode\hbox{\small1\kern-3.8pt\normalsize1}}

\newtheorem{theorem}{Theorem}

\newtheorem{lemma}{Lemma}
\newtheorem{proposition}{Proposition}

\newcommand{\spa}{\operatorname{span}}
\newcommand{\mmod}{\!\mod}

\newcommand{\GHZ}{\mathrm{GHZ}}

\newcommand{\Stab}{\mathrm{Stab}}

\newcommand{\Var}{\mathrm{Var}}
\newcommand{\Sp}{\mathrm{Sp}}

\newcommand{\sh}{\mathrm{sh}}

\newcommand{\Cl}{\mathrm{Cl}}

\newcommand{\Ob}{\mathfrak{O}}

\newcommand{\len}{\mathrm{len}}

\newcommand{\Echz}{\textit{Echelon-z}}
\newcommand{\Echx}{\textit{Echelon-x}}

\newcommand{\REchz}{\textit{R-Echelon-z}}
\newcommand{\REchx}{\textit{R-Echelon-x}}

\newcommand{\cutz}{\textit{cut-z}}
\newcommand{\cutx}{\textit{cut-x}}

\newcommand{\bbZ}{\mathbb{Z}}

\newcommand{\bbF}{\mathbb{F}}
\newcommand{\bbI}{\mathbb{I}}
\newcommand{\bbE}{\mathbb{E}}

\newcommand{\bfb}{\mathbf{b}}
\newcommand{\bfe}{\mathbf{e}}
\newcommand{\bfg}{\mathbf{g}}
\newcommand{\bfh}{\mathbf{h}}

\newcommand{\bfm}{\mathbf{m}}

\newcommand{\bfq}{\mathbf{q}}

\newcommand{\bfu}{\mathbf{u}}
\newcommand{\bfv}{\mathbf{v}}
\newcommand{\bfw}{\mathbf{w}}
\newcommand{\bfs}{\mathbf{s}}
\newcommand{\bfx}{\mathbf{x}}
\newcommand{\bfy}{\mathbf{y}}

\newcommand{\bmS}{\bm{S}}

\newcommand{\caC}{\mathcal{C}}
\newcommand{\caD}{\mathcal{D}}
\newcommand{\caE}{\mathcal{E}}
\newcommand{\caG}{\mathcal{G}}
\newcommand{\caH}{\mathcal{H}}
\newcommand{\caJ}{\mathcal{J}}
\newcommand{\caL}{\mathcal{L}}
\newcommand{\caM}{\mathcal{M}}
\newcommand{\caO}{\mathcal{O}}

\newcommand{\caS}{\mathcal{S}}

\newcommand{\caV}{\mathcal{V}}
\newcommand{\caW}{\mathcal{W}}

\newcommand{\rmi}{\mathrm{i}}

\newcommand{\rme}{\operatorname{e}}

\newcommand{\rmU}{\mathrm{U}}

\newcommand{\scrV}{\mathscr{V}}

\newcommand{\bQ}{\bar{Q}}

\newcommand{\ho}{\hat{o}}
\newcommand{\hrho}{\hat{\rho}}

\newcommand{\tbfx}{\tilde{\mathbf{x}}}
\newcommand{\tcaG}{\tilde{\mathcal{G}}}
\newcommand{\tcaJ}{\tilde{\mathcal{J}}}
\newcommand{\tOb}{\tilde{\Ob}}
\newcommand{\tomega}{\tilde{\omega}}

\renewcommand{\epsilon}{\varepsilon}

\def\eqref#1{\textup{(\ref{#1})}}
\newcommand{\eref}[1]{Eq.~\textup{(\ref{#1})}}

\newcommand{\lref}[1]{Lemma~\ref{#1}}

\newcommand{\Lref}[1]{Lemma~\ref{#1}}

\newcommand{\thref}[1]{Theorem~\ref{#1}}
\newcommand{\thsref}[1]{Theorems~\ref{#1}}
\newcommand{\Thref}[1]{Theorem~\ref{#1}}

\newcommand{\eqsref}[2]{Eqs.~(\ref{#1}) and (\ref{#2})}
\newcommand{\tref}[1]{Table~\ref{#1}}
\newcommand{\sref}[1]{Sec.~\ref{#1}}
\newcommand{\Tref}[1]{Table~\ref{#1}}

\newcommand{\pref}[1]{Proposition~\ref{#1}}

\newcommand{\Pref}[1]{Proposition~\ref{#1}}

\newcommand{\fref}[1]{Fig.~\ref{#1}}
\newcommand{\Fref}[1]{Figure~\ref{#1}}

\setcounter{equation}{0}
\setcounter{figure}{0}
\setcounter{table}{0}
\setcounter{theorem}{0}
\setcounter{lemma}{0}
\setcounter{section}{0}

\def\<{\langle}  %% overriding the original command \<
\def\>{\rangle}  %% overriding the original command \>

\newcommand{\rcite}[1]{Ref.~\cite{#1}}
\newcommand{\rscite}[1]{Refs.~\cite{#1}}

\begin{document}
\title{Qudit Shadow Estimation Based on the Clifford Group and the Power of a Single Magic Gate}

\author{Chengsi Mao}

\affiliation{State Key Laboratory of Surface Physics, Department of Physics, and Center for Field Theory and Particle Physics, Fudan University, Shanghai 200433, China}

\affiliation{Institute for Nanoelectronic Devices and Quantum Computing, Fudan University, Shanghai 200433, China}

\affiliation{Shanghai Research Center for Quantum Sciences, Shanghai 201315, China}

\author{Changhao Yi}

\affiliation{State Key Laboratory of Surface Physics, Department of Physics, and Center for Field Theory and Particle Physics, Fudan University, Shanghai 200433, China}

\affiliation{Institute for Nanoelectronic Devices and Quantum Computing, Fudan University, Shanghai 200433, China}

\affiliation{Shanghai Research Center for Quantum Sciences, Shanghai 201315, China}

\author{Huangjun Zhu}

\email{zhuhuangjun@fudan.edu.cn}

\affiliation{State Key Laboratory of Surface Physics, Department of Physics, and Center for Field Theory and Particle Physics, Fudan University, Shanghai 200433, China}

\affiliation{Institute for Nanoelectronic Devices and Quantum Computing, Fudan University, Shanghai 200433, China}

\affiliation{Shanghai Research Center for Quantum Sciences, Shanghai 201315, China}

\begin{abstract}
Shadow estimation is a sample-efficient protocol for learning the properties of a quantum system using randomized measurements, but the current understanding of qudit shadow estimation is quite limited compared with the qubit setting. Here we clarify the sample complexity
of qudit shadow estimation based on the Clifford group, where the local dimension $d$ is an odd prime. Notably, we show that  the overhead of qudit shadow estimation over the qubit counterpart is  only $\caO(d)$, independent of the qudit number $n$, although the set of stabilizer states may deviate exponentially from a 3-design with respect to the third moment operator.
Furthermore, by adding one layer of magic gates, we propose a simple circuit that can significantly boost the efficiency. Actually, a single magic gate can already eliminate the $\caO(d)$ overhead in qudit shadow estimation and bridge the gap from the qubit setting. 
\end{abstract}
\date{\today}
\maketitle

\emph{Introduction}---Learning the properties of an unknown quantum system is of both fundamental and practical importance in quantum science and technology. The classical shadow estimation was proposed as a sample-efficient protocol for fulfilling this task \cite{HuangKP20,Huan22,KlieR21,ElbeFHK23}. Not requiring a full classical description of the quantum state, this protocol circumvents the `curse of dimensionality' of traditional quantum  tomography \cite{HaahHJW16,CramPFS10}.
Recently, a number of variants have been proposed to make the protocol more robust and practical on noisy
intermediate-scale quantum (NISQ) devices. For example, noise calibration \cite{ChenYZF21,KohG22} and error-mitigation techniques \cite{JnanSC24,SeifCZC23} can make shadow estimation noise-resilient. 
Alternative schemes based on shallow circuits \cite{BertHHI22,IppoLRK23,SchuHH24}, locally scrambled unitary ensembles \cite{HuCY21,TranMHC23}, generalized measurements \cite{NguyBSG22,InnoLPA23}, and hybrid framework~\cite{ZhouL24} are also appealing to practical applications. 
The efficiency of this approach has been demonstrated in a number of experiments \cite{StruZKS21,ZhanSFZ21,StriMPE22}. 

The Clifford group is one of the most important groups in quantum information processing and plays a pivotal role in quantum computation, error correction, and randomized benchmarking \cite{Gott97,NielC10,EiseHWR20,KlieR21}.
In the case of qubits, it is particularly appealing to shadow estimation because it forms a unitary 3-design \cite{Zhu17,Webb16} and is thus maximally efficient in important tasks such as fidelity estimation \cite{HuangKP20}. Recently, estimation protocols based on the Clifford group or its variants have attracted increasing attention and found extensive applications in various estimation problems \cite{HelsW22,BertHHI22,IppoLRK23,SchuHH24,HuanTFS23,LevyLC24,HuanCP23}.

Nevertheless, physical systems as information carriers are not necessarily binary, but typically exhibit multilevel structures, which can be exploited as a resource \cite{WeitAR11,RingMPS22}. Moreover, such systems may help reduce circuit complexity \cite{WangHSK20,ChuHZY23}, enhance efficiency in quantum error correction \cite{Camp14} and magic state distillation \cite{CampAB12}, and improve noise robustness in Bell and steering tests \cite{CollGLMP02,VertPB10,SrivVM22} and  quantum cryptography~\cite{CerfBKG02}.
Meanwhile, developments on the experimental control of qudit systems with photonics, solid-state, trapped ion, and superconducting platforms have made universal and programmable qudit-based quantum processors possible \cite{ChiHZM22,ErhaKZ20,ChoiCLK17,CervKAG22}. Understanding the efficiency of qudit shadow estimation is thus of key theoretical and practical interest. However,  the qudit Clifford group is not a unitary 3-design  \cite{Zhu17,Webb16,KuenG15}, which means it is substantially more difficult to understand its performance in shadow estimation. Actually, little is known about this issue so far despite the intensive efforts of many researchers.

In this Letter, we clarify the sample complexity of qudit shadow estimation based on the Clifford group, where the local dimension $d$ is an odd prime. We show that  the overhead of qudit shadow estimation over the qubit counterpart is only $\caO(d)$, irrespective of the system size $n$.
Furthermore, by adding one layer of magic gates, we propose a simple circuit that can significantly boost the efficiency in shadow estimation. Surprisingly, a single magic gate can already eliminate the $\caO(d)$ overhead  and bridge the gap between qudit systems and qubit systems. Notably, the sample complexity of fidelity estimation is independent of the local dimension and qudit number. To corroborate these theoretical findings, we develop a complete simulation algorithm on qudit shadow estimation based on the Clifford group supplemented by magic gates.  Our work shows that the distinction between the qudit Clifford group and qubit Clifford group is not so serious to quantum information processing. 
This work also highlights the power of a single magic gate in a practical quantum information task, which has not been fully appreciated before. 
  
Our main results are based on a deep understanding of the third moments of Clifford orbits, including the orbits of stabilizer states and magic states in particular. The main proofs of \thsref{thm:GlobalCl}-\ref{thm:Clifford+T} and relevant  technical details are presented in our companion paper \cite{ZhuMY24}, which builds on the previous work~\cite{GrosNW21}. 

\emph{Shadow estimation}---Suppose the local dimension $d$ is a prime, and the $n$-qudit Hilbert space has dimension $D=d^n$. Our task is to estimate the expectation values of certain linear operators on $\caH_d^{\otimes n}$ with respect to an unknown $n$-qudit quantum state~$\rho$. To this end, we can apply a random unitary $U$ sampled from a preselected ensemble $\caE$ to rotate the state ($\rho \mapsto U\rho U^{\dag}$), followed by a computational-basis measurement with outcome $\bfb  \in \bbF_d^n$, where $\bbF_d$ is the finite field composed of $d$ elements. This procedure defines a quantum channel  $\caM(\rho):=\bbE\left[U^{\dag} |\bfb\rangle \langle \bfb|U\right]$, and the inverse $\caM^{-1}$ is called the reconstruction map. By virtue of this map we can construct an estimator $\hat{\rho}:=\caM^{-1}\left(U^{\dag} |\bfb\rangle \langle \bfb|U\right)$, called a (classical) shadow of $\rho$, in each run \cite{HuangKP20}.  

Suppose we want to estimate the expectation value of an operator $\Ob$ on $\caH_d^{\otimes n}$. If $N$ samples are available, then  an unbiased estimator $\hat{o}$ for $o:=\tr(\Ob \rho)$ can be constructed from the empirical mean,
\begin{equation}\label{eq:means}
\ho =\frac{1}{N}\sum_{j=1}^{N}\ho_j= \frac{1}{N}\sum_{j=1}^{N} \tr(\Ob \hrho_j),
\end{equation}
where $\hrho_j$ is the estimator for $\rho$ in the $j$th run. The mean square error (MSE) of $\ho$ only depends on the traceless part $\Ob_0:=\Ob-\tr(\Ob)\bbI/D$ of $\Ob$ \cite{HuangKP20}, where $\bbI$ is the identity operator on $\caH_d^{\otimes n}$. It can be upper bounded as follows (see \sref{sup:empiricalMeans} in Supplemental Material (SM) \cite{supp}): \nocite{Gros06,Foll89,Weil64,AndeB06,Nest08,Berg21,BravM21,BravBCC19,LuksRW97}
\begin{equation}\label{eq:meabudget_new}
\<\epsilon^2\> \leq \frac{\|\Ob_0\|^2_\sh}{N},
\end{equation}
where the (squared) shadow norm $\| \Ob \|_\sh^2$ reads \cite{HuangKP20}
\begin{equation}\label{eq:sndef}
\max_{\sigma}  \bbE_{U\sim \caE} \sum_{\bfb\in \bbF_d^n}  \<\bfb|U\sigma U^\dag |\bfb\> \left|\<\bfb|U\caM^{-1}(\Ob)U^\dag |\bfb\>\right|^2.
\end{equation}
The maximization is over all quantum states on $\caH_d^{\otimes n}$. Alternatively, we can use the median of means to suppress the probability of large deviation \cite{HuangKP20}.
In any case, the shadow norm plays a central role in the analysis of the sample complexity and is widely used as the key figure of merit for evaluating the performance of a shadow estimation protocol.

When the ensemble $\caE$ of unitaries is a 2-design, the associated reconstruction map takes on a simple form, $\caM^{-1}(\Ob)=(D+1)\Ob-\tr (\Ob) \bbI$
for any linear operator $\Ob$ acting on $\caH_d^{\otimes n}$. If in addition  $\caE$ is a 3-design, then  $\| \Ob_0 \|_\sh^2 \leq 3 \| \Ob_0 \|_2^2$ \cite{HuangKP20}. Throughout this Letter, we  use $\| \Ob \|_2$ to denote the Hilbert-Schmidt norm (or $2$-norm) and $\| \Ob \|$ to denote the operator norm (or $\infty$-norm) of $\Ob$.

\emph{Shadow estimation based on local Clifford measurements}---First, we choose the local Clifford group for shadow estimation, that is, $\caE = \Cl(d)^{\otimes n}$, where $\Cl(d)$ is the single-qudit Clifford group
(See Appendix A). This is the qudit generalization of the scheme based on `random Pauli measurements'  \cite{HuangKP20}. The associated reconstruction map reads
\begin{equation}\label{eq:reconlocal}
	\caM^{-1}\left(U^{\dag}|\bfb\>\<\bfb|U \right)=\bigotimes_{j=1}^n \bigl[(d+1)U_j^{\dag}|b_j\>\<b_j|U_j - I\bigr],
\end{equation}
where $U=\bigotimes_{j=1}^n U_j$ with  $U_j \in \Cl(d)$ and $I$ is the identity operator on $\caH_d$. The basic properties of the shadow norm are summarized in 
\pref{pro:LocalCl} and \thref{thm:LocalCl} below; see SM \sref{sup:LocalCl} for proofs. \Pref{pro:LocalCl} also follows from Theorem 4 in the companion paper \cite{ZhuMY24}. As in the qubit case, this strategy is efficient if $\Ob$ only acts on a few qudits. 

\begin{proposition}\label{pro:LocalCl}
Suppose $\Ob$ is an $m$-local Weyl operator with $m\geq 1$. Then its shadow norm associated with local Clifford measurements reads 
	\begin{equation}
	\| \Ob \|_\sh^2=(d+1)^m.
	\end{equation}
\end{proposition}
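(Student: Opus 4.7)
The plan is to split the computation into three stages. First, I will compute the reconstruction map. From \eref{eq:reconlocal} and the 2-design property of the single-qudit group $\Cl(d)$, the map factorizes as $\caM^{-1}=\bigotimes_j\caM_j^{-1}$ with $\caM_j^{-1}(X)=(d+1)X-\tr(X)I$. Applied to a tensor-product Weyl $\Ob=\bigotimes_j W_{a_j}$, this gives $\caM^{-1}(\Ob)=(d+1)^m\Ob$, since non-identity Weyls are traceless while identity factors are preserved.

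Second, I will reduce the shadow-norm expression in \eref{eq:sndef} to single-qudit building blocks. Because $\Ob$ acts as identity on the complement $\bar S$ of its support $S$, summing $\langle\bfb|U\sigma U^\dag|\bfb\rangle$ over $\bfb_{\bar S}$ collapses to $\langle\bfb_S|U_S\sigma_S U_S^\dag|\bfb_S\rangle$ with $\sigma_S:=\tr_{\bar S}\sigma$, while $|\langle\bfb|U\Ob U^\dag|\bfb\rangle|^2$ factorizes across the qudits in $S$. The independence of the local Cliffords $U_j$ then recasts the shadow norm as
\begin{equation*}
\|\Ob\|^2_\sh=(d+1)^{2m}\max_{\sigma}\tr\!\Big[\sigma_S\bigotimes_{j\in S} N_j\Big],
\end{equation*}
where $N_j := \bbE_{U_j}\sum_{b_j} |\langle b_j|U_j W_{a_j} U_j^\dag|b_j\rangle|^2\, U_j^\dag|b_j\rangle\langle b_j| U_j$.

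Third, I will show $N_j=I/(d+1)$. The transitive action of $\Cl(d)$ on non-identity Weyls (up to phase) lets me conjugate $W_{a_j}$ to $Z$, reducing the task to computing $N_j(Z)$. For fixed $b_j$, as $U_j$ ranges over $\Cl(d)$ the vector $U_j^\dag|b_j\rangle$ is uniform over the $d(d+1)$ single-qudit stabilizer states, which partition into $d+1$ mutually unbiased bases. Among these, $|\langle\phi|Z|\phi\rangle|^2$ equals $1$ on the $d$ computational-basis states and $0$ on the remaining $d^2$ stabilizer states, because $Z$ acts by a cyclic permutation on every non-computational stabilizer basis. A direct count yields $N_j(Z)=I/(d+1)$, hence $N_j=I/(d+1)$ for any non-identity Weyl. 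Substituting gives $\tr[\sigma_S\bigotimes_j N_j]=(d+1)^{-m}$ independently of $\sigma$, so $\|\Ob\|^2_\sh=(d+1)^m$.

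The main obstacle is this last stage: everything hinges on two structural facts about odd-prime single-qudit Cliffords---that the Clifford orbit of the computational basis exhausts the stabilizer states in $d+1$ mutually unbiased bases, and that a non-identity Weyl has vanishing expectation on every stabilizer state outside its own eigenbasis. Once these are invoked, the remainder of the argument is tensor-product bookkeeping together with the 2-design property.
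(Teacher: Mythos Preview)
Your proof is correct and follows essentially the same route as the paper: both compute $\caM^{-1}(\Ob)=(d+1)^m\Ob$, reduce the shadow-norm expression to a single-qudit average over stabilizer states, use Clifford transitivity to replace the non-identity Weyl by $Z$, and evaluate the result via the MUB structure of $\Stab(1,d)$. The only organizational difference is that the paper packages the single-qudit computation into a separate lemma (\lref{lem:WuWv}, stated for two Weyl operators $W_\bfu,W_\bfv$ and then specialized to $\bfu=\bfv$), whereas you carry out that special case directly and reduce to the support $S$ first rather than keeping all $n$ qudits and cancelling a $d^n/d^n$ factor.
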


\begin{theorem}\label{thm:LocalCl}
	Suppose  $\Ob$ is an $m$-local linear operator on $\caH_d^{\otimes n}$, that is, $\Ob=\tOb \otimes I^{\otimes (n-m)}$ with $\tOb$ acting on $\caH_d^{\otimes m}$. Then its shadow norm associated with local Clifford measurements satisfies
	\begin{equation}\label{eq:local linear}
		\| \Ob \|_\sh^2 \leq d^m  \| \tOb \|_2^2.
	\end{equation}
\end{theorem}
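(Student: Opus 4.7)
The plan is first to reduce to the case $n=m$ and then to compute the shadow norm directly in the Weyl basis. For the reduction, observe that the reconstruction map in \eqref{eq:reconlocal} factorizes over qudits and sends identity to identity, so that $\caM^{-1}(\tOb \otimes I^{\otimes (n-m)}) = \caM^{-1}(\tOb) \otimes I^{\otimes(n-m)}$. Consequently, summing over $b_{m+1},\ldots,b_n$ and exploiting the cyclicity of the partial trace (together with the unitarity of $U_{m+1},\ldots,U_n$) reduces $p_\bfb(\sigma,U)$ to the probability with respect to the reduced state $\tr_{>m}(\sigma)$, so $\|\Ob\|_\sh^2$ equals the shadow norm of $\tOb$ in the $m$-qudit local Clifford ensemble. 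I may therefore assume $n=m$.

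Next, I would rewrite the shadow norm as an operator norm:
\[
\|\tOb\|_\sh^2 = \|R\|_\infty, \qquad R := \mathbb{E}_U \sum_\bfb \rho_\bfb(U)\, |A_\bfb(U)|^2 \succeq 0,
\]
where the positivity follows from $|A_\bfb|^2 \geq 0$ and $\rho_\bfb \succeq 0$, and $\max_\sigma \tr(\sigma R) = \|R\|_\infty$ for PSD $R$. Expanding $\tOb = \sum_\bfp c_\bfp W_\bfp$ in the Weyl basis gives $\|\tOb\|_2^2 = d^m \sum_\bfp |c_\bfp|^2$. Using $\caM^{-1}(W_\bfp) = (d+1)^{w(\bfp)} W_\bfp$ and the tensor-product structure of the local Clifford ensemble reduces the computation of $R$ to single-qudit character operators of the form $\frac{1}{d+1}\sum_\psi |\psi\rangle\langle\psi|\,\langle\psi|W_p|\psi\rangle\overline{\langle\psi|W_q|\psi\rangle}$, where the sum runs over single-qudit stabilizer states $\psi$. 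Each such operator vanishes unless $p, q \in \bbF_d^2$ are proportional (interpreting $0$ as proportional to anything), and otherwise equals a scalar multiple of $W_{p-q}$. This yields an explicit Weyl-basis formula for $R$.

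Finally, I would bound $\|R\|_\infty \leq d^m \|\tOb\|_2^2$ using the positivity of $R$ together with a Cauchy--Schwarz argument on the "compatible" Weyl pairs (those with $p_j$ and $q_j$ proportional for every $j$). The main obstacle is controlling the off-diagonal Weyl contributions to $R$, which carry factors of the form $(d+1)^{|\mathcal{N}(\bfp,\bfq)|}$ with $\mathcal{N}(\bfp,\bfq) = \{j : p_j, q_j \neq 0\}$ that can be as large as $(d+1)^m$. Naive triangle or Hilbert--Schmidt bounds are off by a factor of order $((d+1)/d)^m$; attaining the sharp $d^m$ overhead requires exploiting both the sparsity of compatible pairs (only $O(d)$ per qudit for each nonzero difference $r_j$) and the positivity $R \succeq 0$, along the lines of the third-moment analysis developed in the companion paper \cite{ZhuMY24}.
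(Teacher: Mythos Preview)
Your setup coincides with the paper's proof: reduce to $n=m$, expand $\tOb$ in the Weyl basis, use $\caM^{-1}(W_\bfu)=(d+1)^{|\bfu|}W_\bfu$, and invoke the single-qudit identity (the paper's \lref{lem:WuWv}) to obtain
\[
\|\Ob\|_\sh^2=\Bigl\|\sum_{\bfu\bowtie\bfv}(d+1)^{|\bfu|+|\bfv|-|\bfu\vee\bfv|}\,\alpha_\bfu\alpha_\bfv^*\,W_{\bfu-\bfv}\Bigr\|.
\]
You also correctly identify that crude bounds on this expression lose a factor $((d+1)/d)^m$.

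The gap is in how you propose to close it. You appeal to positivity of $R$ and to the third-moment machinery of the companion paper, but neither is what the proof actually uses, and positivity alone does not seem to suffice (for instance the trace bound $\|R\|\le\tr R=d^m\sum_\bfu(d+1)^{|\bfu|}|\alpha_\bfu|^2$ is already too weak when $|\bfu|=m$). The paper's decisive step is purely combinatorial: introduce the set $\scrV_m^*$ of $(d+1)^m$ ``direction vectors'' (one representative per nonzero line in $\bbF_d^2$ at each site) and use the counting identity $|\{\bfs\in\scrV_m^*:\bfu,\bfv\triangleright\bfs\}|=(d+1)^{m-|\bfu\vee\bfv|}$ to rewrite the double sum over compatible pairs as
\[
\frac{1}{(d+1)^m}\sum_{\bfs\in\scrV_m^*}\Bigl\|\sum_{\bfu\triangleright\bfs}(d+1)^{|\bfu|}\alpha_\bfu W_\bfu\Bigr\|^2,
\]
where each inner sum now ranges over a \emph{commuting} family of Weyl operators, so the squared operator norm factors. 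A triangle inequality over $\bfs$ followed by $\|\,\cdot\,\|\le\ell^1$-norm and Cauchy--Schwarz then gives exactly $d^m\|\tOb\|_2^2$, with the two counting identities $\sum_{\bfu\triangleright\bfs}(d+1)^{|\bfu|}=d^{2m}$ and $|\{\bfs:\bfu\triangleright\bfs\}|=(d+1)^{m-|\bfu|}$ making the constants come out right. This direction-vector decomposition is the missing ingredient in your plan; without it, it is not clear how to beat the $((d+1)/d)^m$ loss you flagged. Note also that this proof is self-contained in the supplement and does not invoke the companion paper.
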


\emph{Shadow estimation based on global Clifford measurements}---Next, we turn to shadow estimation based on the global Clifford group, i.e., $\caE = \Cl(n,d)$ (See Appendix A). Equivalently, the associated measurement primitive is the orbit $\Stab(n,d)$ of all $n$-qudit stabilizer states. Since the Clifford group $\Cl(n,d)$ forms a unitary 2-design, any Clifford orbit forms a 2-design,
and the corresponding reconstruction map is particularly simple,
\begin{equation}\label{eq:reconstruction}
\caM^{-1}(U^{\dag}|\bfb\>\<\bfb|U)=(D+1)U^{\dag}|\bfb\>\<\bfb|U-\bbI.
\end{equation}
However, $\Stab(n,d)$ does not form a 3-design when $d$ is an odd prime \cite{KuenG15}, and it is substantially more difficult to determine the shadow norm of a generic observable. In the companion paper \cite{ZhuMY24} we have clarified the properties of the third moment of $\Stab(n,d)$, which enables us to derive the following universal upper bounds for the shadow norm. A sketch of the proof is presented in SM \sref{sup:GlobalClProofs}.

\begin{theorem}\label{thm:GlobalCl}
	Suppose  $\caE = \Cl(n,d)$ and $\Ob$ is a traceless linear operator on $\caH_d^{\otimes n}$. Then 
	\begin{equation}\label{eq:GlobalLinear}
	\|\Ob\|_2^2 \leq \|\Ob\|^2_\sh \leq (2d-3)\|\Ob\|_2^2+2\|\Ob\|^2.
	\end{equation}
	If  $\Ob$ is diagonal in a stabilizer basis, then
	\begin{equation}\label{eq:GlobalStab}
	\|\Ob\|^2_\sh \leq (d-1)\|\Ob\|_2^2+d\|\Ob\|^2.
	\end{equation}
	If  $n = 1$ and $\Ob$ is diagonal in a stabilizer basis, then
	\begin{equation}\label{eq:GlobalStabn1}
	\|\Ob\|^2_\sh = (d+1)\|\Ob\|^2.
	\end{equation}
\end{theorem}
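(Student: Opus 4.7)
The plan is to reduce the shadow norm to a third-moment computation over the Clifford group, and then apply the explicit characterization of the third moment developed in the companion paper~\cite{ZhuMY24}. Because $\Cl(n,d)$ is a unitary 2-design, the reconstruction map~\eref{eq:reconstruction} gives $\caM^{-1}(\Ob)=(D+1)\Ob$ for traceless $\Ob$; substituting into~\eref{eq:sndef} and interpreting the $\bfb$-sum as a trace against the tripled diagonal projector $Q:=\sum_{\bfb\in\bbF_d^n}(\proj{\bfb})^{\otimes 3}$ yields
\begin{equation}
\|\Ob\|_\sh^2 = (D+1)^2 \max_{\sigma} \tr\!\bigl[(\sigma\otimes\Ob\otimes\Ob^\dag)\,\Phi^{(3)}(Q)\bigr],
\end{equation}
where $\Phi^{(3)}(X):=\bbE_{U\sim\Cl(n,d)}(U^\dag)^{\otimes 3} X U^{\otimes 3}$ is the third-order Clifford twirl. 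The lower bound $\|\Ob\|_2^2\leq\|\Ob\|_\sh^2$ in~\eref{eq:GlobalLinear} then follows by taking $\sigma=\bbI/D$ and invoking the standard 2-design identity for $\bbE_U|\langle\bfb|U\Ob U^\dag|\bfb\rangle|^2$.

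For the upper bound~\eref{eq:GlobalLinear}, I would invoke the characterization of the commutant of $\Cl(n,d)^{\otimes 3}$ from~\cite{ZhuMY24,GrosNW21}. For $d$ an odd prime this commutant is strictly larger than the image of the symmetric group $S_3$ and is parametrized by stochastic Lagrangian subspaces. Accordingly, $\Phi^{(3)}(Q)$ splits into a permutation part (which reproduces the qubit-style $\caO(\|\Ob\|_2^2)$ contribution via the analysis of~\cite{HuangKP20}) plus a finite family of extra terms proportional to stabilizer-code projectors. The crucial point, which is the main output of the third-moment analysis in~\cite{ZhuMY24}, is that the coefficients of these extra terms scale as $\caO(d)$ but are independent of $n$. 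Bounding each extra contribution via $\tr\sigma=1$, $\|\sigma\|\leq 1$, and matrix-norm inequalities then produces the $(2d-3)\|\Ob\|_2^2+2\|\Ob\|^2$ upper bound.

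For $\Ob$ diagonal in a stabilizer basis, I would use the Clifford invariance of $\|\cdot\|_\sh$ to assume the basis is the computational basis, so the Weyl-operator expansion of $\Ob$ is supported on the $Z$-type subgroup only. Many of the extra commutant terms then vanish because their associated stabilizer-code projectors are not invariant under the restricted structure carried by $\Ob\otimes\Ob^\dag$, sharpening the estimate to~\eref{eq:GlobalStab}. For $n=1$ the commutant is small enough that $\Phi^{(3)}(Q)$ can be computed exactly by direct summation over the $d(d+1)$ single-qudit stabilizer states, and the optimization over $\sigma$ is attained on a stabilizer state, giving the equality~\eref{eq:GlobalStabn1}. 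The main obstacle is extracting the linear-in-$d$ (as opposed to $\caO(d^2)$ or worse) prefactors for the extra commutant components; this requires a delicate combinatorial count of the relevant stochastic Lagrangian subspaces, which is precisely the content of the companion paper's third-moment analysis~\cite{ZhuMY24}.
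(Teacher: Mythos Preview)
Your proposal is correct and follows essentially the same route as the paper. Both reduce the shadow norm to a third-moment computation (the paper writes this via the normalized moment operator $\bQ(\caE)$ and identifies it, through \lref{lem:MomentCl}, with the third moment of the stabilizer-state orbit; you do the equivalent thing with the Clifford twirl $\Phi^{(3)}$), and both defer the actual extraction of the $(2d-3)$ and $(d-1)$ prefactors to the commutant analysis based on stochastic Lagrangian subspaces in the companion paper~\cite{ZhuMY24}.
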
 
Incidentally,  \eref{eq:GlobalStabn1} implies \pref{pro:LocalCl}. Thanks to \thref{thm:GlobalCl}, the ratio $
\|\Ob\|^2_\sh/ \|\Ob\|_2^2$ is upper bounded by $2d-1$, independent of the qudit number $n$, so the overhead of shadow estimation for a qudit system over a qubit system does not grow with $n$.  This result may have profound implications for quantum information processing based on qudits, and is quite unexpected given that $\Stab(n,d)$ is not a 3-design. In fact, the operator norm of the third normalized moment operator of $\Stab(n,d)$ increases exponentially with $n$ when $d\neq 2\mmod 3$~\cite{ZhuMY24}. Surprisingly, measurement ensembles far from 3-designs (with respect to one of the most popular figures of merit) can achieve similar performances as 3-designs in shadow estimation. When $n=1$, $\Stab(n,d)$ forms a complete set of mutually unbiased bases, so \eref{eq:GlobalStabn1} is also instructive to understanding shadow estimation based on  mutually unbiased bases. 

\begin{figure}[tb]
	\centering 
	\includegraphics[width=0.48\textwidth]{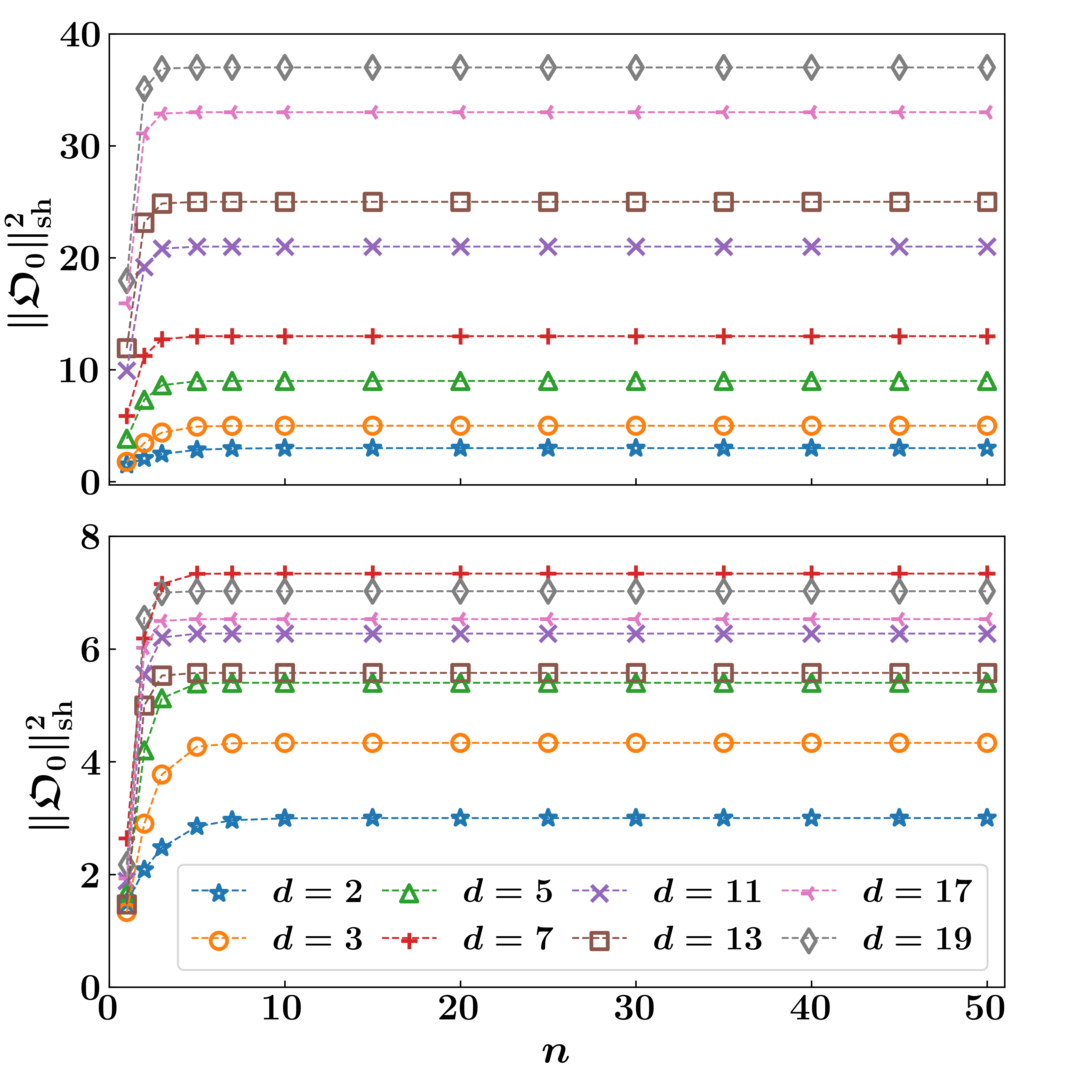}
	\caption{Shadow norm associated with the projector $\Ob$ onto a stabilizer state with respect to the Clifford circuit (upper plot) and Clifford circuit  supplemented by one canonical $T$ gate (lower plot). Here $\Ob_0$ denotes the traceless part of $\Ob$.}
	\label{fig:stabstatenorm1}
\end{figure}

When $\Ob$ is a stabilizer projector, we can even derive an analytical formula for $\|\Ob_0\|^2_\sh$.    
\begin{theorem}\label{thm:GlobalStabProj}If $\caE = \Cl(n,d)$ and $\Ob$
 is a rank-$K$ stabilizer projector on $\caH_d^{\otimes n}$ with $1\leq K\leq d^{n-1}$, then 
	\begin{equation}\label{eq:StabProj}
	\frac{\|\Ob_0\|^2_\sh}{\|\Ob_0\|^2_2}= \frac{D+1}{D+d}\left(d-1-\frac{d}{D}+\frac{d}{K}\right).
	\end{equation}
\end{theorem}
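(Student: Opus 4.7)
The plan is to express the shadow norm via the third moment operator of the stabilizer orbit and then exploit the stabilizer structure of $\Ob$ to evaluate the resulting contraction in closed form. The core input is the explicit decomposition of this moment operator established in the companion paper~\cite{ZhuMY24}.

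First, I would use that $\Cl(n,d)$ is a 2-design and acts transitively on $\Stab(n,d)$ to rewrite \eref{eq:sndef}. Applying $\caM^{-1}(\Ob_0)=(D+1)\Ob_0$ turns the shadow norm into
\begin{equation*}
\|\Ob_0\|_\sh^2 = D(D+1)^2 \max_\sigma \tr\bigl[(\sigma\otimes\Ob_0\otimes\Ob_0)\,M_3\bigr],
\end{equation*}
with $M_3=\bbE_{\psi\sim\Stab(n,d)}\proj{\psi}^{\otimes 3}$ the third moment operator. According to~\cite{ZhuMY24}, $M_3$ decomposes as a specific linear combination of operators attached to the stochastic Lagrangian subspaces of $\bbF_d^6$. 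For qubits only the six symmetric-group permutations contribute, whereas for odd prime $d$ there are additional non-permutation terms that are precisely the obstruction to $\Stab(n,d)$ being a 3-design.

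Next, I would evaluate $\tr[(\sigma\otimes\Ob_0^{\otimes 2})M_3]$ term by term. Writing $\Ob=\sum_{i=1}^{K}\proj{\phi_i}$ with the $|\phi_i\>$ stabilizer states in a common stabilizer basis of its range, the orthogonality $\<\phi_i|\phi_j\>=\delta_{ij}$ together with the fact that two stabilizer states have overlap magnitudes only of the form $d^{-j/2}$ with $0\leq j\leq n$ allows each stochastic-Lagrangian trace to reduce to a polynomial in $D$, $K$, and matrix elements $\<\chi|\sigma|\chi\>$ for explicit stabilizer states $|\chi\>$. The cross terms involving $\tr(\Ob)=K$ combine with the $(K/D)\bbI$ shift in $\Ob_0$ to produce the $d/K$ and $d/D$ factors that appear in the final expression.

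Finally, I would maximize over $\sigma$ and simplify. By convexity the optimization reduces to pure states, and the residual Clifford symmetry fixing $\Ob$ restricts attention further to stabilizer states; the maximizing choice is a stabilizer state orthogonal to the range of $\Ob$, whose existence is guaranteed by the hypothesis $K\leq d^{n-1}$ (otherwise the roles of $\Ob$ and $\bbI-\Ob$ would have to be swapped, since $\Ob_0=-(\bbI-\Ob)_0$). Dividing the resulting supremum by $\|\Ob_0\|_2^2=K(D-K)/D$ yields \eref{eq:StabProj}. The main technical obstacle is taming the non-permutation terms in $M_3$: a priori these threaten to introduce $n$-dependent corrections, but the high symmetry of stabilizer projectors forces them to aggregate into the single additive shift $d/K$, explaining why the qudit answer tracks its qubit counterpart so closely.
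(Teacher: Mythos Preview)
Your overall strategy aligns with the paper's: reduce the shadow norm to a contraction against the third moment of the stabilizer orbit (this is \lref{lem:MomentCl}) and then invoke the explicit structure of that moment from the companion paper~\cite{ZhuMY24}. The paper does not give a self-contained argument here---it simply cites Theorem~3 of~\cite{ZhuMY24}---so your sketch is a reasonable outline of what that companion result must entail.

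There is, however, a concrete error in your optimization step. You assert that the maximizing $\sigma$ is a stabilizer state \emph{orthogonal} to the range of $\Ob$; in fact the maximizer lies \emph{in} the range. A direct single-qudit check with $d=3$, $K=1$, $\Ob=|0\>\<0|$ already exposes this: the only stabilizer states $|\psi\>$ with $\<\psi|\Ob_0|\psi\>\neq 0$ are the computational-basis states, and among these the weighted contribution $\<\psi|\sigma|\psi\>\,|\<\psi|\Ob_0|\psi\>|^2$ is largest at $|\psi\>=|0\>$, so $\sigma=|0\>\<0|$ wins. This gives $\|\Ob_0\|_\sh^2=16/9$ and ratio $8/3$, matching \eref{eq:StabProj}; the orthogonal choice $\sigma=|1\>\<1|$ yields only $4/9$. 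More generally, since $K\leq d^{n-1}$ forces $K/D\leq 1/d$, the eigenvalue $1-K/D$ of $\Ob_0$ on the range always dominates $-K/D$ on the complement, and this bias survives the contraction with the moment operator. Your parenthetical about swapping $\Ob$ with $\bbI-\Ob$ is also off target: $\bbI-\Ob$ is not a stabilizer projector when $d>2$, and the hypothesis $K\leq d^{n-1}$ merely excludes the trivial case $\Ob=\bbI$. With the maximizer corrected, the rest of your outline can in principle be carried through, but as written the optimization step would produce the wrong constant.
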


The shadow norm $\|\Ob_0\|^2_\sh$ increases linearly with the local dimension $d$, but is almost independent of $n$ when $n\geq 5$, as illustrated in \fref{fig:stabstatenorm1} for the case $K=1$, so that $\Ob$ is the projector onto a stabilizer state. In addition, $\|\Ob_0\|^2_\sh$ can saturate the upper bounds in \eqsref{eq:GlobalLinear}{eq:GlobalStab} asymptotically  as $n \to \infty$. Interestingly, stabilizer states are the most difficult to estimate by stabilizer measurements. 

\begin{figure}[tb]
	\centering
	\includegraphics[width=0.44\textwidth]{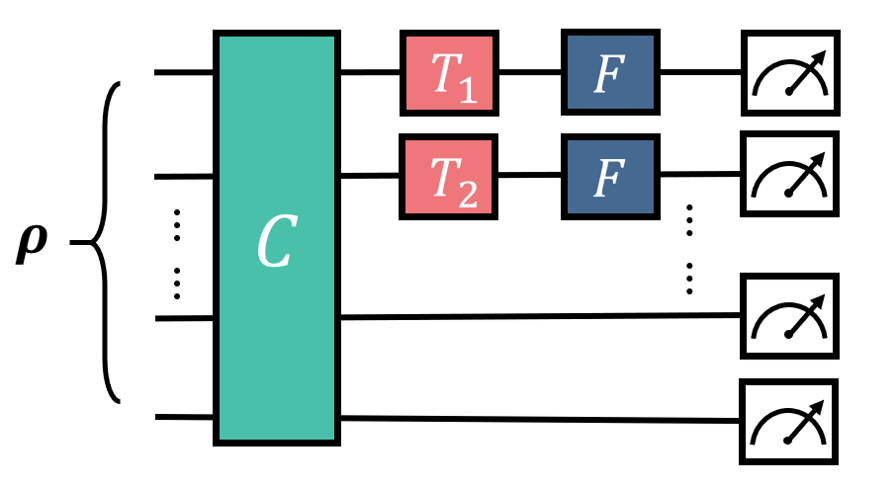}
	\caption{A Clifford circuit supplemented by one layer of $T$ gates, which is highly efficient for shadow estimation.}	
	\label{fig:Cl_magic_circuit}
\end{figure}

\emph{The power of magic gates in shadow estimation}---Clifford circuits supplemented by sufficiently many  magic gates can realize universal quantum computation, but little is known about the power of limited magic gates in quantum information processing.  Here we propose a simple recipe for boosting the efficiency of qudit shadow estimation  and show that a single magic gate can already bridge the gap between a qudit system and a qubit system.

We are mainly interested in qudit magic gates that are diagonal in the computational basis \cite{HowaV12,ZhuMY24} and belong to the third Clifford hierarchy \cite{GottC99}, which are discussed in Appendix B and referred to as $T$ gates henceforth. For example, a canonical $T$ gate has the form  $T=\sum_{b\in\bbF_d} \tomega^{b^3} |b\>\<b|$, where 
$\tomega=\omega_d:=\rme^{\frac{2\pi \rmi}{d}}$ for $d\geq 5$ and  $\tomega:=\rme^{\frac{2\pi \rmi}{9}}$ for $d=3$. Using $T$ gates we can construct an alternative unitary ensemble for shadow estimation as illustrated in \fref{fig:Cl_magic_circuit}: First apply a random Clifford unitary $C$ in  $\Cl(n,d)$, and then apply  $T$ gates  on $k$ different  qudits (the first $k$ qudits for simplicity), each followed by a Fourier gate $F:=\frac{1}{\sqrt{d}} \sum_{a,b \in \bbF_d} \omega_d^{ab} |a\>\<b|$.
Denote by $T_j$ the $T$ gate applied to the $j$th qudit and by $\caE\left(\{T_j\}_{j=1}^k\right)$
the resulting unitary ensemble. The effective measurement ensemble corresponds to the Clifford orbit generated from a magic state. 
The circuit we employ is substantially simpler than the popular interleaved Clifford circuits: in each run 
we need to sample a random Clifford unitary only once, which is as economical as possible and is particularly appealing in the NISQ era \cite{Pres18}. Nevertheless, such simple circuits are surprisingly  powerful in shadow estimation as shown in the following theorem; see SM \sref{sup:GlobalClProofs} for a proof sketch.

\begin{theorem}\label{thm:Clifford+T}
	Suppose $\caE=\caE\left(\{T_j\}_{j=1}^k\right)$ with $1 \leq k\leq n$, where $T_1, T_2, \ldots, T_k$ are $T$ gates, and	$\Ob$ is a traceless linear operator on $\caH_d^{\otimes n}$. Then 
	\begin{equation}\label{eq:C+T}
	\|\Ob\|_\sh^2 \leq \tilde{\gamma}_{d,k} \|\Ob\|_2^2,
	\end{equation}
	where
	\begin{equation}\label{eq:gamma}
	\tilde{\gamma}_{d,k} := 
	\begin{cases}
	3+\frac{2^{k+1}(d-2)}{d^k} & d\neq 1 \mmod 3, \\
	3+\frac{9}{8}\cdot\frac{4^k}{d^{k-1}} & d=1 \mmod 3.
	\end{cases}
	\end{equation}
\end{theorem}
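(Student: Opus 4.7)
The plan is to reduce the shadow norm to the third-moment operator of the measurement primitive, which is a union of Clifford orbits of magic product states, and then invoke the third-moment machinery developed in the companion paper \cite{ZhuMY24}. I begin by writing $U = VC$ with $C\in\Cl(n,d)$ uniform and $V = \bigotimes_{j=1}^{k}(FT_j)\otimes I^{\otimes(n-k)}$, so that $U^\dagger|\bfb\> = C^\dagger|\psi_\bfb\>$ where $|\psi_\bfb\> := V^\dagger|\bfb\>$ is the tensor product of $k$ single-qudit magic states (each $T_j$ belongs to the third Clifford hierarchy but not to $\Cl(d)$) with $n-k$ computational-basis stabilizer states. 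Because $\caE$ inherits the 2-design property from $\Cl(n,d)$, the reconstruction map is given by \eqref{eq:reconstruction}, and for traceless $\Ob$ the definition \eqref{eq:sndef} becomes
\begin{equation}
\|\Ob\|_\sh^2 = (D+1)^2 \max_\sigma \tr\!\left[(\sigma\otimes\Ob\otimes\Ob^\dagger)\,M^{(3)}\right],
\end{equation}
where $M^{(3)} := \bbE_C \sum_\bfb (C^\dagger|\psi_\bfb\>\<\psi_\bfb|C)^{\otimes 3}$.

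Next I would decompose $M^{(3)}$ along the commutant of $\Cl(n,d)^{\otimes 3}$. For $d$ an odd prime this commutant strictly contains the symmetric-group algebra: as characterized in \cite{ZhuMY24,GrosNW21}, it is spanned by a distinguished family of stabilizer-code projectors whose structure splits into two regimes depending on whether $d\equiv 1\pmod 3$ (when $\bbF_d$ contains a nontrivial cube root of unity) or not. Splitting $M^{(3)} = M_{\sym} + M_{\mathrm{extra}}$ accordingly, the symmetric piece $M_{\sym}$ reproduces the Haar third moment and already yields the $3\|\Ob\|_2^2$ contribution appearing in \eqref{eq:C+T}, so the task reduces to bounding the contribution of $M_{\mathrm{extra}}$ to the trace expression.

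The coefficient of each extra commutant element $R$ in $M^{(3)}$ equals, up to normalization, $\sum_\bfb \<\psi_\bfb|^{\otimes 3} R |\psi_\bfb\>^{\otimes 3}$. Since $|\psi_\bfb\>$ factorizes over the $n$ qudits and $R$ is itself a product of single-qudit stabilizer operators, this overlap factorizes qudit by qudit. On each of the $n-k$ stabilizer qudits one obtains an $\caO(1)$ factor, whereas on each of the $k$ magic qudits one obtains a strictly smaller single-qudit overlap that can be evaluated in closed form from the cubic phases $\tomega^{b^3}$ defining $T$. Compounding these single-qudit magic overlaps produces a suppression scaling as $(\text{const}/d)^k$, from which the explicit pieces $2^{k+1}(d-2)/d^k$ for $d\not\equiv 1\pmod 3$ and $\tfrac{9}{8}\cdot 4^k/d^{k-1}$ for $d\equiv 1\pmod 3$ of $\tilde{\gamma}_{d,k}$ emerge; the case split mirrors the two commutant regimes above. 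Substituting these coefficients back and carrying out the $\sigma$-maximization then yields \eqref{eq:C+T}.

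The main obstacle is the third step: controlling the overlaps $\<\psi_\bfb|^{\otimes 3} R |\psi_\bfb\>^{\otimes 3}$ uniformly in $\bfb$ and in $R$, and executing the maximization over $\sigma$ tightly enough that the final bound is proportional to $\|\Ob\|_2^2$ rather than $\|\Ob\|^2$ (in contrast with \thref{thm:GlobalCl}, where no magic is present and the operator norm is unavoidable). This requires both the detailed single-qudit overlap computations for cubic-phase $T$ gates and the full commutant characterization from the companion paper, including how the extra commutant projectors pair with Hermitian operators on $\caH_d^{\otimes n}$. The exponent $k$ and the two distinct prefactors in \eqref{eq:gamma} are the concrete output of compounding the single-qudit magic suppression with the block dimensions of $M_{\mathrm{extra}}$ in the two congruence classes of $d \mmod 3$.
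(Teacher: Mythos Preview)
Your proposal is correct and follows essentially the same route as the paper: reduce the shadow norm to the third moment of the measurement ensemble, identify that moment with the third moment of a Clifford orbit of a magic product state, and invoke the commutant/overlap machinery of the companion paper \cite{ZhuMY24}. The paper's only additional simplification is \lref{lem:MomentClT}, which observes that all your $|\psi_\bfb\>$ lie in a single Clifford orbit (because $V^\dagger X^\bfb$ equals a Weyl operator times $V^\dagger$), collapsing the sum over $\bfb$ to one fiducial state $|\Psi\>=\bigl|T_1^\dag\bigr\>\otimes\cdots\otimes\bigl|T_k^\dag\bigr\>\otimes|0\>^{\otimes(n-k)}$ before deferring the per-qudit overlap computation and $\sigma$-maximization to Theorems~15 and~21 of \cite{ZhuMY24}.
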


\begin{figure}
	\centering 
	\includegraphics[width=0.48\textwidth]{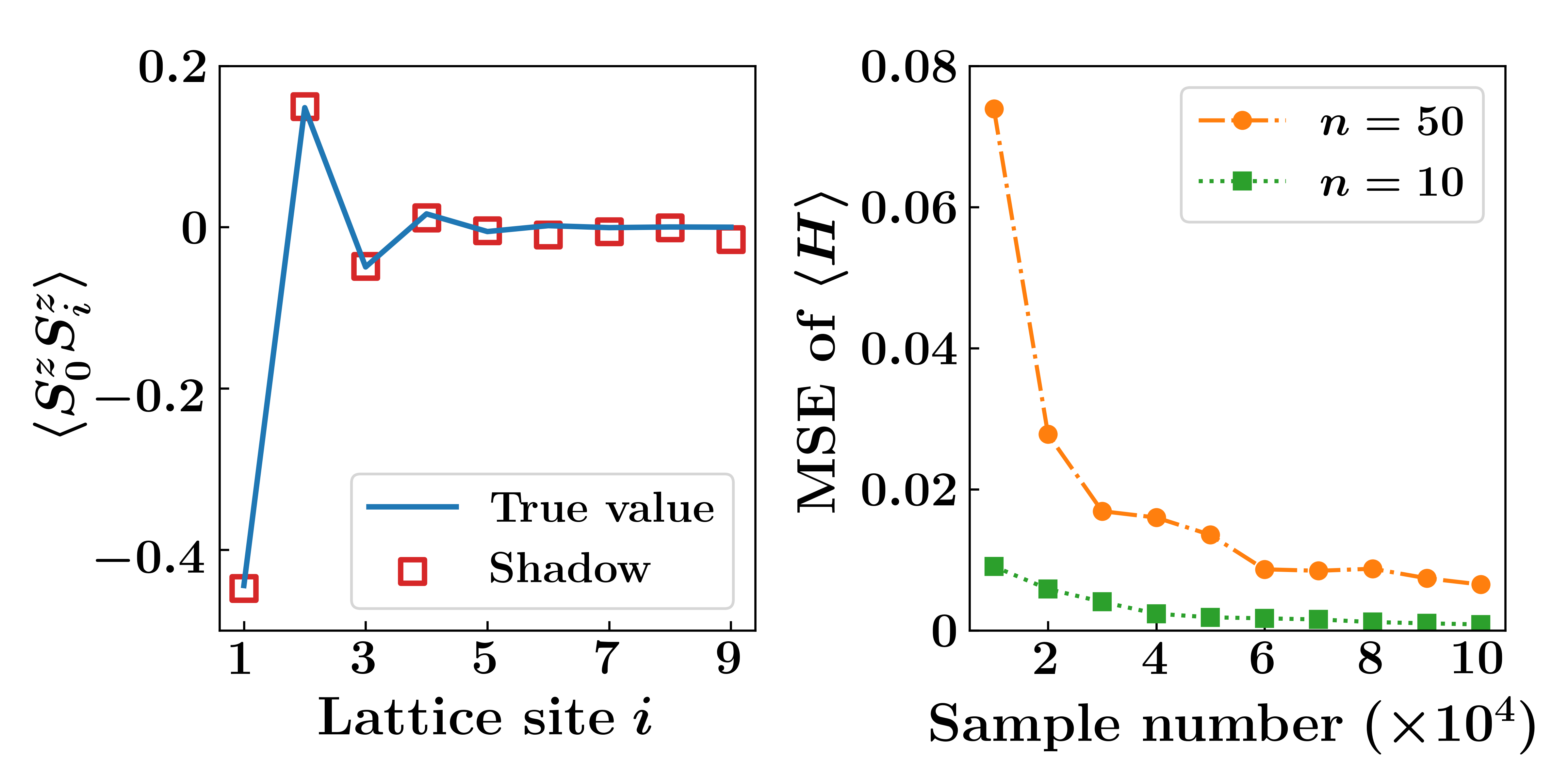}
	\caption{(left) Estimation of the two-point correlation function $\< S^z_0 S^z_i \>$ for the ground state of a 1D AKLT chain with $10$  sites, based on $10^5$ random local Clifford measurements. (right) The MSE in estimating the ground energy $\< H \>$, averaged over $100$ independent runs.}
	\label{fig:aklt}
\end{figure}

\begin{figure*}[bt]
	\centering 
	\includegraphics[width=0.8\textwidth]{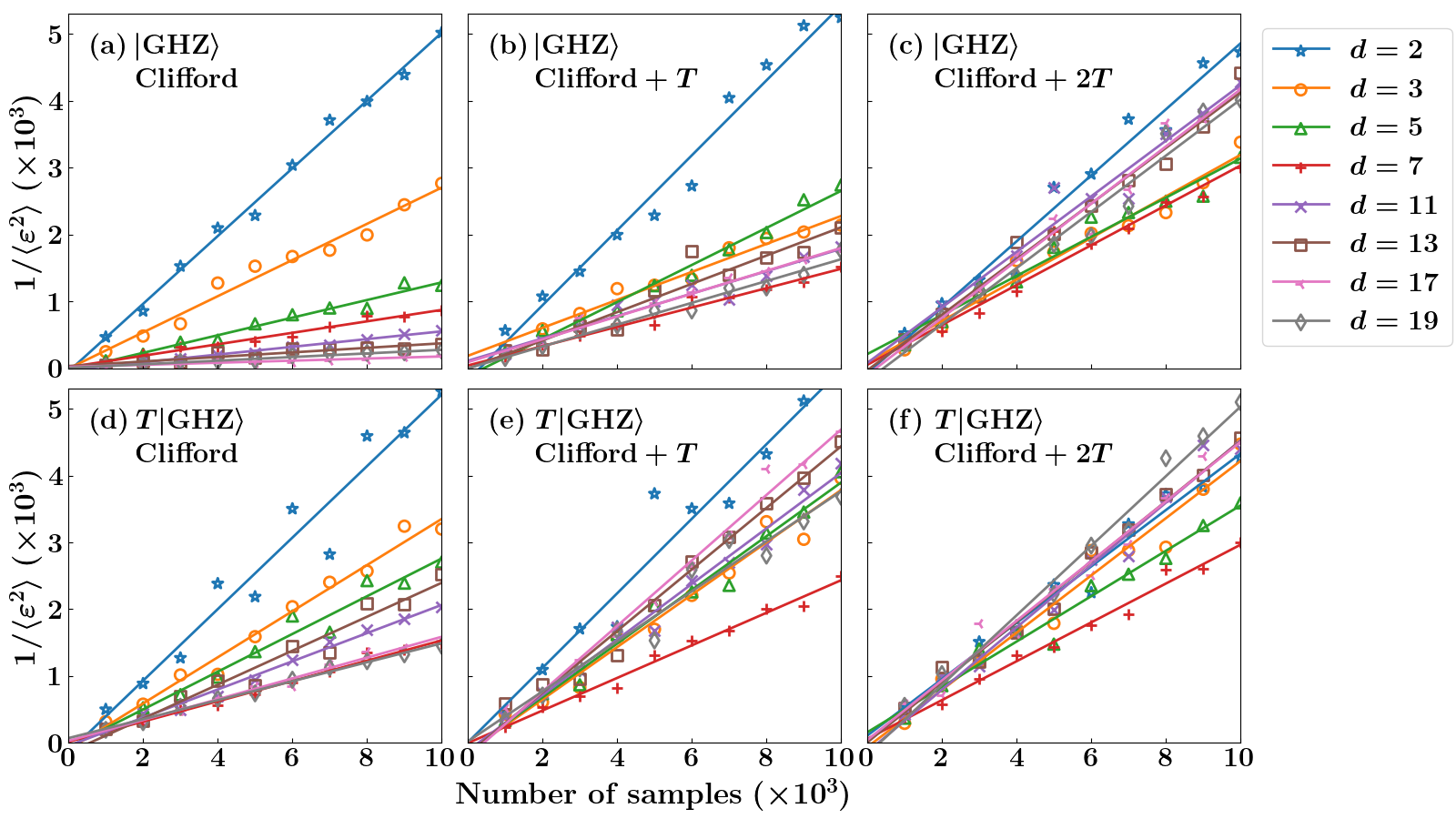}
	\caption{Simulation results on the inverse MSE in fidelity estimation in which the input state and target state are identical $n$-qudit GHZ states, $|\GHZ\>=|\GHZ(n,d)\>$, with $n=100$. 	
		The estimation protocols are  based on the Clifford circuit supplemented by up to two canonical $T$ gates. The MSE $\<\epsilon^2\>$ for each data point is the average over 100 runs. The solid lines are determined by interpolation. Results on the state $T |\GHZ\>$ are also shown for comparison.}
	\label{fig:GHZ}
\end{figure*}

As the number $k$ of $T$ gates increases, the shadow norm associated with the  ensemble $\caE\left(\{T_j\}_{j=1}^k\right)$ converges exponentially to the counterpart of a unitary 3-design. Moreover, by adding a single $T$ gate, the ratio $\|\Ob\|_\sh^2 / \|\Ob\|_2^2$ can already be upper bounded by a constant. Notably, if the observable $\Ob$ has a bounded Hilbert–Schmidt norm, which is the case for many tasks, such as fidelity estimation, then the sample complexity is independent of the local dimension $d$ and qudit number $n$. When $\Ob$ is the projector onto a stabilizer state, we can  
extend \thref{thm:GlobalStabProj} and derive an analytical formula for $\|\Ob_0\|_\sh^2$ as shown in our companion paper \cite{ZhuMY24} and illustrated in \fref{fig:stabstatenorm1}. \Thref{thm:Clifford+T} is applicable irrespective of the specific choices of $T$ gates. If we can make educated choices, then the shadow norm $\|\Ob_0\|_\sh$ can be reduced further as shown in SM \sref{sup:Numerical}.

\emph{Numerical simulation}---First, we use local Clifford measurements to explore several properties of the 1D spin-1 Affleck–Kennedy–Lieb–Tasaki (AKLT) model with open boundaries \cite{AfflKLT87,AfflKLT88}, referred to as  `1D AKLT chain' henceforth. This model plays a key role in condensed matter physics and measurement-based quantum computation. The Hamiltonian reads
\begin{equation}
	H = \frac{1}{2} \sum_j  \left[\bmS_j \cdot \bmS_{j+1} + \frac{1}{3} \bigl(\bmS_j \cdot \bmS_{j+1}\bigr)^2 + \frac{2}{3}\bbI\right].
\end{equation}
Note that the spin on each lattice site can be regarded  as a qutrit ($d=3$).  
In the left plot of \fref{fig:aklt}, we show the two-point correlation function $\langle S^z_0 S^z_i \rangle$ on a 1D AKLT chain with $n=10$ spins estimated from $10^5$ samples. Here the estimator based on local Clifford measurements is constructed from the empirical mean as in \eref{eq:means}. The result is consistent with the theoretical prediction $\langle S^z_0 S^z_i \rangle = \frac{4}{3} (-3)^{-i}$ \cite{AfflKLT87}. In the right plot we show the MSE in estimating the ground energy, which  converges faster for a smaller system size $n$, as expected from \thref{thm:LocalCl}.

Next, we turn to qudit shadow estimation based on the Clifford circuit supplemented by a layer of $T$ gates. As a showcase, we consider the task of fidelity estimation in which the input state $\rho$ is identical to a pure target state $|\Psi\>\<\Psi|$. In this case, the observable of interest is $\Ob=|\Psi\>\<\Psi|$ and the true fidelity is $\tr(\rho \Ob) = 1$. We first test our protocols on the $n$-qudit Greenberger-Horne-Zeilinger (GHZ) state, that is,  $|\Psi\>=|\GHZ(n,d)\> $ with
\begin{equation}\label{eq:GHZ}
|\GHZ(n,d)\> := \frac{1}{\sqrt{d}} \sum_{b =0}^{d-1}|b \>^{\otimes n}.
\end{equation}
In numerical simulation we set $n=100$, but the number of qudits has little influence on the sample complexity. \Fref{fig:GHZ} shows the simulation results on the inverse MSE $1/\<\epsilon^2\>$, which increases  linearly with the number of samples as expected from \eref{eq:meabudget_new}. The smaller the shadow norm, the larger the slope of the interpolation line. When applying Clifford circuits without $T$ gates for shadow estimation, the slope is approximately inversely proportional to the local dimension $d$, but the dependence becomes negligible as more $T$ gates are applied, which confirms our theoretical prediction.  

For comparison, we also test our protocols on the state $T |\GHZ\>$, where $T$ is the canonical $T$ gate acting on the first qudit. Now the MSE is smaller than the $|\GHZ\>$ counterpart and we see a `duality' between magic gates in state preparation and in measurements: the MSE is mainly determined by the total number of $T$ gates. Additional simulation results on depolarized GHZ states,  cluster states, and median-of-means estimation can be found in Appendix~D and SM; the general conclusions are similar.

\emph{Summary}---We showed that qudit shadow estimation based on the Clifford group can achieve the same precision as in the qubit setting with only $\caO(d)$ overhead in the sample complexity, irrespective of the qudit number $n$. 
Furthermore, we proposed a simple recipe to boost the efficiency, which requires only one layer of magic gates in addition to the Clifford circuit and is particularly appealing to the NISQ era. Actually, by adding a single magic gate, we can  eliminate the $\caO(d)$ overhead in qudit shadow estimation and achieve the same sample complexity as in the qubit setting. In this way, we can achieve a constant sample complexity (independent of $d$ and $n$) in important tasks,  such as fidelity estimation.
These results means characterization and verification of qudit systems are easier than expected, which may have profound implications for quantum information processing based on qudits. Meanwhile, our work highlights the power of a single magic gate in a practical quantum information processing task, which is of independent interest and deserves further exploration.

\emph{Acknowledgments}---This work is supported by Shanghai Science and Technology Innovation Action Plan (Grant No.~24LZ1400200), Shanghai Municipal Science and Technology Major Project (Grant No.~2019SHZDZX01), National Key Research and Development Program of China (Grant No.~2022YFA1404204), and National Natural Science Foundation of China (Grant No.~92165109).

\let\oldaddcontentsline\addcontentsline
\renewcommand{\addcontentsline}[3]{}

\bibliography{ref}

%apsrev4-2.bst 2019-01-14 (MD) hand-edited version of apsrev4-1.bst
%Control: key (0)
%Control: author (8) initials jnrlst
%Control: editor formatted (1) identically to author
%Control: production of article title (0) allowed
%Control: page (0) single
%Control: year (1) truncated
%Control: production of eprint (0) enabled
\begin{thebibliography}{69}%
\makeatletter
\providecommand \@ifxundefined [1]{%
 \@ifx{#1\undefined}
}%
\providecommand \@ifnum [1]{%
 \ifnum #1\expandafter \@firstoftwo
 \else \expandafter \@secondoftwo
 \fi
}%
\providecommand \@ifx [1]{%
 \ifx #1\expandafter \@firstoftwo
 \else \expandafter \@secondoftwo
 \fi
}%
\providecommand \natexlab [1]{#1}%
\providecommand \enquote  [1]{``#1''}%
\providecommand \bibnamefont  [1]{#1}%
\providecommand \bibfnamefont [1]{#1}%
\providecommand \citenamefont [1]{#1}%
\providecommand \href@noop [0]{\@secondoftwo}%
\providecommand \href [0]{\begingroup \@sanitize@url \@href}%
\providecommand \@href[1]{\@@startlink{#1}\@@href}%
\providecommand \@@href[1]{\endgroup#1\@@endlink}%
\providecommand \@sanitize@url [0]{\catcode `\\12\catcode `\$12\catcode
  `\&12\catcode `\#12\catcode `\^12\catcode `\_12\catcode `\%12\relax}%
\providecommand \@@startlink[1]{}%
\providecommand \@@endlink[0]{}%
\providecommand \url  [0]{\begingroup\@sanitize@url \@url }%
\providecommand \@url [1]{\endgroup\@href {#1}{\urlprefix }}%
\providecommand \urlprefix  [0]{URL }%
\providecommand \Eprint [0]{\href }%
\providecommand \doibase [0]{https://doi.org/}%
\providecommand \selectlanguage [0]{\@gobble}%
\providecommand \bibinfo  [0]{\@secondoftwo}%
\providecommand \bibfield  [0]{\@secondoftwo}%
\providecommand \translation [1]{[#1]}%
\providecommand \BibitemOpen [0]{}%
\providecommand \bibitemStop [0]{}%
\providecommand \bibitemNoStop [0]{.\EOS\space}%
\providecommand \EOS [0]{\spacefactor3000\relax}%
\providecommand \BibitemShut  [1]{\csname bibitem#1\endcsname}%
\let\auto@bib@innerbib\@empty
%</preamble>
\bibitem [{\citenamefont {Huang}\ \emph {et~al.}(2020)\citenamefont {Huang},
  \citenamefont {Kueng},\ and\ \citenamefont {Preskill}}]{HuangKP20}%
  \BibitemOpen
  \bibfield  {author} {\bibinfo {author} {\bibfnamefont {H.-Y.}\ \bibnamefont
  {Huang}}, \bibinfo {author} {\bibfnamefont {R.}~\bibnamefont {Kueng}},\ and\
  \bibinfo {author} {\bibfnamefont {J.}~\bibnamefont {Preskill}},\ }\bibfield
  {title} {\bibinfo {title} {Predicting many properties of a quantum system
  from very few measurements},\ }\href@noop {} {\bibfield  {journal} {\bibinfo
  {journal} {Nat. Phys.}\ }\textbf {\bibinfo {volume} {16}},\ \bibinfo {pages}
  {1050} (\bibinfo {year} {2020})}\BibitemShut {NoStop}%
\bibitem [{\citenamefont {Huang}(2022)}]{Huan22}%
  \BibitemOpen
  \bibfield  {author} {\bibinfo {author} {\bibfnamefont {H.-Y.}\ \bibnamefont
  {Huang}},\ }\bibfield  {title} {\bibinfo {title} {Learning quantum states
  from their classical shadows},\ }\href@noop {} {\bibfield  {journal}
  {\bibinfo  {journal} {Nat. Rev. Phys.}\ }\textbf {\bibinfo {volume} {4}},\
  \bibinfo {pages} {81} (\bibinfo {year} {2022})}\BibitemShut {NoStop}%
\bibitem [{\citenamefont {Kliesch}\ and\ \citenamefont {Roth}(2021)}]{KlieR21}%
  \BibitemOpen
  \bibfield  {author} {\bibinfo {author} {\bibfnamefont {M.}~\bibnamefont
  {Kliesch}}\ and\ \bibinfo {author} {\bibfnamefont {I.}~\bibnamefont {Roth}},\
  }\bibfield  {title} {\bibinfo {title} {Theory of quantum system
  certification},\ }\href@noop {} {\bibfield  {journal} {\bibinfo  {journal}
  {PRX Quantum}\ }\textbf {\bibinfo {volume} {2}},\ \bibinfo {pages} {010201}
  (\bibinfo {year} {2021})}\BibitemShut {NoStop}%
\bibitem [{\citenamefont {Elben}\ \emph {et~al.}(2023)\citenamefont {Elben},
  \citenamefont {Flammia}, \citenamefont {Huang}, \citenamefont {Kueng},
  \citenamefont {Preskill}, \citenamefont {Vermersch},\ and\ \citenamefont
  {Zoller}}]{ElbeFHK23}%
  \BibitemOpen
  \bibfield  {author} {\bibinfo {author} {\bibfnamefont {A.}~\bibnamefont
  {Elben}}, \bibinfo {author} {\bibfnamefont {S.~T.}\ \bibnamefont {Flammia}},
  \bibinfo {author} {\bibfnamefont {H.-Y.}\ \bibnamefont {Huang}}, \bibinfo
  {author} {\bibfnamefont {R.}~\bibnamefont {Kueng}}, \bibinfo {author}
  {\bibfnamefont {J.}~\bibnamefont {Preskill}}, \bibinfo {author}
  {\bibfnamefont {B.}~\bibnamefont {Vermersch}},\ and\ \bibinfo {author}
  {\bibfnamefont {P.}~\bibnamefont {Zoller}},\ }\bibfield  {title} {\bibinfo
  {title} {The randomized measurement toolbox},\ }\href@noop {} {\bibfield
  {journal} {\bibinfo  {journal} {Nat. Rev. Phys.}\ }\textbf {\bibinfo {volume}
  {5}},\ \bibinfo {pages} {9} (\bibinfo {year} {2023})}\BibitemShut {NoStop}%
\bibitem [{\citenamefont {Haah}\ \emph {et~al.}(2016)\citenamefont {Haah},
  \citenamefont {Harrow}, \citenamefont {Ji}, \citenamefont {Wu},\ and\
  \citenamefont {Yu}}]{HaahHJW16}%
  \BibitemOpen
  \bibfield  {author} {\bibinfo {author} {\bibfnamefont {J.}~\bibnamefont
  {Haah}}, \bibinfo {author} {\bibfnamefont {A.~W.}\ \bibnamefont {Harrow}},
  \bibinfo {author} {\bibfnamefont {Z.}~\bibnamefont {Ji}}, \bibinfo {author}
  {\bibfnamefont {X.}~\bibnamefont {Wu}},\ and\ \bibinfo {author}
  {\bibfnamefont {N.}~\bibnamefont {Yu}},\ }\bibfield  {title} {\bibinfo
  {title} {Sample-optimal tomography of quantum states},\ }in\ \href@noop {}
  {\emph {\bibinfo {booktitle} {Proceedings of the forty-eighth annual ACM
  Symposium on Theory of Computing}}}\ (\bibinfo  {publisher} {Association for
  Computing Machinery, New York, NY, United States},\ \bibinfo {year} {2016})\
  pp.\ \bibinfo {pages} {913--925}\BibitemShut {NoStop}%
\bibitem [{\citenamefont {Cramer}\ \emph {et~al.}(2010)\citenamefont {Cramer},
  \citenamefont {Plenio}, \citenamefont {Flammia}, \citenamefont {Somma},
  \citenamefont {Gross}, \citenamefont {Bartlett}, \citenamefont
  {Landon-Cardinal}, \citenamefont {Poulin},\ and\ \citenamefont
  {Liu}}]{CramPFS10}%
  \BibitemOpen
  \bibfield  {author} {\bibinfo {author} {\bibfnamefont {M.}~\bibnamefont
  {Cramer}}, \bibinfo {author} {\bibfnamefont {M.~B.}\ \bibnamefont {Plenio}},
  \bibinfo {author} {\bibfnamefont {S.~T.}\ \bibnamefont {Flammia}}, \bibinfo
  {author} {\bibfnamefont {R.}~\bibnamefont {Somma}}, \bibinfo {author}
  {\bibfnamefont {D.}~\bibnamefont {Gross}}, \bibinfo {author} {\bibfnamefont
  {S.~D.}\ \bibnamefont {Bartlett}}, \bibinfo {author} {\bibfnamefont
  {O.}~\bibnamefont {Landon-Cardinal}}, \bibinfo {author} {\bibfnamefont
  {D.}~\bibnamefont {Poulin}},\ and\ \bibinfo {author} {\bibfnamefont {Y.-K.}\
  \bibnamefont {Liu}},\ }\bibfield  {title} {\bibinfo {title} {Efficient
  quantum state tomography},\ }\href@noop {} {\bibfield  {journal} {\bibinfo
  {journal} {Nat. Commun.}\ }\textbf {\bibinfo {volume} {1}},\ \bibinfo {pages}
  {149} (\bibinfo {year} {2010})}\BibitemShut {NoStop}%
\bibitem [{\citenamefont {Chen}\ \emph {et~al.}(2021)\citenamefont {Chen},
  \citenamefont {Yu}, \citenamefont {Zeng},\ and\ \citenamefont
  {Flammia}}]{ChenYZF21}%
  \BibitemOpen
  \bibfield  {author} {\bibinfo {author} {\bibfnamefont {S.}~\bibnamefont
  {Chen}}, \bibinfo {author} {\bibfnamefont {W.}~\bibnamefont {Yu}}, \bibinfo
  {author} {\bibfnamefont {P.}~\bibnamefont {Zeng}},\ and\ \bibinfo {author}
  {\bibfnamefont {S.~T.}\ \bibnamefont {Flammia}},\ }\bibfield  {title}
  {\bibinfo {title} {Robust shadow estimation},\ }\href@noop {} {\bibfield
  {journal} {\bibinfo  {journal} {PRX Quantum}\ }\textbf {\bibinfo {volume}
  {2}},\ \bibinfo {pages} {030348} (\bibinfo {year} {2021})}\BibitemShut
  {NoStop}%
\bibitem [{\citenamefont {Koh}\ and\ \citenamefont {Grewal}(2022)}]{KohG22}%
  \BibitemOpen
  \bibfield  {author} {\bibinfo {author} {\bibfnamefont {D.~E.}\ \bibnamefont
  {Koh}}\ and\ \bibinfo {author} {\bibfnamefont {S.}~\bibnamefont {Grewal}},\
  }\bibfield  {title} {\bibinfo {title} {Classical shadows with noise},\
  }\href@noop {} {\bibfield  {journal} {\bibinfo  {journal} {Quantum}\ }\textbf
  {\bibinfo {volume} {6}},\ \bibinfo {pages} {776} (\bibinfo {year}
  {2022})}\BibitemShut {NoStop}%
\bibitem [{\citenamefont {Jnane}\ \emph {et~al.}(2024)\citenamefont {Jnane},
  \citenamefont {Steinberg}, \citenamefont {Cai}, \citenamefont {Nguyen},\ and\
  \citenamefont {Koczor}}]{JnanSC24}%
  \BibitemOpen
  \bibfield  {author} {\bibinfo {author} {\bibfnamefont {H.}~\bibnamefont
  {Jnane}}, \bibinfo {author} {\bibfnamefont {J.}~\bibnamefont {Steinberg}},
  \bibinfo {author} {\bibfnamefont {Z.}~\bibnamefont {Cai}}, \bibinfo {author}
  {\bibfnamefont {H.~C.}\ \bibnamefont {Nguyen}},\ and\ \bibinfo {author}
  {\bibfnamefont {B.}~\bibnamefont {Koczor}},\ }\bibfield  {title} {\bibinfo
  {title} {Quantum error mitigated classical shadows},\ }\href@noop {}
  {\bibfield  {journal} {\bibinfo  {journal} {PRX Quantum}\ }\textbf {\bibinfo
  {volume} {5}},\ \bibinfo {pages} {010324} (\bibinfo {year}
  {2024})}\BibitemShut {NoStop}%
\bibitem [{\citenamefont {Seif}\ \emph {et~al.}(2023)\citenamefont {Seif},
  \citenamefont {Cian}, \citenamefont {Zhou}, \citenamefont {Chen},\ and\
  \citenamefont {Jiang}}]{SeifCZC23}%
  \BibitemOpen
  \bibfield  {author} {\bibinfo {author} {\bibfnamefont {A.}~\bibnamefont
  {Seif}}, \bibinfo {author} {\bibfnamefont {Z.-P.}\ \bibnamefont {Cian}},
  \bibinfo {author} {\bibfnamefont {S.}~\bibnamefont {Zhou}}, \bibinfo {author}
  {\bibfnamefont {S.}~\bibnamefont {Chen}},\ and\ \bibinfo {author}
  {\bibfnamefont {L.}~\bibnamefont {Jiang}},\ }\bibfield  {title} {\bibinfo
  {title} {Shadow distillation: {Q}uantum error mitigation with classical
  shadows for near-term quantum processors},\ }\href@noop {} {\bibfield
  {journal} {\bibinfo  {journal} {PRX Quantum}\ }\textbf {\bibinfo {volume}
  {4}},\ \bibinfo {pages} {010303} (\bibinfo {year} {2023})}\BibitemShut
  {NoStop}%
\bibitem [{\citenamefont {Bertoni}\ \emph {et~al.}(2024)\citenamefont
  {Bertoni}, \citenamefont {Haferkamp}, \citenamefont {Hinsche}, \citenamefont
  {Ioannou}, \citenamefont {Eisert},\ and\ \citenamefont
  {Pashayan}}]{BertHHI22}%
  \BibitemOpen
  \bibfield  {author} {\bibinfo {author} {\bibfnamefont {C.}~\bibnamefont
  {Bertoni}}, \bibinfo {author} {\bibfnamefont {J.}~\bibnamefont {Haferkamp}},
  \bibinfo {author} {\bibfnamefont {M.}~\bibnamefont {Hinsche}}, \bibinfo
  {author} {\bibfnamefont {M.}~\bibnamefont {Ioannou}}, \bibinfo {author}
  {\bibfnamefont {J.}~\bibnamefont {Eisert}},\ and\ \bibinfo {author}
  {\bibfnamefont {H.}~\bibnamefont {Pashayan}},\ }\bibfield  {title} {\bibinfo
  {title} {Shallow shadows: {E}xpectation estimation using low-depth random
  {Clifford} circuits},\ }\href@noop {} {\bibfield  {journal} {\bibinfo
  {journal} {Phys. Rev. Lett.}\ }\textbf {\bibinfo {volume} {133}},\ \bibinfo
  {pages} {020602} (\bibinfo {year} {2024})}\BibitemShut {NoStop}%
\bibitem [{\citenamefont {Ippoliti}\ \emph {et~al.}(2023)\citenamefont
  {Ippoliti}, \citenamefont {Li}, \citenamefont {Rakovszky},\ and\
  \citenamefont {Khemani}}]{IppoLRK23}%
  \BibitemOpen
  \bibfield  {author} {\bibinfo {author} {\bibfnamefont {M.}~\bibnamefont
  {Ippoliti}}, \bibinfo {author} {\bibfnamefont {Y.}~\bibnamefont {Li}},
  \bibinfo {author} {\bibfnamefont {T.}~\bibnamefont {Rakovszky}},\ and\
  \bibinfo {author} {\bibfnamefont {V.}~\bibnamefont {Khemani}},\ }\bibfield
  {title} {\bibinfo {title} {Operator relaxation and the optimal depth of
  classical shadows},\ }\href@noop {} {\bibfield  {journal} {\bibinfo
  {journal} {Phys. Rev. Lett.}\ }\textbf {\bibinfo {volume} {130}},\ \bibinfo
  {pages} {230403} (\bibinfo {year} {2023})}\BibitemShut {NoStop}%
\bibitem [{\citenamefont {Schuster}\ \emph {et~al.}(2024)\citenamefont
  {Schuster}, \citenamefont {Haferkamp},\ and\ \citenamefont
  {Huang}}]{SchuHH24}%
  \BibitemOpen
  \bibfield  {author} {\bibinfo {author} {\bibfnamefont {T.}~\bibnamefont
  {Schuster}}, \bibinfo {author} {\bibfnamefont {J.}~\bibnamefont
  {Haferkamp}},\ and\ \bibinfo {author} {\bibfnamefont {H.-Y.}\ \bibnamefont
  {Huang}},\ }\bibfield  {title} {\bibinfo {title} {Random unitaries in
  extremely low depth},\ }\href@noop {} {\bibfield  {journal} {\bibinfo
  {journal} {arXiv:2407.07754}\ } (\bibinfo {year} {2024})}\BibitemShut
  {NoStop}%
\bibitem [{\citenamefont {Hu}\ \emph {et~al.}(2023)\citenamefont {Hu},
  \citenamefont {Choi},\ and\ \citenamefont {You}}]{HuCY21}%
  \BibitemOpen
  \bibfield  {author} {\bibinfo {author} {\bibfnamefont {H.-Y.}\ \bibnamefont
  {Hu}}, \bibinfo {author} {\bibfnamefont {S.}~\bibnamefont {Choi}},\ and\
  \bibinfo {author} {\bibfnamefont {Y.-Z.}\ \bibnamefont {You}},\ }\bibfield
  {title} {\bibinfo {title} {Classical shadow tomography with locally scrambled
  quantum dynamics},\ }\href@noop {} {\bibfield  {journal} {\bibinfo  {journal}
  {Phys. Rev. Res.}\ }\textbf {\bibinfo {volume} {5}},\ \bibinfo {pages}
  {023027} (\bibinfo {year} {2023})}\BibitemShut {NoStop}%
\bibitem [{\citenamefont {Tran}\ \emph {et~al.}(2023)\citenamefont {Tran},
  \citenamefont {Mark}, \citenamefont {Ho},\ and\ \citenamefont
  {Choi}}]{TranMHC23}%
  \BibitemOpen
  \bibfield  {author} {\bibinfo {author} {\bibfnamefont {M.~C.}\ \bibnamefont
  {Tran}}, \bibinfo {author} {\bibfnamefont {D.~K.}\ \bibnamefont {Mark}},
  \bibinfo {author} {\bibfnamefont {W.~W.}\ \bibnamefont {Ho}},\ and\ \bibinfo
  {author} {\bibfnamefont {S.}~\bibnamefont {Choi}},\ }\bibfield  {title}
  {\bibinfo {title} {Measuring arbitrary physical properties in analog quantum
  simulation},\ }\href@noop {} {\bibfield  {journal} {\bibinfo  {journal}
  {Phys. Rev. X}\ }\textbf {\bibinfo {volume} {13}},\ \bibinfo {pages} {011049}
  (\bibinfo {year} {2023})}\BibitemShut {NoStop}%
\bibitem [{\citenamefont {Nguyen}\ \emph {et~al.}(2022)\citenamefont {Nguyen},
  \citenamefont {B{\"o}nsel}, \citenamefont {Steinberg},\ and\ \citenamefont
  {G{\"u}hne}}]{NguyBSG22}%
  \BibitemOpen
  \bibfield  {author} {\bibinfo {author} {\bibfnamefont {H.~C.}\ \bibnamefont
  {Nguyen}}, \bibinfo {author} {\bibfnamefont {J.~L.}\ \bibnamefont
  {B{\"o}nsel}}, \bibinfo {author} {\bibfnamefont {J.}~\bibnamefont
  {Steinberg}},\ and\ \bibinfo {author} {\bibfnamefont {O.}~\bibnamefont
  {G{\"u}hne}},\ }\bibfield  {title} {\bibinfo {title} {Optimizing shadow
  tomography with generalized measurements},\ }\href@noop {} {\bibfield
  {journal} {\bibinfo  {journal} {Phys. Rev. Lett.}\ }\textbf {\bibinfo
  {volume} {129}},\ \bibinfo {pages} {220502} (\bibinfo {year}
  {2022})}\BibitemShut {NoStop}%
\bibitem [{\citenamefont {Innocenti}\ \emph {et~al.}(2023)\citenamefont
  {Innocenti}, \citenamefont {Lorenzo}, \citenamefont {Palmisano},
  \citenamefont {Albarelli}, \citenamefont {Ferraro}, \citenamefont
  {Paternostro},\ and\ \citenamefont {Palma}}]{InnoLPA23}%
  \BibitemOpen
  \bibfield  {author} {\bibinfo {author} {\bibfnamefont {L.}~\bibnamefont
  {Innocenti}}, \bibinfo {author} {\bibfnamefont {S.}~\bibnamefont {Lorenzo}},
  \bibinfo {author} {\bibfnamefont {I.}~\bibnamefont {Palmisano}}, \bibinfo
  {author} {\bibfnamefont {F.}~\bibnamefont {Albarelli}}, \bibinfo {author}
  {\bibfnamefont {A.}~\bibnamefont {Ferraro}}, \bibinfo {author} {\bibfnamefont
  {M.}~\bibnamefont {Paternostro}},\ and\ \bibinfo {author} {\bibfnamefont
  {G.~M.}\ \bibnamefont {Palma}},\ }\bibfield  {title} {\bibinfo {title}
  {Shadow tomography on general measurement frames},\ }\href@noop {} {\bibfield
   {journal} {\bibinfo  {journal} {PRX Quantum}\ }\textbf {\bibinfo {volume}
  {4}},\ \bibinfo {pages} {040328} (\bibinfo {year} {2023})}\BibitemShut
  {NoStop}%
\bibitem [{\citenamefont {Zhou}\ and\ \citenamefont {Liu}(2024)}]{ZhouL24}%
  \BibitemOpen
  \bibfield  {author} {\bibinfo {author} {\bibfnamefont {Y.}~\bibnamefont
  {Zhou}}\ and\ \bibinfo {author} {\bibfnamefont {Z.}~\bibnamefont {Liu}},\
  }\bibfield  {title} {\bibinfo {title} {A hybrid framework for estimating
  nonlinear functions of quantum states},\ }\href@noop {} {\bibfield  {journal}
  {\bibinfo  {journal} {npj Quantum Inf.}\ }\textbf {\bibinfo {volume} {10}},\
  \bibinfo {pages} {62} (\bibinfo {year} {2024})}\BibitemShut {NoStop}%
\bibitem [{\citenamefont {Struchalin}\ \emph {et~al.}(2021)\citenamefont
  {Struchalin}, \citenamefont {Zagorovskii}, \citenamefont {Kovlakov},
  \citenamefont {Straupe},\ and\ \citenamefont {Kulik}}]{StruZKS21}%
  \BibitemOpen
  \bibfield  {author} {\bibinfo {author} {\bibfnamefont {G.~I.}\ \bibnamefont
  {Struchalin}}, \bibinfo {author} {\bibfnamefont {Y.~A.}\ \bibnamefont
  {Zagorovskii}}, \bibinfo {author} {\bibfnamefont {E.~V.}\ \bibnamefont
  {Kovlakov}}, \bibinfo {author} {\bibfnamefont {S.~S.}\ \bibnamefont
  {Straupe}},\ and\ \bibinfo {author} {\bibfnamefont {S.~P.}\ \bibnamefont
  {Kulik}},\ }\bibfield  {title} {\bibinfo {title} {Experimental estimation of
  quantum state properties from classical shadows},\ }\href@noop {} {\bibfield
  {journal} {\bibinfo  {journal} {PRX Quantum}\ }\textbf {\bibinfo {volume}
  {2}},\ \bibinfo {pages} {010307} (\bibinfo {year} {2021})}\BibitemShut
  {NoStop}%
\bibitem [{\citenamefont {Zhang}\ \emph {et~al.}(2021)\citenamefont {Zhang},
  \citenamefont {Sun}, \citenamefont {Fang}, \citenamefont {Zhang},
  \citenamefont {Yuan},\ and\ \citenamefont {Lu}}]{ZhanSFZ21}%
  \BibitemOpen
  \bibfield  {author} {\bibinfo {author} {\bibfnamefont {T.}~\bibnamefont
  {Zhang}}, \bibinfo {author} {\bibfnamefont {J.}~\bibnamefont {Sun}}, \bibinfo
  {author} {\bibfnamefont {X.-X.}\ \bibnamefont {Fang}}, \bibinfo {author}
  {\bibfnamefont {X.-M.}\ \bibnamefont {Zhang}}, \bibinfo {author}
  {\bibfnamefont {X.}~\bibnamefont {Yuan}},\ and\ \bibinfo {author}
  {\bibfnamefont {H.}~\bibnamefont {Lu}},\ }\bibfield  {title} {\bibinfo
  {title} {Experimental quantum state measurement with classical shadows},\
  }\href@noop {} {\bibfield  {journal} {\bibinfo  {journal} {Phys. Rev. Lett.}\
  }\textbf {\bibinfo {volume} {127}},\ \bibinfo {pages} {200501} (\bibinfo
  {year} {2021})}\BibitemShut {NoStop}%
\bibitem [{\citenamefont {Stricker}\ \emph {et~al.}(2022)\citenamefont
  {Stricker}, \citenamefont {Meth}, \citenamefont {Postler}, \citenamefont
  {Edmunds}, \citenamefont {Ferrie}, \citenamefont {Blatt}, \citenamefont
  {Schindler}, \citenamefont {Monz}, \citenamefont {Kueng},\ and\ \citenamefont
  {Ringbauer}}]{StriMPE22}%
  \BibitemOpen
  \bibfield  {author} {\bibinfo {author} {\bibfnamefont {R.}~\bibnamefont
  {Stricker}}, \bibinfo {author} {\bibfnamefont {M.}~\bibnamefont {Meth}},
  \bibinfo {author} {\bibfnamefont {L.}~\bibnamefont {Postler}}, \bibinfo
  {author} {\bibfnamefont {C.}~\bibnamefont {Edmunds}}, \bibinfo {author}
  {\bibfnamefont {C.}~\bibnamefont {Ferrie}}, \bibinfo {author} {\bibfnamefont
  {R.}~\bibnamefont {Blatt}}, \bibinfo {author} {\bibfnamefont
  {P.}~\bibnamefont {Schindler}}, \bibinfo {author} {\bibfnamefont
  {T.}~\bibnamefont {Monz}}, \bibinfo {author} {\bibfnamefont {R.}~\bibnamefont
  {Kueng}},\ and\ \bibinfo {author} {\bibfnamefont {M.}~\bibnamefont
  {Ringbauer}},\ }\bibfield  {title} {\bibinfo {title} {Experimental
  single-setting quantum state tomography},\ }\href@noop {} {\bibfield
  {journal} {\bibinfo  {journal} {PRX Quantum}\ }\textbf {\bibinfo {volume}
  {3}},\ \bibinfo {pages} {040310} (\bibinfo {year} {2022})}\BibitemShut
  {NoStop}%
\bibitem [{\citenamefont {Gottesman}(1997)}]{Gott97}%
  \BibitemOpen
  \bibfield  {author} {\bibinfo {author} {\bibfnamefont {D.}~\bibnamefont
  {Gottesman}},\ }\emph {\bibinfo {title} {Stabilizer codes and quantum error
  correction}},\ \href@noop {} {Ph.D. thesis},\ \bibinfo  {school} {California
  Institute of Technology} (\bibinfo {year} {1997})\BibitemShut {NoStop}%
\bibitem [{\citenamefont {Nielsen}\ and\ \citenamefont
  {Chuang}(2010)}]{NielC10}%
  \BibitemOpen
  \bibfield  {author} {\bibinfo {author} {\bibfnamefont {M.~A.}\ \bibnamefont
  {Nielsen}}\ and\ \bibinfo {author} {\bibfnamefont {I.~L.}\ \bibnamefont
  {Chuang}},\ }\href@noop {} {\emph {\bibinfo {title} {Quantum {C}omputation
  and {Q}uantum {I}nformation}}},\ \bibinfo {edition} {2nd}\ ed.\ (\bibinfo
  {publisher} {Cambridge {U}niversity {P}ress, UK},\ \bibinfo {year}
  {2010})\BibitemShut {NoStop}%
\bibitem [{\citenamefont {Eisert}\ \emph {et~al.}(2020)\citenamefont {Eisert},
  \citenamefont {Hangleiter}, \citenamefont {Walk}, \citenamefont {Roth},
  \citenamefont {Markham}, \citenamefont {Parekh}, \citenamefont {Chabaud},\
  and\ \citenamefont {Kashefi}}]{EiseHWR20}%
  \BibitemOpen
  \bibfield  {author} {\bibinfo {author} {\bibfnamefont {J.}~\bibnamefont
  {Eisert}}, \bibinfo {author} {\bibfnamefont {D.}~\bibnamefont {Hangleiter}},
  \bibinfo {author} {\bibfnamefont {N.}~\bibnamefont {Walk}}, \bibinfo {author}
  {\bibfnamefont {I.}~\bibnamefont {Roth}}, \bibinfo {author} {\bibfnamefont
  {D.}~\bibnamefont {Markham}}, \bibinfo {author} {\bibfnamefont
  {R.}~\bibnamefont {Parekh}}, \bibinfo {author} {\bibfnamefont
  {U.}~\bibnamefont {Chabaud}},\ and\ \bibinfo {author} {\bibfnamefont
  {E.}~\bibnamefont {Kashefi}},\ }\bibfield  {title} {\bibinfo {title} {Quantum
  certification and benchmarking},\ }\href@noop {} {\bibfield  {journal}
  {\bibinfo  {journal} {Nat. Rev. Phys.}\ }\textbf {\bibinfo {volume} {2}},\
  \bibinfo {pages} {382} (\bibinfo {year} {2020})}\BibitemShut {NoStop}%
\bibitem [{\citenamefont {Zhu}(2017)}]{Zhu17}%
  \BibitemOpen
  \bibfield  {author} {\bibinfo {author} {\bibfnamefont {H.}~\bibnamefont
  {Zhu}},\ }\bibfield  {title} {\bibinfo {title} {Multiqubit {Clifford} groups
  are unitary 3-designs},\ }\href@noop {} {\bibfield  {journal} {\bibinfo
  {journal} {Phys. Rev. A}\ }\textbf {\bibinfo {volume} {96}},\ \bibinfo
  {pages} {062336} (\bibinfo {year} {2017})}\BibitemShut {NoStop}%
\bibitem [{\citenamefont {Webb}(2016)}]{Webb16}%
  \BibitemOpen
  \bibfield  {author} {\bibinfo {author} {\bibfnamefont {Z.}~\bibnamefont
  {Webb}},\ }\bibfield  {title} {\bibinfo {title} {The {Clifford} group forms a
  unitary 3-design},\ }\href@noop {} {\bibfield  {journal} {\bibinfo  {journal}
  {Quantum Info. Comput.}\ }\textbf {\bibinfo {volume} {16}},\ \bibinfo {pages}
  {1379} (\bibinfo {year} {2016})}\BibitemShut {NoStop}%
\bibitem [{\citenamefont {Helsen}\ and\ \citenamefont
  {Walter}(2023)}]{HelsW22}%
  \BibitemOpen
  \bibfield  {author} {\bibinfo {author} {\bibfnamefont {J.}~\bibnamefont
  {Helsen}}\ and\ \bibinfo {author} {\bibfnamefont {M.}~\bibnamefont
  {Walter}},\ }\bibfield  {title} {\bibinfo {title} {Thrifty shadow estimation:
  {Reusing} quantum circuits and bounding tails},\ }\href@noop {} {\bibfield
  {journal} {\bibinfo  {journal} {Phys. Rev. Lett.}\ }\textbf {\bibinfo
  {volume} {131}},\ \bibinfo {pages} {240602} (\bibinfo {year}
  {2023})}\BibitemShut {NoStop}%
\bibitem [{\citenamefont {Huang}\ \emph
  {et~al.}(2023{\natexlab{a}})\citenamefont {Huang}, \citenamefont {Tong},
  \citenamefont {Fang},\ and\ \citenamefont {Su}}]{HuanTFS23}%
  \BibitemOpen
  \bibfield  {author} {\bibinfo {author} {\bibfnamefont {H.-Y.}\ \bibnamefont
  {Huang}}, \bibinfo {author} {\bibfnamefont {Y.}~\bibnamefont {Tong}},
  \bibinfo {author} {\bibfnamefont {D.}~\bibnamefont {Fang}},\ and\ \bibinfo
  {author} {\bibfnamefont {Y.}~\bibnamefont {Su}},\ }\bibfield  {title}
  {\bibinfo {title} {Learning many-body {Hamiltonians} with
  {Heisenberg}-limited scaling},\ }\href@noop {} {\bibfield  {journal}
  {\bibinfo  {journal} {Phys. Rev. Lett.}\ }\textbf {\bibinfo {volume} {130}},\
  \bibinfo {pages} {200403} (\bibinfo {year} {2023}{\natexlab{a}})}\BibitemShut
  {NoStop}%
\bibitem [{\citenamefont {Levy}\ \emph {et~al.}(2024)\citenamefont {Levy},
  \citenamefont {Luo},\ and\ \citenamefont {Clark}}]{LevyLC24}%
  \BibitemOpen
  \bibfield  {author} {\bibinfo {author} {\bibfnamefont {R.}~\bibnamefont
  {Levy}}, \bibinfo {author} {\bibfnamefont {D.}~\bibnamefont {Luo}},\ and\
  \bibinfo {author} {\bibfnamefont {B.~K.}\ \bibnamefont {Clark}},\ }\bibfield
  {title} {\bibinfo {title} {Classical shadows for quantum process tomography
  on near-term quantum computers},\ }\href@noop {} {\bibfield  {journal}
  {\bibinfo  {journal} {Phys. Rev. Res.}\ }\textbf {\bibinfo {volume} {6}},\
  \bibinfo {pages} {013029} (\bibinfo {year} {2024})}\BibitemShut {NoStop}%
\bibitem [{\citenamefont {Huang}\ \emph
  {et~al.}(2023{\natexlab{b}})\citenamefont {Huang}, \citenamefont {Chen},\
  and\ \citenamefont {Preskill}}]{HuanCP23}%
  \BibitemOpen
  \bibfield  {author} {\bibinfo {author} {\bibfnamefont {H.-Y.}\ \bibnamefont
  {Huang}}, \bibinfo {author} {\bibfnamefont {S.}~\bibnamefont {Chen}},\ and\
  \bibinfo {author} {\bibfnamefont {J.}~\bibnamefont {Preskill}},\ }\bibfield
  {title} {\bibinfo {title} {Learning to predict arbitrary quantum processes},\
  }\href@noop {} {\bibfield  {journal} {\bibinfo  {journal} {PRX Quantum}\
  }\textbf {\bibinfo {volume} {4}},\ \bibinfo {pages} {040337} (\bibinfo {year}
  {2023}{\natexlab{b}})}\BibitemShut {NoStop}%
\bibitem [{\citenamefont {Wei}\ \emph {et~al.}(2011)\citenamefont {Wei},
  \citenamefont {Affleck},\ and\ \citenamefont {Raussendorf}}]{WeitAR11}%
  \BibitemOpen
  \bibfield  {author} {\bibinfo {author} {\bibfnamefont {T.-C.}\ \bibnamefont
  {Wei}}, \bibinfo {author} {\bibfnamefont {I.}~\bibnamefont {Affleck}},\ and\
  \bibinfo {author} {\bibfnamefont {R.}~\bibnamefont {Raussendorf}},\
  }\bibfield  {title} {\bibinfo {title} {Affleck-{K}ennedy-{L}ieb-{T}asaki
  state on a honeycomb lattice is a universal quantum computational resource},\
  }\href@noop {} {\bibfield  {journal} {\bibinfo  {journal} {Phys. Rev. Lett.}\
  }\textbf {\bibinfo {volume} {106}},\ \bibinfo {pages} {070501} (\bibinfo
  {year} {2011})}\BibitemShut {NoStop}%
\bibitem [{\citenamefont {Ringbauer}\ \emph {et~al.}(2022)\citenamefont
  {Ringbauer}, \citenamefont {Meth}, \citenamefont {Postler}, \citenamefont
  {Stricker}, \citenamefont {Blatt}, \citenamefont {Schindler},\ and\
  \citenamefont {Monz}}]{RingMPS22}%
  \BibitemOpen
  \bibfield  {author} {\bibinfo {author} {\bibfnamefont {M.}~\bibnamefont
  {Ringbauer}}, \bibinfo {author} {\bibfnamefont {M.}~\bibnamefont {Meth}},
  \bibinfo {author} {\bibfnamefont {L.}~\bibnamefont {Postler}}, \bibinfo
  {author} {\bibfnamefont {R.}~\bibnamefont {Stricker}}, \bibinfo {author}
  {\bibfnamefont {R.}~\bibnamefont {Blatt}}, \bibinfo {author} {\bibfnamefont
  {P.}~\bibnamefont {Schindler}},\ and\ \bibinfo {author} {\bibfnamefont
  {T.}~\bibnamefont {Monz}},\ }\bibfield  {title} {\bibinfo {title} {A
  universal qudit quantum processor with trapped ions},\ }\href@noop {}
  {\bibfield  {journal} {\bibinfo  {journal} {Nat. Phys.}\ }\textbf {\bibinfo
  {volume} {18}},\ \bibinfo {pages} {1053} (\bibinfo {year}
  {2022})}\BibitemShut {NoStop}%
\bibitem [{\citenamefont {Wang}\ \emph {et~al.}(2020)\citenamefont {Wang},
  \citenamefont {Hu}, \citenamefont {Sanders},\ and\ \citenamefont
  {Kais}}]{WangHSK20}%
  \BibitemOpen
  \bibfield  {author} {\bibinfo {author} {\bibfnamefont {Y.}~\bibnamefont
  {Wang}}, \bibinfo {author} {\bibfnamefont {Z.}~\bibnamefont {Hu}}, \bibinfo
  {author} {\bibfnamefont {B.~C.}\ \bibnamefont {Sanders}},\ and\ \bibinfo
  {author} {\bibfnamefont {S.}~\bibnamefont {Kais}},\ }\bibfield  {title}
  {\bibinfo {title} {Qudits and high-dimensional quantum computing},\
  }\href@noop {} {\bibfield  {journal} {\bibinfo  {journal} {Front. Phys.}\
  }\textbf {\bibinfo {volume} {8}},\ \bibinfo {pages} {589504} (\bibinfo {year}
  {2020})}\BibitemShut {NoStop}%
\bibitem [{\citenamefont {Chu}\ \emph {et~al.}(2023)\citenamefont {Chu},
  \citenamefont {He}, \citenamefont {Zhou}, \citenamefont {Yuan}, \citenamefont
  {Zhang}, \citenamefont {Guo}, \citenamefont {Hai}, \citenamefont {Han},
  \citenamefont {Hu}, \citenamefont {Huang} \emph {et~al.}}]{ChuHZY23}%
  \BibitemOpen
  \bibfield  {author} {\bibinfo {author} {\bibfnamefont {J.}~\bibnamefont
  {Chu}}, \bibinfo {author} {\bibfnamefont {X.}~\bibnamefont {He}}, \bibinfo
  {author} {\bibfnamefont {Y.}~\bibnamefont {Zhou}}, \bibinfo {author}
  {\bibfnamefont {J.}~\bibnamefont {Yuan}}, \bibinfo {author} {\bibfnamefont
  {L.}~\bibnamefont {Zhang}}, \bibinfo {author} {\bibfnamefont
  {Q.}~\bibnamefont {Guo}}, \bibinfo {author} {\bibfnamefont {Y.}~\bibnamefont
  {Hai}}, \bibinfo {author} {\bibfnamefont {Z.}~\bibnamefont {Han}}, \bibinfo
  {author} {\bibfnamefont {C.-K.}\ \bibnamefont {Hu}}, \bibinfo {author}
  {\bibfnamefont {W.}~\bibnamefont {Huang}}, \emph {et~al.},\ }\bibfield
  {title} {\bibinfo {title} {Scalable algorithm simplification using quantum
  {AND} logic},\ }\href@noop {} {\bibfield  {journal} {\bibinfo  {journal}
  {Nat. Phys.}\ }\textbf {\bibinfo {volume} {19}},\ \bibinfo {pages} {126}
  (\bibinfo {year} {2023})}\BibitemShut {NoStop}%
\bibitem [{\citenamefont {Campbell}(2014)}]{Camp14}%
  \BibitemOpen
  \bibfield  {author} {\bibinfo {author} {\bibfnamefont {E.~T.}\ \bibnamefont
  {Campbell}},\ }\bibfield  {title} {\bibinfo {title} {Enhanced fault-tolerant
  quantum computing in $d$-level systems},\ }\href@noop {} {\bibfield
  {journal} {\bibinfo  {journal} {Phys. Rev. Lett.}\ }\textbf {\bibinfo
  {volume} {113}},\ \bibinfo {pages} {230501} (\bibinfo {year}
  {2014})}\BibitemShut {NoStop}%
\bibitem [{\citenamefont {Campbell}\ \emph {et~al.}(2012)\citenamefont
  {Campbell}, \citenamefont {Anwar},\ and\ \citenamefont {Browne}}]{CampAB12}%
  \BibitemOpen
  \bibfield  {author} {\bibinfo {author} {\bibfnamefont {E.~T.}\ \bibnamefont
  {Campbell}}, \bibinfo {author} {\bibfnamefont {H.}~\bibnamefont {Anwar}},\
  and\ \bibinfo {author} {\bibfnamefont {D.~E.}\ \bibnamefont {Browne}},\
  }\bibfield  {title} {\bibinfo {title} {Magic-state distillation in all prime
  dimensions using quantum {Reed-Muller} codes},\ }\href@noop {} {\bibfield
  {journal} {\bibinfo  {journal} {Phys. Rev. X}\ }\textbf {\bibinfo {volume}
  {2}},\ \bibinfo {pages} {041021} (\bibinfo {year} {2012})}\BibitemShut
  {NoStop}%
\bibitem [{\citenamefont {Collins}\ \emph {et~al.}(2002)\citenamefont
  {Collins}, \citenamefont {Gisin}, \citenamefont {Linden}, \citenamefont
  {Massar},\ and\ \citenamefont {Popescu}}]{CollGLMP02}%
  \BibitemOpen
  \bibfield  {author} {\bibinfo {author} {\bibfnamefont {D.}~\bibnamefont
  {Collins}}, \bibinfo {author} {\bibfnamefont {N.}~\bibnamefont {Gisin}},
  \bibinfo {author} {\bibfnamefont {N.}~\bibnamefont {Linden}}, \bibinfo
  {author} {\bibfnamefont {S.}~\bibnamefont {Massar}},\ and\ \bibinfo {author}
  {\bibfnamefont {S.}~\bibnamefont {Popescu}},\ }\bibfield  {title} {\bibinfo
  {title} {Bell inequalities for arbitrarily high-dimensional systems},\
  }\href@noop {} {\bibfield  {journal} {\bibinfo  {journal} {Phys. Rev. Lett.}\
  }\textbf {\bibinfo {volume} {88}},\ \bibinfo {pages} {040404} (\bibinfo
  {year} {2002})}\BibitemShut {NoStop}%
\bibitem [{\citenamefont {V{\'e}rtesi}\ \emph {et~al.}(2010)\citenamefont
  {V{\'e}rtesi}, \citenamefont {Pironio},\ and\ \citenamefont
  {Brunner}}]{VertPB10}%
  \BibitemOpen
  \bibfield  {author} {\bibinfo {author} {\bibfnamefont {T.}~\bibnamefont
  {V{\'e}rtesi}}, \bibinfo {author} {\bibfnamefont {S.}~\bibnamefont
  {Pironio}},\ and\ \bibinfo {author} {\bibfnamefont {N.}~\bibnamefont
  {Brunner}},\ }\bibfield  {title} {\bibinfo {title} {Closing the detection
  loophole in {Bell} experiments using qudits},\ }\href@noop {} {\bibfield
  {journal} {\bibinfo  {journal} {Phys. Rev. Lett.}\ }\textbf {\bibinfo
  {volume} {104}},\ \bibinfo {pages} {060401} (\bibinfo {year}
  {2010})}\BibitemShut {NoStop}%
\bibitem [{\citenamefont {Srivastav}\ \emph {et~al.}(2022)\citenamefont
  {Srivastav}, \citenamefont {Valencia}, \citenamefont {McCutcheon},
  \citenamefont {Leedumrongwatthanakun}, \citenamefont {Designolle},
  \citenamefont {Uola}, \citenamefont {Brunner},\ and\ \citenamefont
  {Malik}}]{SrivVM22}%
  \BibitemOpen
  \bibfield  {author} {\bibinfo {author} {\bibfnamefont {V.}~\bibnamefont
  {Srivastav}}, \bibinfo {author} {\bibfnamefont {N.~H.}\ \bibnamefont
  {Valencia}}, \bibinfo {author} {\bibfnamefont {W.}~\bibnamefont
  {McCutcheon}}, \bibinfo {author} {\bibfnamefont {S.}~\bibnamefont
  {Leedumrongwatthanakun}}, \bibinfo {author} {\bibfnamefont {S.}~\bibnamefont
  {Designolle}}, \bibinfo {author} {\bibfnamefont {R.}~\bibnamefont {Uola}},
  \bibinfo {author} {\bibfnamefont {N.}~\bibnamefont {Brunner}},\ and\ \bibinfo
  {author} {\bibfnamefont {M.}~\bibnamefont {Malik}},\ }\bibfield  {title}
  {\bibinfo {title} {Quick quantum steering: {O}vercoming loss and noise with
  qudits},\ }\href@noop {} {\bibfield  {journal} {\bibinfo  {journal} {Phys.
  Rev. X}\ }\textbf {\bibinfo {volume} {12}},\ \bibinfo {pages} {041023}
  (\bibinfo {year} {2022})}\BibitemShut {NoStop}%
\bibitem [{\citenamefont {Cerf}\ \emph {et~al.}(2002)\citenamefont {Cerf},
  \citenamefont {Bourennane}, \citenamefont {Karlsson},\ and\ \citenamefont
  {Gisin}}]{CerfBKG02}%
  \BibitemOpen
  \bibfield  {author} {\bibinfo {author} {\bibfnamefont {N.~J.}\ \bibnamefont
  {Cerf}}, \bibinfo {author} {\bibfnamefont {M.}~\bibnamefont {Bourennane}},
  \bibinfo {author} {\bibfnamefont {A.}~\bibnamefont {Karlsson}},\ and\
  \bibinfo {author} {\bibfnamefont {N.}~\bibnamefont {Gisin}},\ }\bibfield
  {title} {\bibinfo {title} {Security of quantum key distribution using
  $d$-level systems},\ }\href@noop {} {\bibfield  {journal} {\bibinfo
  {journal} {Phys. Rev. Lett.}\ }\textbf {\bibinfo {volume} {88}},\ \bibinfo
  {pages} {127902} (\bibinfo {year} {2002})}\BibitemShut {NoStop}%
\bibitem [{\citenamefont {Chi}\ \emph {et~al.}(2022)\citenamefont {Chi},
  \citenamefont {Huang}, \citenamefont {Zhang}, \citenamefont {Mao},
  \citenamefont {Zhou}, \citenamefont {Chen}, \citenamefont {Zhai},
  \citenamefont {Bao}, \citenamefont {Dai}, \citenamefont {Yuan} \emph
  {et~al.}}]{ChiHZM22}%
  \BibitemOpen
  \bibfield  {author} {\bibinfo {author} {\bibfnamefont {Y.}~\bibnamefont
  {Chi}}, \bibinfo {author} {\bibfnamefont {J.}~\bibnamefont {Huang}}, \bibinfo
  {author} {\bibfnamefont {Z.}~\bibnamefont {Zhang}}, \bibinfo {author}
  {\bibfnamefont {J.}~\bibnamefont {Mao}}, \bibinfo {author} {\bibfnamefont
  {Z.}~\bibnamefont {Zhou}}, \bibinfo {author} {\bibfnamefont {X.}~\bibnamefont
  {Chen}}, \bibinfo {author} {\bibfnamefont {C.}~\bibnamefont {Zhai}}, \bibinfo
  {author} {\bibfnamefont {J.}~\bibnamefont {Bao}}, \bibinfo {author}
  {\bibfnamefont {T.}~\bibnamefont {Dai}}, \bibinfo {author} {\bibfnamefont
  {H.}~\bibnamefont {Yuan}}, \emph {et~al.},\ }\bibfield  {title} {\bibinfo
  {title} {A programmable qudit-based quantum processor},\ }\href@noop {}
  {\bibfield  {journal} {\bibinfo  {journal} {Nat. Commun.}\ }\textbf {\bibinfo
  {volume} {13}},\ \bibinfo {pages} {1166} (\bibinfo {year}
  {2022})}\BibitemShut {NoStop}%
\bibitem [{\citenamefont {Erhard}\ \emph {et~al.}(2020)\citenamefont {Erhard},
  \citenamefont {Krenn},\ and\ \citenamefont {Zeilinger}}]{ErhaKZ20}%
  \BibitemOpen
  \bibfield  {author} {\bibinfo {author} {\bibfnamefont {M.}~\bibnamefont
  {Erhard}}, \bibinfo {author} {\bibfnamefont {M.}~\bibnamefont {Krenn}},\ and\
  \bibinfo {author} {\bibfnamefont {A.}~\bibnamefont {Zeilinger}},\ }\bibfield
  {title} {\bibinfo {title} {Advances in high-dimensional quantum
  entanglement},\ }\href@noop {} {\bibfield  {journal} {\bibinfo  {journal}
  {Nat. Rev. Phys.}\ }\textbf {\bibinfo {volume} {2}},\ \bibinfo {pages} {365}
  (\bibinfo {year} {2020})}\BibitemShut {NoStop}%
\bibitem [{\citenamefont {Choi}\ \emph {et~al.}(2017)\citenamefont {Choi},
  \citenamefont {Choi}, \citenamefont {Landig}, \citenamefont {Kucsko},
  \citenamefont {Zhou}, \citenamefont {Isoya}, \citenamefont {Jelezko},
  \citenamefont {Onoda}, \citenamefont {Sumiya}, \citenamefont {Khemani} \emph
  {et~al.}}]{ChoiCLK17}%
  \BibitemOpen
  \bibfield  {author} {\bibinfo {author} {\bibfnamefont {S.}~\bibnamefont
  {Choi}}, \bibinfo {author} {\bibfnamefont {J.}~\bibnamefont {Choi}}, \bibinfo
  {author} {\bibfnamefont {R.}~\bibnamefont {Landig}}, \bibinfo {author}
  {\bibfnamefont {G.}~\bibnamefont {Kucsko}}, \bibinfo {author} {\bibfnamefont
  {H.}~\bibnamefont {Zhou}}, \bibinfo {author} {\bibfnamefont {J.}~\bibnamefont
  {Isoya}}, \bibinfo {author} {\bibfnamefont {F.}~\bibnamefont {Jelezko}},
  \bibinfo {author} {\bibfnamefont {S.}~\bibnamefont {Onoda}}, \bibinfo
  {author} {\bibfnamefont {H.}~\bibnamefont {Sumiya}}, \bibinfo {author}
  {\bibfnamefont {V.}~\bibnamefont {Khemani}}, \emph {et~al.},\ }\bibfield
  {title} {\bibinfo {title} {Observation of discrete time-crystalline order in
  a disordered dipolar many-body system},\ }\href@noop {} {\bibfield  {journal}
  {\bibinfo  {journal} {Nature}\ }\textbf {\bibinfo {volume} {543}},\ \bibinfo
  {pages} {221} (\bibinfo {year} {2017})}\BibitemShut {NoStop}%
\bibitem [{\citenamefont {Cervera-Lierta}\ \emph {et~al.}(2022)\citenamefont
  {Cervera-Lierta}, \citenamefont {Krenn}, \citenamefont {Aspuru-Guzik},\ and\
  \citenamefont {Galda}}]{CervKAG22}%
  \BibitemOpen
  \bibfield  {author} {\bibinfo {author} {\bibfnamefont {A.}~\bibnamefont
  {Cervera-Lierta}}, \bibinfo {author} {\bibfnamefont {M.}~\bibnamefont
  {Krenn}}, \bibinfo {author} {\bibfnamefont {A.}~\bibnamefont
  {Aspuru-Guzik}},\ and\ \bibinfo {author} {\bibfnamefont {A.}~\bibnamefont
  {Galda}},\ }\bibfield  {title} {\bibinfo {title} {Experimental
  high-dimensional {Greenberger-Horne-Zeilinger} entanglement with
  superconducting transmon qutrits},\ }\href@noop {} {\bibfield  {journal}
  {\bibinfo  {journal} {Phys. Rev. Appl.}\ }\textbf {\bibinfo {volume} {17}},\
  \bibinfo {pages} {024062} (\bibinfo {year} {2022})}\BibitemShut {NoStop}%
\bibitem [{\citenamefont {Kueng}\ and\ \citenamefont {Gross}(2015)}]{KuenG15}%
  \BibitemOpen
  \bibfield  {author} {\bibinfo {author} {\bibfnamefont {R.}~\bibnamefont
  {Kueng}}\ and\ \bibinfo {author} {\bibfnamefont {D.}~\bibnamefont {Gross}},\
  }\bibfield  {title} {\bibinfo {title} {Qubit stabilizer states are complex
  projective 3-designs},\ }\href@noop {} {\bibfield  {journal} {\bibinfo
  {journal} {arXiv:1510.02767}\ } (\bibinfo {year} {2015})}\BibitemShut
  {NoStop}%
\bibitem [{\citenamefont {Zhu}\ \emph {et~al.}(2024)\citenamefont {Zhu},
  \citenamefont {Mao},\ and\ \citenamefont {Yi}}]{ZhuMY24}%
  \BibitemOpen
  \bibfield  {author} {\bibinfo {author} {\bibfnamefont {H.}~\bibnamefont
  {Zhu}}, \bibinfo {author} {\bibfnamefont {C.}~\bibnamefont {Mao}},\ and\
  \bibinfo {author} {\bibfnamefont {C.}~\bibnamefont {Yi}},\ }\bibfield
  {title} {\bibinfo {title} {Third moments of qudit {C}lifford orbits and
  3-designs based on magic orbits},\ }\href@noop {} {\bibfield  {journal}
  {\bibinfo  {journal} {arXiv:2410.13575}\ } (\bibinfo {year}
  {2024})}\BibitemShut {NoStop}%
\bibitem [{\citenamefont {Gross}\ \emph {et~al.}(2021)\citenamefont {Gross},
  \citenamefont {Nezami},\ and\ \citenamefont {Walter}}]{GrosNW21}%
  \BibitemOpen
  \bibfield  {author} {\bibinfo {author} {\bibfnamefont {D.}~\bibnamefont
  {Gross}}, \bibinfo {author} {\bibfnamefont {S.}~\bibnamefont {Nezami}},\ and\
  \bibinfo {author} {\bibfnamefont {M.}~\bibnamefont {Walter}},\ }\bibfield
  {title} {\bibinfo {title} {Schur-{Weyl} duality for the {Clifford} group with
  applications: {P}roperty testing, a robust {H}udson theorem, and de {F}inetti
  representations},\ }\href@noop {} {\bibfield  {journal} {\bibinfo  {journal}
  {Commun. Math. Phys.}\ }\textbf {\bibinfo {volume} {385}},\ \bibinfo {pages}
  {1325} (\bibinfo {year} {2021})}\BibitemShut {NoStop}%
\bibitem [{sup()}]{supp}%
  \BibitemOpen
  \href@noop {} {}\bibinfo {note} {See Supplemental Material, which includes
  Refs. [49-57], for proofs and additional simulation results.}\BibitemShut
  {Stop}%
\bibitem [{\citenamefont {Gross}(2006)}]{Gros06}%
  \BibitemOpen
  \bibfield  {author} {\bibinfo {author} {\bibfnamefont {D.}~\bibnamefont
  {Gross}},\ }\bibfield  {title} {\bibinfo {title} {Hudson's theorem for
  finite-dimensional quantum systems},\ }\href@noop {} {\bibfield  {journal}
  {\bibinfo  {journal} {J. Math. Phys.}\ }\textbf {\bibinfo {volume} {47}},\
  \bibinfo {pages} {122107} (\bibinfo {year} {2006})}\BibitemShut {NoStop}%
\bibitem [{\citenamefont {Folland}(1989)}]{Foll89}%
  \BibitemOpen
  \bibfield  {author} {\bibinfo {author} {\bibfnamefont {G.~B.}\ \bibnamefont
  {Folland}},\ }\href@noop {} {\emph {\bibinfo {title} {Harmonic analysis in
  phase space}}},\ \bibinfo {series} {Annals of Mathematics Studies}\ No.\
  \bibinfo {number} {122}\ (\bibinfo  {publisher} {Princeton university
  press},\ \bibinfo {year} {1989})\BibitemShut {NoStop}%
\bibitem [{\citenamefont {Weil}(1964)}]{Weil64}%
  \BibitemOpen
  \bibfield  {author} {\bibinfo {author} {\bibfnamefont {A.}~\bibnamefont
  {Weil}},\ }\bibfield  {title} {\bibinfo {title} {Sur certains groupes
  d'op{\'e}rateurs unitaires},\ }\href@noop {} {\bibfield  {journal} {\bibinfo
  {journal} {Acta Math.}\ }\textbf {\bibinfo {volume} {111}},\ \bibinfo {pages}
  {143} (\bibinfo {year} {1964})}\BibitemShut {NoStop}%
\bibitem [{\citenamefont {Anders}\ and\ \citenamefont
  {Briegel}(2006)}]{AndeB06}%
  \BibitemOpen
  \bibfield  {author} {\bibinfo {author} {\bibfnamefont {S.}~\bibnamefont
  {Anders}}\ and\ \bibinfo {author} {\bibfnamefont {H.~J.}\ \bibnamefont
  {Briegel}},\ }\bibfield  {title} {\bibinfo {title} {Fast simulation of
  stabilizer circuits using a graph-state representation},\ }\href@noop {}
  {\bibfield  {journal} {\bibinfo  {journal} {Phys. Rev. A}\ }\textbf {\bibinfo
  {volume} {73}},\ \bibinfo {pages} {022334} (\bibinfo {year}
  {2006})}\BibitemShut {NoStop}%
\bibitem [{\citenamefont {Van~den Nest}(2010)}]{Nest08}%
  \BibitemOpen
  \bibfield  {author} {\bibinfo {author} {\bibfnamefont {M.}~\bibnamefont
  {Van~den Nest}},\ }\bibfield  {title} {\bibinfo {title} {Classical simulation
  of quantum computation, the {G}ottesman-{K}nill theorem, and slightly
  beyond},\ }\href@noop {} {\bibfield  {journal} {\bibinfo  {journal} {Quantum
  Inf. Comput.}\ }\textbf {\bibinfo {volume} {10}},\ \bibinfo {pages} {258}
  (\bibinfo {year} {2010})}\BibitemShut {NoStop}%
\bibitem [{\citenamefont {Van~den Berg}(2021)}]{Berg21}%
  \BibitemOpen
  \bibfield  {author} {\bibinfo {author} {\bibfnamefont {E.}~\bibnamefont
  {Van~den Berg}},\ }\bibfield  {title} {\bibinfo {title} {A simple method for
  sampling random {C}lifford operators},\ }in\ \href@noop {} {\emph {\bibinfo
  {booktitle} {2021 IEEE International Conference on Quantum Computing and
  Engineering (QCE)}}}\ (\bibinfo  {publisher} {IEEE Computer Society},\
  \bibinfo {year} {2021})\ pp.\ \bibinfo {pages} {54--59}\BibitemShut {NoStop}%
\bibitem [{\citenamefont {Bravyi}\ and\ \citenamefont
  {Maslov}(2021)}]{BravM21}%
  \BibitemOpen
  \bibfield  {author} {\bibinfo {author} {\bibfnamefont {S.}~\bibnamefont
  {Bravyi}}\ and\ \bibinfo {author} {\bibfnamefont {D.}~\bibnamefont
  {Maslov}},\ }\bibfield  {title} {\bibinfo {title} {{Hadamard}-free circuits
  expose the structure of the {Clifford} group},\ }\href@noop {} {\bibfield
  {journal} {\bibinfo  {journal} {IEEE Trans. Inf. Theory}\ }\textbf {\bibinfo
  {volume} {67}},\ \bibinfo {pages} {4546} (\bibinfo {year}
  {2021})}\BibitemShut {NoStop}%
\bibitem [{\citenamefont {Bravyi}\ \emph {et~al.}(2019)\citenamefont {Bravyi},
  \citenamefont {Browne}, \citenamefont {Calpin}, \citenamefont {Campbell},
  \citenamefont {Gosset},\ and\ \citenamefont {Howard}}]{BravBCC19}%
  \BibitemOpen
  \bibfield  {author} {\bibinfo {author} {\bibfnamefont {S.}~\bibnamefont
  {Bravyi}}, \bibinfo {author} {\bibfnamefont {D.}~\bibnamefont {Browne}},
  \bibinfo {author} {\bibfnamefont {P.}~\bibnamefont {Calpin}}, \bibinfo
  {author} {\bibfnamefont {E.}~\bibnamefont {Campbell}}, \bibinfo {author}
  {\bibfnamefont {D.}~\bibnamefont {Gosset}},\ and\ \bibinfo {author}
  {\bibfnamefont {M.}~\bibnamefont {Howard}},\ }\bibfield  {title} {\bibinfo
  {title} {Simulation of quantum circuits by low-rank stabilizer
  decompositions},\ }\href@noop {} {\bibfield  {journal} {\bibinfo  {journal}
  {Quantum}\ }\textbf {\bibinfo {volume} {3}},\ \bibinfo {pages} {181}
  (\bibinfo {year} {2019})}\BibitemShut {NoStop}%
\bibitem [{\citenamefont {Luks}\ \emph {et~al.}(1997)\citenamefont {Luks},
  \citenamefont {R{\'a}k{\'o}czi},\ and\ \citenamefont {Wright}}]{LuksRW97}%
  \BibitemOpen
  \bibfield  {author} {\bibinfo {author} {\bibfnamefont {E.~M.}\ \bibnamefont
  {Luks}}, \bibinfo {author} {\bibfnamefont {F.}~\bibnamefont
  {R{\'a}k{\'o}czi}},\ and\ \bibinfo {author} {\bibfnamefont {C.~R.}\
  \bibnamefont {Wright}},\ }\bibfield  {title} {\bibinfo {title} {Some
  algorithms for nilpotent permutation groups},\ }\href@noop {} {\bibfield
  {journal} {\bibinfo  {journal} {J. Symb. Comput.}\ }\textbf {\bibinfo
  {volume} {23}},\ \bibinfo {pages} {335} (\bibinfo {year} {1997})}\BibitemShut
  {NoStop}%
\bibitem [{\citenamefont {Howard}\ and\ \citenamefont {Vala}(2012)}]{HowaV12}%
  \BibitemOpen
  \bibfield  {author} {\bibinfo {author} {\bibfnamefont {M.}~\bibnamefont
  {Howard}}\ and\ \bibinfo {author} {\bibfnamefont {J.}~\bibnamefont {Vala}},\
  }\bibfield  {title} {\bibinfo {title} {Qudit versions of the qubit $\pi$/8
  gate},\ }\href@noop {} {\bibfield  {journal} {\bibinfo  {journal} {Phys. Rev.
  A}\ }\textbf {\bibinfo {volume} {86}},\ \bibinfo {pages} {022316} (\bibinfo
  {year} {2012})}\BibitemShut {NoStop}%
\bibitem [{\citenamefont {Gottesman}\ and\ \citenamefont
  {Chuang}(1999)}]{GottC99}%
  \BibitemOpen
  \bibfield  {author} {\bibinfo {author} {\bibfnamefont {D.}~\bibnamefont
  {Gottesman}}\ and\ \bibinfo {author} {\bibfnamefont {I.~L.}\ \bibnamefont
  {Chuang}},\ }\bibfield  {title} {\bibinfo {title} {Demonstrating the
  viability of universal quantum computation using teleportation and
  single-qubit operations},\ }\href@noop {} {\bibfield  {journal} {\bibinfo
  {journal} {Nature (London)}\ }\textbf {\bibinfo {volume} {402}},\ \bibinfo
  {pages} {390} (\bibinfo {year} {1999})}\BibitemShut {NoStop}%
\bibitem [{\citenamefont {Preskill}(2018)}]{Pres18}%
  \BibitemOpen
  \bibfield  {author} {\bibinfo {author} {\bibfnamefont {J.}~\bibnamefont
  {Preskill}},\ }\bibfield  {title} {\bibinfo {title} {Quantum computing in the
  {NISQ} era and beyond},\ }\href@noop {} {\bibfield  {journal} {\bibinfo
  {journal} {Quantum}\ }\textbf {\bibinfo {volume} {2}},\ \bibinfo {pages} {79}
  (\bibinfo {year} {2018})}\BibitemShut {NoStop}%
\bibitem [{\citenamefont {Affleck}\ \emph {et~al.}(1987)\citenamefont
  {Affleck}, \citenamefont {Kennedy}, \citenamefont {Lieb},\ and\ \citenamefont
  {Tasaki}}]{AfflKLT87}%
  \BibitemOpen
  \bibfield  {author} {\bibinfo {author} {\bibfnamefont {I.}~\bibnamefont
  {Affleck}}, \bibinfo {author} {\bibfnamefont {T.}~\bibnamefont {Kennedy}},
  \bibinfo {author} {\bibfnamefont {E.~H.}\ \bibnamefont {Lieb}},\ and\
  \bibinfo {author} {\bibfnamefont {H.}~\bibnamefont {Tasaki}},\ }\bibfield
  {title} {\bibinfo {title} {Rigorous results on valence-bond ground states in
  antiferromagnets},\ }\href@noop {} {\bibfield  {journal} {\bibinfo  {journal}
  {Phys. Rev. Lett.}\ }\textbf {\bibinfo {volume} {59}},\ \bibinfo {pages}
  {799} (\bibinfo {year} {1987})}\BibitemShut {NoStop}%
\bibitem [{\citenamefont {Affleck}\ \emph {et~al.}(1988)\citenamefont
  {Affleck}, \citenamefont {Kennedy}, \citenamefont {Lieb},\ and\ \citenamefont
  {Tasaki}}]{AfflKLT88}%
  \BibitemOpen
  \bibfield  {author} {\bibinfo {author} {\bibfnamefont {I.}~\bibnamefont
  {Affleck}}, \bibinfo {author} {\bibfnamefont {T.}~\bibnamefont {Kennedy}},
  \bibinfo {author} {\bibfnamefont {E.~H.}\ \bibnamefont {Lieb}},\ and\
  \bibinfo {author} {\bibfnamefont {H.}~\bibnamefont {Tasaki}},\ }\bibfield
  {title} {\bibinfo {title} {Valence bond ground states in isotropic quantum
  antiferromagnets},\ }\href@noop {} {\bibfield  {journal} {\bibinfo  {journal}
  {Commun. Math. Phys.}\ }\textbf {\bibinfo {volume} {115}},\ \bibinfo {pages}
  {477} (\bibinfo {year} {1988})}\BibitemShut {NoStop}%
\bibitem [{\citenamefont {G{\"u}hne}\ and\ \citenamefont
  {T{\'o}th}(2009)}]{GuhnT09}%
  \BibitemOpen
  \bibfield  {author} {\bibinfo {author} {\bibfnamefont {O.}~\bibnamefont
  {G{\"u}hne}}\ and\ \bibinfo {author} {\bibfnamefont {G.}~\bibnamefont
  {T{\'o}th}},\ }\bibfield  {title} {\bibinfo {title} {Entanglement
  detection},\ }\href@noop {} {\bibfield  {journal} {\bibinfo  {journal} {Phys.
  Rep.}\ }\textbf {\bibinfo {volume} {474}},\ \bibinfo {pages} {1} (\bibinfo
  {year} {2009})}\BibitemShut {NoStop}%
\bibitem [{\citenamefont {Aaronson}\ and\ \citenamefont
  {Gottesman}(2004)}]{AaroG04}%
  \BibitemOpen
  \bibfield  {author} {\bibinfo {author} {\bibfnamefont {S.}~\bibnamefont
  {Aaronson}}\ and\ \bibinfo {author} {\bibfnamefont {D.}~\bibnamefont
  {Gottesman}},\ }\bibfield  {title} {\bibinfo {title} {Improved simulation of
  stabilizer circuits},\ }\href@noop {} {\bibfield  {journal} {\bibinfo
  {journal} {Phys. Rev. A}\ }\textbf {\bibinfo {volume} {70}},\ \bibinfo
  {pages} {052328} (\bibinfo {year} {2004})}\BibitemShut {NoStop}%
\bibitem [{\citenamefont {Koenig}\ and\ \citenamefont
  {Smolin}(2014)}]{KoenS14}%
  \BibitemOpen
  \bibfield  {author} {\bibinfo {author} {\bibfnamefont {R.}~\bibnamefont
  {Koenig}}\ and\ \bibinfo {author} {\bibfnamefont {J.~A.}\ \bibnamefont
  {Smolin}},\ }\bibfield  {title} {\bibinfo {title} {How to efficiently select
  an arbitrary {C}lifford group element},\ }\href@noop {} {\bibfield  {journal}
  {\bibinfo  {journal} {J. Math. Phys.}\ }\textbf {\bibinfo {volume} {55}},\
  \bibinfo {pages} {122202} (\bibinfo {year} {2014})}\BibitemShut {NoStop}%
\bibitem [{\citenamefont {Heinrich}(2021)}]{Hein21}%
  \BibitemOpen
  \bibfield  {author} {\bibinfo {author} {\bibfnamefont {M.}~\bibnamefont
  {Heinrich}},\ }\emph {\bibinfo {title} {On Stabiliser Techniques and Their
  Application to Simulation and Certiﬁcation of Quantum Devices}},\
  \href@noop {} {Ph.D. thesis},\ \bibinfo  {school} {University of Cologne}
  (\bibinfo {year} {2021})\BibitemShut {NoStop}%
\bibitem [{\citenamefont {Bravyi}\ and\ \citenamefont
  {Gosset}(2016)}]{BravG16}%
  \BibitemOpen
  \bibfield  {author} {\bibinfo {author} {\bibfnamefont {S.}~\bibnamefont
  {Bravyi}}\ and\ \bibinfo {author} {\bibfnamefont {D.}~\bibnamefont
  {Gosset}},\ }\bibfield  {title} {\bibinfo {title} {Improved classical
  simulation of quantum circuits dominated by {C}lifford gates},\ }\href@noop
  {} {\bibfield  {journal} {\bibinfo  {journal} {Phys. Rev. Lett.}\ }\textbf
  {\bibinfo {volume} {116}},\ \bibinfo {pages} {250501} (\bibinfo {year}
  {2016})}\BibitemShut {NoStop}%
\bibitem [{\citenamefont {Pashayan}\ \emph {et~al.}(2022)\citenamefont
  {Pashayan}, \citenamefont {Reardon-Smith}, \citenamefont {Korzekwa},\ and\
  \citenamefont {Bartlett}}]{PashRKB22}%
  \BibitemOpen
  \bibfield  {author} {\bibinfo {author} {\bibfnamefont {H.}~\bibnamefont
  {Pashayan}}, \bibinfo {author} {\bibfnamefont {O.}~\bibnamefont
  {Reardon-Smith}}, \bibinfo {author} {\bibfnamefont {K.}~\bibnamefont
  {Korzekwa}},\ and\ \bibinfo {author} {\bibfnamefont {S.~D.}\ \bibnamefont
  {Bartlett}},\ }\bibfield  {title} {\bibinfo {title} {Fast estimation of
  outcome probabilities for quantum circuits},\ }\href@noop {} {\bibfield
  {journal} {\bibinfo  {journal} {PRX Quantum}\ }\textbf {\bibinfo {volume}
  {3}},\ \bibinfo {pages} {020361} (\bibinfo {year} {2022})}\BibitemShut
  {NoStop}%
\bibitem [{\citenamefont {Carrasquilla}\ \emph {et~al.}(2019)\citenamefont
  {Carrasquilla}, \citenamefont {Torlai}, \citenamefont {Melko},\ and\
  \citenamefont {Aolita}}]{CarrTMA19}%
  \BibitemOpen
  \bibfield  {author} {\bibinfo {author} {\bibfnamefont {J.}~\bibnamefont
  {Carrasquilla}}, \bibinfo {author} {\bibfnamefont {G.}~\bibnamefont
  {Torlai}}, \bibinfo {author} {\bibfnamefont {R.~G.}\ \bibnamefont {Melko}},\
  and\ \bibinfo {author} {\bibfnamefont {L.}~\bibnamefont {Aolita}},\
  }\bibfield  {title} {\bibinfo {title} {Reconstructing quantum states with
  generative models},\ }\href@noop {} {\bibfield  {journal} {\bibinfo
  {journal} {Nat. Mach. Intell.}\ }\textbf {\bibinfo {volume} {1}},\ \bibinfo
  {pages} {155} (\bibinfo {year} {2019})}\BibitemShut {NoStop}%
\end{thebibliography}%

\let\addcontentsline\oldaddcontentsline

\onecolumngrid	
\vspace{1cm}
\begin{center}
\textbf{\large End Matter}
\end{center}
\twocolumngrid	

\setcounter{equation}{0}
\renewcommand{\theequation}{A\arabic{equation}}
\emph{Appendix A: Heisenberg-Weyl group and Clifford group}---The phase operator $Z$ and cyclic-shift operator $X$ for a single qudit are defined as follows,
\begin{equation}\label{eq:ZX}
	Z|b\> = \omega_d^b|b\>, \quad X|b\> = |b+1\>,
\end{equation}
where $\omega_d=\rme^{\frac{2\pi \rmi}{d}}$ and the addition $b+1$ is modulo $d$. When $d$ is an odd prime, the qudit Heisenberg-Weyl (HW) group $\caW(d)$ is generated by $Z$ and $X$; when $d=2$, the HW group $\caW(d)$ is generated by $Z$, $X$, $\rmi I$ and reduces to the Pauli group.  The $n$-qudit HW group $\caW(n,d)$ is the tensor product of $n$ copies of $\caW(d)$, and its elements are called Weyl operators. A Weyl operator is trivial if it is proportional to the identity and nontrivial otherwise. It is $m$-local if it is a tensor product of $m$ nontrivial single-qudit Weyl operators and $(n-m)$ identity operators on $\caH_d$.

The single-qudit Clifford group $\Cl(d)$ is the normalizer of the HW group $\caW(d)$. The local Clifford group $\Cl(d)^{\otimes n}$ is the tensor product of $n$ copies of $\Cl(d)$. By contrast, the (global) Clifford group $\Cl(n,d)$ is the normalizer of $\caW(n,d)$. 
When $d=2$, the Clifford group $\Cl(n,d)$ is a 3-design \cite{Zhu17,Webb16},  and this is the only known infinite family of 3-designs based on discrete groups (up to global phase factors). Indeed, the performance guarantee of qubit shadow estimation is rooted in the 3-design property of $\Cl(n,2)$. When $d$ is an odd prime, by contrast, the Clifford group  $\Cl(n,d)$ is only a 2-design \cite{Zhu17,Webb16}, and little is known about the efficiency of qudit shadow estimation based on the Clifford group, which motivates the current study. 

\setcounter{equation}{0}
\renewcommand{\theequation}{B\arabic{equation}}
\emph{Appendix B: Magic gates}---Up to an irrelevant global phase factor, a qudit $T$ gate takes on the following form \cite{HowaV12,ZhuMY24},
\begin{equation}\label{eq:T_f}
	T:=\sum_{b \in \bbF_d} \tomega^{f(b )}|b \>\<b |, \quad
	\tomega:=
	\begin{cases}
		\omega_d \quad d \geq 5,\\
		\omega_9\quad d=3.
	\end{cases}
\end{equation}
When $d\geq 5$, $f$ is a cubic polynomial on $\bbF_d$; when $d=3$, $f(b )=c_3 b ^3 +3c_2 b ^2$ (with $c_2\in \bbF_3$, $c_3 \in \bbZ_9$ and $c_3\neq 0\mmod 3$) is a function from $\bbF_3$ to $\bbZ_9$.
The $T$ gate associated with the function $f(b )=b ^3$ is  referred to as the canonical $T$ gate.

When $d = 1 \!\mod 3$, each $T$ gate is determined by a cubic polynomial $f$ over $\bbF_d$. We can distinguish three types of $T$ gates depending on the cubic coefficient~$c$  ($c\neq 0$ by definition) of the underlying cubic polynomial $f$ \cite{ZhuMY24}. More precisely, $T$ gates are distinguished by cubic characters of cubic coefficients. Let $\nu$ be a given primitive element of the field $\bbF_d$; see \tref{tab:nu} in SM \sref{sup:Numerical} for a specific choice when $d<50$. Let $\eta_3$ be the 
cubic character on $\bbF_d^\times:=\bbF_d\setminus \{0\}$ defined as follows,
\begin{equation}\label{eq:eta3}
	\eta_3(\nu^j) := \omega_3^j, \quad j=0,1,\ldots, d-2, 
\end{equation}
where $\omega_3= \rme^{2\pi \rmi/3}$. 
The cubic character of $T$ is defined as the cubic character of the cubic coefficient~$c$, that is, $\eta_3(T)= \eta_3(c)$. Note that $\eta_3(T)=1$ iff the cubic coefficient $c$ is a cubic residue,  that is, a cube of another element in $\bbF_d^\times$. For example, we have $\eta_3(T)=1$ when $c=1$, which is the case for the canonical $T$ gate. 
It turns out different types of $T$ gates have different performances in shadow estimation; see SM \sref{sup:Numerical} and the companion paper \cite{ZhuMY24} for more details.  In other words, we can improve the performance by educated choices of $T$ gates. Incidentally, when  $d=3$ or $d=2\!\mod 3$, all $T$ gates have the same performance, so it is not necessary to distinguish different $T$ gates. 
 
\begin{figure}[t]
	\centering 
	\includegraphics[width=0.48\textwidth]{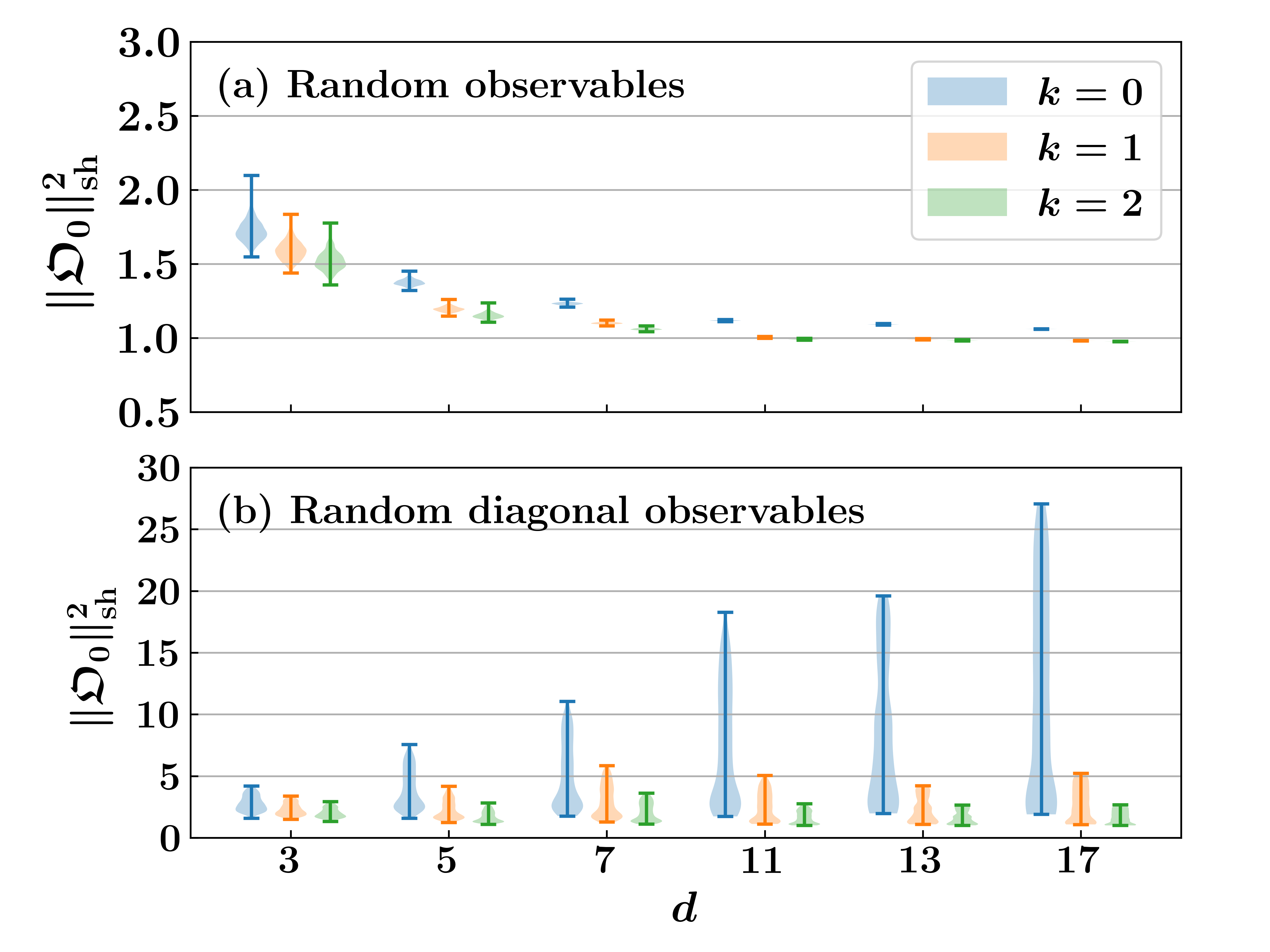}
	\caption{Distributions of the shadow norms of random two-qudit traceless  observables (a) and random two-qudit traceless diagonal  observables (b) that are normalized with respect to the Hilbert-Schmidt norm. The unitary ensemble $\caE$ is based on the Clifford circuit supplemented by $k$ canonical $T$ gates. 
	}
	\label{fig:shadownorm}
\end{figure}

\emph{Appendix C: Distributions of shadow norms of generic observables}---To show how tight the upper bounds in \thsref{thm:GlobalCl} and \ref{thm:Clifford+T} are for generic observables, we randomly sample 2000 two-qudit observables and two-qudit diagonal observables, respectively, which are traceless and normalized with respect to the Hilbert-Schmidt norm. The results on their shadow norms 
are shown in \fref{fig:shadownorm}. 
For generic observables, the shadow norms are usually much smaller than the upper bound in \eref{eq:GlobalLinear} and do not increase with the local dimension; for diagonal observables, by contrast, the upper bound in \eref{eq:GlobalStab} is nearly tight. 
The addition of $T$ gates can significantly reduce the shadow norms of random diagonal observables, although their utility for generic random observables seems limited because the Clifford circuit is already good enough.

\begin{figure}
	\centering 
	\includegraphics[width=0.48\textwidth]{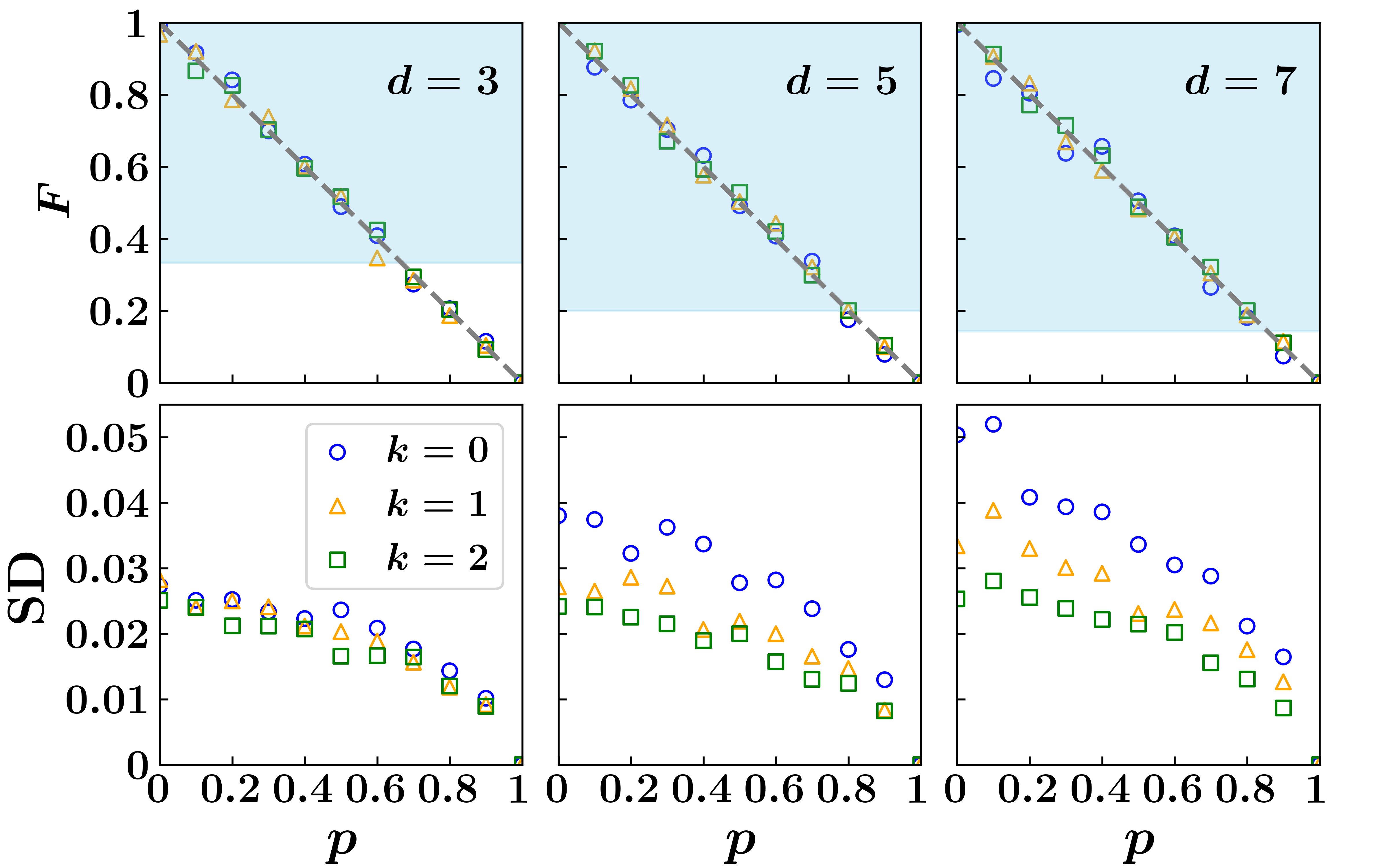}
	\caption{Fidelity estimated using shadow estimation based on global Clifford measurements and the corresponding standard deviation (SD). The target state is the GHZ state in \eref{eq:GHZ} with $n=100$, and the input state is the depolarized GHZ state in Appendix D. The unitary ensemble is based on the Clifford circuit supplemented by $k$ canonical $T$ gates, and the number $N$ of samples is $5000$. Each data point for the fidelity is the average over 100 runs, which also determine the standard deviation.	
	In the upper row, the gray lines show the true fidelity $F=1-p+p/D$, and the light blue areas means the states  are  GME.}
	\label{fig:fide}
\end{figure}

\emph{Appendix D: Entanglement detection via fidelity estimation}---Another important application of shadow estimation is entanglement detection. Suppose the input state is a depolarized GHZ state, that is, $\rho = p\bbI/D+(1-p)|\GHZ(n,d)\>\<\GHZ(n,d)|$, where $0\leq p\leq 1$. A quantum state is genuinely multipartite entangled (GME) if its fidelity with some multipartite entangled state $|\Psi\>$ is larger than the maximum Schmidt coefficient of $|\Psi\>$ maximized over all bipartitions \cite{GuhnT09}. In the case under consideration  we have $|\Psi\>=|\GHZ(n,d)\>$, so the state $\rho$ is GME if $\tr(\rho |\GHZ(n,d)\>\<\GHZ(n,d)|)>1/d$. Using shadow estimation based on the global Clifford group, we can estimate the fidelity between $\rho$ and $|\GHZ(n,d)\>$ and thereby detect genuine multipartite entanglement (GME). \Fref{fig:fide} shows that our protocols based on  $5000$ samples give quite good estimators. Moreover, the fluctuation (standard deviation) gets smaller as more $T$ gates are added to the Clifford circuit. Therefore, the GME of the depolarized GHZ state can be detected efficiently.

\emph{Appendix E: Algorithms for numerical simulation}---For ($T$-modified) stabilizer states, we develop a complete simulation algorithm for qudit shadow estimation based on the Clifford group supplemented by $T$ gates.
When no $T$ gates are involved, the data acquisition phase is essentially a random Clifford circuit. To simulate such a circuit, in SM \sref{sup:Simulation} we generalize the tableau method proposed in \rcite{AaroG04} to the qudit setting and combine it with the Clifford sampling algorithm proposed in \rscite{KoenS14,Hein21}. Since this simulation method is based on the stabilizer formalism, it is highly efficient in terms of both time and space complexities, which enables us to simulate systems consisting of hundreds of qudits.

To deal with the addition of $T$ gates, we then generalize the gadgetization method for insertion of magic gates in the Clifford circuit \cite{BravG16,PashRKB22} to the qudit setting (see SM \sref{sup:Simulation} for more details). Finally, we are able to simulate a single shot in shadow estimation on a classical computer with an approximate computational cost of $\caO\bigl((n+t)^3+td^{t+1}\bigr)$, where $t$ is the number of magic gates involved in state preparation and measurements.

As for the ground states of the AKLT models, since they allow an efficient representation using matrix product states, we adopt the tensor-network paradigm described in \rcite{CarrTMA19} to simulate random local Clifford measurements on 1D AKLT chains.

\bigskip
\emph{Appendix F: Sampling cost of random Clifford unitaries}---In our numerical simulation, we adopt the protocol proposed in \rscite{KoenS14,Hein21} to sample unitaries from $\Cl(n,d)$; the (classical) computational cost is $\caO(n^3)$.
However, in the experiment setting, it is more reasonable to consider sampling cost as the number of elementary gates required to implement a random Clifford unitary. To this end, we may take the elementary gate set presented in \rcite{Hein21} (see also SM \sref{sup:QuditStab}) and generalize the method in \rcite{Berg21} to the qudit case. A quick analysis shows that this procedure can output a random Clifford unitary in the form of a quantum circuit with $\caO(n^2)$ elementary gates, although a small overhead compared with the qubit case is induced due to a higher local dimension.

\clearpage
\newpage

\setcounter{equation}{0}
\setcounter{figure}{0}
\setcounter{table}{0}
\setcounter{theorem}{0}
\setcounter{lemma}{0}
\setcounter{section}{0}
\setcounter{page}{1}

\renewcommand{\theequation}{S\arabic{equation}}
\renewcommand{\thefigure}{S\arabic{figure}}
\renewcommand{\thetable}{S\arabic{table}}
\renewcommand{\thetheorem}{S\arabic{theorem}}
\renewcommand{\thelemma}{S\arabic{lemma}}
\renewcommand{\theproposition}{S\arabic{proposition}}
\renewcommand{\thecorollary}{S\arabic{corollary}}
\renewcommand{\thesection}{S\arabic{section}}

\onecolumngrid	
\begin{center}
	\textbf{\large Qudit Shadow Estimation Based on the Clifford Group and the Power of a Single Magic Gate: Supplemental Material}
\end{center}

\tableofcontents

\bigskip

In this Supplemental Material (SM), we prove \eref{eq:meabudget_new}, \pref{pro:LocalCl}, and \thref{thm:LocalCl} and sketch the proofs of \thsref{thm:GlobalCl}-\ref{thm:Clifford+T} based on our companion paper \cite{ZhuMY24}. In addition, we develop a complete algorithm for simulating qudit shadow estimation based on the Clifford group supplemented by $T$ gates and provide additional numerical results. Background on the qudit stabilizer formalism is also introduced for the convenience of the readers. Throughout this SM the Hilbert space for a single qudit is denoted by $\caH_d$.

\section{Proof of \eref{eq:meabudget_new}}\label{sup:empiricalMeans}
The MSE of the estimator $\hat{o}$ only depends on the traceless part of $\Ob$. To prove \eref{eq:meabudget_new} we can assume that $\Ob$ is traceless without loss of generality; then $\Ob_0=\Ob$.  
Suppose in the $j$th run the random unitary is $U$ and the measurement outcome is $\bfb\in \bbF_d^n$. Then the corresponding single-shot estimator $\hat{o}_j$ reads
\begin{equation}
	\ho_j := \tr\left(\Ob \hrho_j\right) =\tr\left(\Ob_0 \hrho_j\right) = \tr\left[\Ob_0 \caM^{-1}(U^\dag |\bfb\>\<\bfb|U)\right] = \tr\left[\caM^{-1}(\Ob_0) U^\dag |\bfb\>\<\bfb|U \right],
\end{equation}
where the last equality holds because the reconstruction map $\caM^{-1}$ is self-adjoint.
The expectation value of $\hat{o}_j$ reads $\bbE [\ho_j]=o= \tr(\Ob\rho)=\tr(\Ob_0\rho)$. So  the variance of $\ho_j$ can be upper-bounded as follows,
\begin{align}
	\Var[\ho_j] & = \bbE\left[|\ho_j-o|^2\right] \leq \bbE\left[|\ho_j|^2\right] = \bbE\left[\left|\<\bfb|U\caM^{-1}(\Ob_0) U^\dag |\bfb\>\right|^2 \right] \nonumber \\
	& = \bbE_{U \sim \caE} \sum_{\bfb\in \bbF_d^n} \<\bfb|U\rho U^\dag |\bfb\> \left|\<\bfb|U\caM^{-1}(\Ob_0) U^\dag |\bfb\>\right|^2 \leq \|\Ob_0\|_\sh^2,
\end{align}
where the last inequality follows from the definition of the (squared) shadow norm in \eref{eq:sndef}. Note that this derivation is applicable even if $\Ob$ is not Hermitian. 

By definition the empirical mean $\hat{o}$ is the average of $N$ independent single-shot estimators. Therefore,
\begin{equation}
	\<\epsilon^2\>=\Var[\ho] = \frac{1}{N^2} \sum_{j=1}^N \Var[\ho_j]\leq \frac{\|\Ob_0\|^2_\sh}{N},
\end{equation}
which confirms \eref{eq:meabudget_new}.

\section{Qudit stabilizer formalism}\label{sup:QuditStab}
In this section, we provide a bit more details on the qudit stabilizer formalism, assuming that the local dimension $d$ is an odd prime \cite{Gros06}.  

\subsection{Heisenberg-Weyl group and Clifford group}
Recall that the HW group $\caW(d)$ for a single qudit is generated by the two operators $Z$ and $X$ defined in \eref{eq:ZX}, that is, 
\begin{equation}
	\caW(d):=\langle Z,X \rangle = \bigl\{\omega_d^a Z^b X^c \,|\, a,b,c \in \bbF_d\bigr\},
\end{equation}
where  $\omega_d =\rme^{\frac{2\pi \rmi}{d}}$. Up to phase factors, the  elements in $\caW(d)$, known as  Weyl operators,  can be labeled by vectors in $\bbF_d^2$ \cite{Gros06}. Given $u=(p,q)^\top\in \bbF_d^2$, define
\begin{equation}
	W_u=W(p,q) := \chi(-pq/2)Z^{p}X^{q},
\end{equation}
where $\chi(r) = \omega_d^r$ for $r\in \bbF_d$, and the choice of the phase factor $\chi(-pq/2)$  guarantees that $W_u^\dag=W_{-u}$. Then $\caW(d) = \bigl\{\omega_d^a W_u \, |\, a\in \bbF_d, u \in \bbF_d^{2}\bigr\}$. By definition $W_u$ and $W_v$ for $u,v\in \bbF_d^2$ commute with each other iff $u$ and $v$ are linearly dependent. 

The $n$-qudit HW group $\caW(n,d)$ is the tensor product of $n$ copies of $\caW(d)$. Up to phase factors, the elements in $\caW(n,d)$, known as $n$-qudit Weyl operators, can be labeled by column vectors in $\bbF_d^{2n}$. Given $\bfu=(u^z_1, u^z_2, \ldots, u^z_n, u^x_1,u^x_2,\ldots, u^x_n)^\top\in \bbF_d^{2n}$, define
\begin{equation}\label{eq:Wbfu}
	W_\bfu := W_{u_1} \otimes W_{u_2} \otimes \cdots \otimes W_{u_n}, 
\end{equation}
where $u_j=(u^z_j,u^x_j)^\top$  for $j=1,2,\ldots, n$. Then $\caW(n,d) = \bigl\{\omega_d^a W_\bfu \, |\, a\in \bbF_d, \bfu \in \bbF_d^{2n}\bigr\}$. The weight of the Weyl operator $\omega_d^a W_\bfu$ is defined as the number of tensor factors in \eref{eq:Wbfu} that are not proportional to the identity operator $I$; the Weyl operator is $m$-local if it has weight $m$. Given $a,b\in \bbF_d$ and $\bfu,\bfv\in \bbF_d^{2n}$, the two Weyl operators $\omega_d^a W_\bfu$ and $\omega_d^b W_\bfv$  are locally commutative if $W_{u_j}$ and $W_{v_j}$ are commutative for $j=1,2,\ldots, n$, where  $v_j=(v_j^z,v_j^x)^\top$. 

To better understand the properties of the HW group, we need to introduce a symplectic structure on $\bbF_d^{2n}$. The symplectic product of $\bfu$ and $\bfv$, denoted by $[\bfu,\bfv]$, is defined as follows,
\begin{equation}\label{symprod}
	[\bfu,\bfv]:=\bfu^\top \Omega \bfv, \quad \Omega = \begin{pmatrix}
		0_{n} & \mathds{1}_{n} \\
		-\mathds{1}_{n} & 0_{n}
	\end{pmatrix}.
\end{equation}
The composition and commutation relations of   the two Weyl operators  $W_\bfu$ and $W_\bfv$
are determined by this symplectic product, 
\begin{align}
	W_\bfu W_\bfv=\omega_d^{[\bfu,\bfv]/2} W_{\bfu+\bfv},\quad  W_\bfu W_\bfv= \omega_d^{[\bfu,\bfv]} W_\bfv W_\bfu,\quad \bfu,\bfv\in \bbF_d^{2n}.
\end{align}
In addition, this symplectic product defines the symplectic group $\Sp(2n,d)$, which is composed of all $2n\times 2n$ matrices over $\bbF_d$ that preserve the symplectic product, that is,
\begin{equation}
	\Sp(2n,d):=\left\{M\in \bbF_d^{2n\times 2n} \ | \ M^\top \Omega M=\Omega\right\}. 
\end{equation}
Elements in $\Sp(2n,d)$ are called symplectic matrices or symplectic transformations. If $M$ is a symplectic matrix, then  $[M\bfu,M\bfv]=[\bfu,\bfv]$ for all  $\bfu,\bfv\in \bbF_d^{2n}$.

To understand the structure of the symplectic group $\Sp(2n,d)$. It is instructive to consider  the following nested subgroup chain of symplectic groups,
\begin{equation}
	G_1 \subset G_2 \subset \cdots \subset G_{n-1} \subset G_n=\Sp(2n,d),
\end{equation}
where $G_m$  is a shorthand for $\Sp(2m,d)$ for $m=1,2,\ldots, n$, and the embedding $G_{m-1} \mapsto G_m$ is realized as follows,
\begin{equation}
	M_{m-1} \mapsto \begin{pmatrix}
		1 & 0\\
		0 & 1
	\end{pmatrix} \oplus M_{m-1},\quad m=2,3,\ldots,n,\quad  M_{m-1} \in G_{m-1}. 
\end{equation} 
Moreover,  the map 
\begin{equation}\label{eq:subgroup}
	\begin{aligned}
		G_n/G_{n-1} \times G_{n-1}/G_{n-2} \times \cdots \times G_2/G_1 \times G_1 & \to \Sp(2n,d) \\
		\left([M_n],[M_{n-1}],\dots,[M_2],M_1\right) & \to M_n M_{n-1} \cdots M_2M_1
	\end{aligned}
\end{equation}
is a bijection. Notably, each symplectic transformation $M \in \Sp(2n,d)$ has a unique representation of the form $M_n M_{n-1} \cdots M_1$ with $[M_m]\in G_m/G_{m-1}$ for $m=2,3,\dots ,n$ and $M_1 \in G_1$. This fact would come useful when sampling a random symplectic matrix or constructing a desired symplectic transformation.

The Clifford group $\Cl(n,d)$ is the normalizer of the HW group $\caW(n,d)$, that is,
\begin{equation}
	\Cl(n, d) := \bigl\{ U \in \rmU\bigl(\caH_d^{\otimes n}\bigr) \,|\, U\caW(n,d)U^\dag = \caW(n,d)\bigr\},
\end{equation}
where $\rmU(\caH_d^{\otimes n})$ denotes the group of unitary operators on $\caH_d^{\otimes n}$. Up to overall phase factors, the Clifford group $\Cl(n,d)$ can be generated by the following single- and two-qudit gates (see Sec.~4.1.3 in \rcite{Hein21}):
\begin{equation}\label{eq:Clgen}
	\begin{aligned}
		F  :=\:& \frac{1}{\sqrt{d}} \sum_{a,b \in \bbF_d} \chi(ab) |a\>\<b|,  & U(\nu)  :=\:& \sum_{a\in\bbF_d} |\nu a\>\<a|, \\
		S(\nu)  :=\:&\sum_{a\in \bbF_d} \chi\bigl(2^{-1}\nu a^2\bigr)|a\>\<a|, &
		CX :=\:& \sum_{(a,b)\in \bbF_d^2} |a,a+b\>\<a,b|, \\
		Z  =\:&\sum_{a\in \bbF_d} \chi(a) |a\>\<a|, &
		X  =\:& \sum_{a\in\bbF_d} |a+1\>\<a|,
	\end{aligned}
\end{equation}
where $\nu\in \bbF_d^\times=\bbF_d\setminus \{0\}$ is a primitive element.
Since the commutation relations of Weyl operators are invariant under Clifford transformations by definition, every Clifford unitary induces a symplectic transformation on the  space $\bbF_d^{2n}$. Conversely,  given any symplectic transformation $M \in \Sp(2n,d)$, there exists a unitary operator $\mu(M)$ such that \cite{Gros06} 
\begin{equation}
	\mu(M)W_\bfu \mu(M)^\dag = W_{M\bfu} \quad \forall \bfu\in \bbF_d^{2n},
\end{equation}
where $\mu$ is called the Weil or metaplectic representation of the symplectic group \cite{Foll89,Weil64}. Note that  this conclusion relies on the assumption that $d$ is an odd prime and  is not guaranteed when $d=2$. Up to an overall  phase factor, any Clifford unitary $C \in \Cl(n,d)$ has the form
\begin{equation}\label{eq:CaM}
	C = W_\bfg \mu(M), \quad \bfg\in \bbF_d^{2n}, \quad M\in \Sp(2n,d).  
\end{equation}
In addition, we have
\begin{equation}\label{eq:CWC}
	C W_\bfu C^\dag = W_\bfg \mu(M) W_\bfu \mu(M)^\dag W_\bfg^\dag
	= W_\bfg  W_{M\bfu}  W_\bfg^\dag 
	= \chi([\bfg,M\bfu]) W_{M\bfu}.
\end{equation}

\subsection{Stabilizer codes and stabilizer states}
A subgroup $\caS$ of the HW group $\caW(n, d)$ is a \emph{stabilizer group} if it  does not contain the operator $\omega_d\bbI$, in which case $\caS$ is automatically an elementary Abelian group and $|\caS|\leq d^n$. Then the common eigenspace of $\caS$ with eigenvalue 1 defines a stabilizer code, denoted by $\caC_\caS$ henceforth. The projector onto 	$\caC_{\caS}$, known as a \emph{stabilizer projector}, can be expressed as 
\begin{equation}
	P_{\caS} = \frac{1}{|\caS|}\sum_{U\in\caS}U.
\end{equation}
The dimension of the stabilizer code $\caC_\caS$ reads 
\begin{align}
	\dim \caC_{\caS}=\rank P_{\caS}=\tr (P_{\caS})=\frac{d^n}{|\caS|}. 
\end{align}
If $\caS$ can be generated by $m$ Weyl operators, but cannot be generated by $m-1$ Weyl operators, then $|\caS|=d^m$ and $\dim \caC_{\caS}=d^{n-m}$. 

A stabilizer group $\caS$ in $\caW(n, d)$ is maximal if it has the maximum order of $d^n$. In this case, the stabilizer code $\caC_{\caS}$ has dimension 1 and is determined by  a normalized state, called  a \emph{stabilizer state}. A stabilizer state 
is uniquely determined by its stabilizer group, which is in turn determined by a  generating set composed of $n$ Weyl operators. 
The set of stabilizer states is denoted by $\Stab(n,d)$, which is also regarded as a normalized ensemble.
All stabilizer states in $\Stab(n,d)$ form one orbit under the action of the Clifford group $\Cl(n,d)$.

\section{Proofs of \pref{pro:LocalCl} and \thref{thm:LocalCl}}\label{sup:LocalCl}
In this section, we prove  \pref{pro:LocalCl} and \thref{thm:LocalCl}, which clarify the shadow norms associated with local Clifford measurements. To this end, we need to introduce some auxiliary notation and results by way of preparation.  

\subsection{Auxiliary results}
Let $\bfu,\bfv$ be two vectors in $\bbF_d^{2n}$. 
The weight $|\bfu|$ of $\bfu$ is defined as $|\bfu|:=\left|\{j\,|\, \bigl(u^z_j,u^x_j\bigr) \neq (0,0) \}\right|$. As a generalization, we define
\begin{equation}
	\begin{aligned}
		|\bfu \vee \bfv|& := \bigl|\bigl\{j\,|\, \bigl(u^z_j,u^x_j\bigr) \neq (0,0) \text{ or } \bigl(v^z_j,v^x_j\bigr) \neq (0,0)\bigr\}\bigr|, \\
		|\bfu \wedge \bfv|& := \bigl|\bigl\{j\,|\, \bigl(u^z_j,u^x_j\bigr) \neq (0,0) \text{ and } \bigl(v^z_j,v^x_j\bigr) \neq (0,0)\bigr\}\bigr|.
	\end{aligned}
\end{equation}
Evidently, we have
\begin{equation}
	|\bfu \vee \bfv|+|\bfu \wedge \bfv|=|\bfu|+|\bfv|.
\end{equation}
Next, we introduce two relations on $\bbF_d^{2n}$.
We write $\bfu  \triangleright \bfv$ if $u_j$ is proportional to (a multiple of) $v_j$ for $j=1,2,\ldots,n$ and write $\bfu \bowtie \bfv$  if $u_j, v_j$ are linearly dependent for $j=1,2,\ldots,n$.  In the latter case,  we can deduce the following relations,
\begin{align}
	|\bfu \vee \bfv|=\min \bigl\{|\bfs| :  \bfs \in \bbF_d^{2n}, \bfu \triangleright \bfs, \bfv \triangleright \bfs  \bigr\},\quad   |\bfu \wedge \bfv|=\max \bigl\{|\bfs| :  \bfs \in \bbF_d^{2n}, \bfs \triangleright \bfu, \bfs \triangleright \bfv  \bigr\}.
\end{align}
If $\bfu  \triangleright \bfv$, then $\bfu \bowtie \bfv$, but the converse is not guaranteed in general. Given $a,b\in \bbF_d$,  the Weyl operator  $\omega_d^a W_\bfu$ is $m$-local iff $|\bfu|=m$; the two Weyl operators  $\omega_d^a W_\bfu$ and $\omega_d^b W_\bfv$ are locally commutative iff $\bfu \bowtie \bfv$. 
\begin{lemma}\label{lem:WuWv}
	Suppose $d$ is an odd prime, $\bfb\in \bbF_d^n$,  and $\bfu,\bfv\in \bbF_d^{2n}$.  Then we have
	\begin{equation}\label{eq:WuWv1}
		\bbE_{U\sim \Cl(d)^{\otimes n}} \bigl(U^\dag|\bfb\> \<\bfb|U \<\bfb|U W_\bfu U^\dag|\bfb\> \<\bfb|U W_\bfv U^\dag|\bfb\>^*\bigr) = 
		\begin{cases}
			d^{-n}(d+1)^{-|\bfu\vee\bfv|} W_{\bfu-\bfv} \quad & \text{if  } \bfu \bowtie \bfv,\\
			0 \quad & \text{otherwise}.
		\end{cases}
	\end{equation}
\end{lemma}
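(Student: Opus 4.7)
\textbf{The plan} is to exploit the tensor product structure of the local Clifford ensemble to reduce \eqref{eq:WuWv1} to a single-qudit identity, and then handle the single-qudit average by expanding each stabilizer state in the Weyl basis and invoking additive character orthogonality. First I note that $U=\bigotimes_j U_j$ with i.i.d.\ $U_j\sim\Cl(d)$, and that $|\bfb\>\<\bfb|$, $W_\bfu$, $W_\bfv$ all factor as tensor products over sites, so the $n$-qudit expectation in \eqref{eq:WuWv1} factors into $n$ copies of a single-qudit expectation. It therefore suffices to prove the $n=1$ version
\begin{equation}\label{eq:singleqd}
\bbE_{U\sim \Cl(d)}\bigl[U^\dag|b\>\<b|U\,\<b|UW_uU^\dag|b\>\,\<b|UW_vU^\dag|b\>^*\bigr] = \begin{cases} d^{-1}(d+1)^{-|u\vee v|}W_{u-v} & u\bowtie v,\\ 0 & u\not\bowtie v,\end{cases}
\end{equation}
for every $b\in\bbF_d$ and all $u,v\in\bbF_d^2$ (where $|u\vee v|\in\{0,1\}$). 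Multiplying site by site then reproduces the prefactor $d^{-n}(d+1)^{-|\bfu\vee\bfv|}$ (the exponent counts exactly the sites where $u_j\vee v_j\ne 0$), reproduces $W_{\bfu-\bfv}$, and kills the average whenever some site has $u_j\not\bowtie v_j$.

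\textbf{For the single-qudit identity}, I would recast the Clifford average as the uniform mean over the orbit $\Stab(1,d)$ of $|b\>$, which for $d$ an odd prime consists of all $d(d+1)$ single-qudit stabilizer states. Each such state is indexed by a pair $(\ell,\chi_\ell)$, where $\ell$ is one of the $d+1$ one-dimensional subspaces of $\bbF_d^2$ and $\chi_\ell$ is one of the $d$ additive characters of $(\ell,+)\simeq\bbF_d$. Since $[s,t]=0$ for $s,t\in\ell$, the Weyl operators along $\ell$ commute, and the stabilizer projector admits the clean expansion $|\phi_{\ell,\chi_\ell}\>\<\phi_{\ell,\chi_\ell}|=d^{-1}\sum_{s\in\ell}\chi_\ell(s)W_s$. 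Combined with the Weyl trace identity $\tr(W_sW_r)=d\,\delta_{s+r,0}$, this yields $\tr(|\phi_{\ell,\chi_\ell}\>\<\phi_{\ell,\chi_\ell}|\,W_u)=\chi_\ell(-u)\,\mathbf{1}[u\in\ell]$, and, using $W_v^\dag=W_{-v}$ together with hermiticity, $\<b|UW_vU^\dag|b\>^*=\chi_\ell(v)\,\mathbf{1}[v\in\ell]$.

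\textbf{The final step} substitutes these expansions into the orbit average. The three phase factors collect into $\chi_\ell(s+v-u)$; summing over the $d$ characters of $\ell$ collapses this to $d\,\mathbf{1}[s=u-v]$ by additive character orthogonality, leaving $W_{u-v}$ multiplied by a sum over lines. That line sum is simply the number $N(u,v)$ of one-dimensional subspaces containing both $u$ and $v$: a straightforward count gives $N=d+1$ when $u=v=0$, $N=1$ when $(u,v)\ne(0,0)$ and $u\bowtie v$, and $N=0$ otherwise. Dividing by the orbit size $d(d+1)$ then yields exactly the three cases of \eqref{eq:singleqd}. \textbf{The main difficulty} I anticipate is the phase bookkeeping: one must verify that the Clifford orbit of $|b\>$ realizes each $(\ell,\chi_\ell)$ with equal multiplicity, and that the complex conjugation in $\<b|UW_vU^\dag|b\>^*$ aligns with $W_v^\dag=W_{-v}$ so that the characters really do combine additively along $\ell$. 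Once these conventions are pinned down, what remains is routine character theory and the enumeration of lines in $\bbF_d^2$.
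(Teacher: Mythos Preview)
Your proposal is correct and, after the shared reduction to $n=1$ and the recasting of the Clifford average as the uniform mean over $\Stab(1,d)$, it takes a genuinely different route from the paper. The paper argues by Clifford covariance: when $u\bowtie v$ (not both zero) it picks $C\in\Cl(d)$ with $C^\dag W_u C=\omega_d^{r_1}Z^{s_1}$ and $C^\dag W_v C=\omega_d^{r_2}Z^{s_2}$, uses that $C$ permutes $\Stab(d)$ to reduce the sum to the computational basis, and reads off $W_{u-v}$ directly; the case $u\not\bowtie v$ is handled separately via noncommutativity. You instead parametrize $\Stab(1,d)$ by pairs (line $\ell\subset\bbF_d^2$, additive character of $\ell$), expand each projector in the Weyl basis, and finish with character orthogonality plus the count $N(u,v)$ of lines through $u$ and $v$. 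Your approach is more computational but has the pleasant feature of treating all three cases uniformly: the vanishing for $u\not\bowtie v$ drops out of $N(u,v)=0$ rather than a separate commutativity argument. The ``main difficulty'' you flag---that the orbit of $|b\>$ hits every $(\ell,\chi_\ell)$ with equal multiplicity---is exactly the transitivity of $\Cl(d)$ on $\Stab(1,d)$ that the paper also invokes (implicitly) when passing to \eref{eq:WuWv2}, and is routine; your phase bookkeeping via $W_v^\dag=W_{-v}$ and $\overline{\chi_\ell(-v)}=\chi_\ell(v)$ is correct.
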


\begin{proof}
	Thanks to the tensor structure of $U$, $W_\bfu$, $W_\bfv$, and $|\bfb\>$, it suffices to prove	\eref{eq:WuWv1}	in  the case $n = 1$. Then we can rewrite the LHS of \eref{eq:WuWv1} as follows,
	\begin{equation}\label{eq:WuWv2}
		\bbE_{|\Psi\>\sim\Stab(d)}\bigl( |\Psi\>\<\Psi| \<\Psi| W_\bfu |\Psi\> \<\Psi| W_\bfv |\Psi\>^*\bigr)
		= \frac{1}{d(d+1)} \sum_{|\Psi\> \in \Stab(d)} \bigl(|\Psi\>\<\Psi| \<\Psi| W_\bfu |\Psi\> \<\Psi| W_\bfv |\Psi\>^*\bigr),
	\end{equation}	
	where $\Stab(d)=\Stab(1,d)$ is the set of single-qudit stabilizer states. 
	
	If the condition $\bfu \bowtie \bfv$ does not hold, then $W_\bfu$ and $W_\bfv$ do not commute (given that $n=1$ by assumption) and cannot belong to a same stabilizer group. Therefore, either $\<\Psi| W_\bfu |\Psi\>=0$ or $\<\Psi| W_\bfv |\Psi\>=0$, that is, $\<\Psi| W_\bfu |\Psi\>\<\Psi| W_\bfv |\Psi\>^*=0$, which implies \eref{eq:WuWv1}.

	In the rest of this proof we assume that $\bfu \bowtie \bfv$, which means  $W_\bfu$ and $W_\bfv$ commute with each other. If $W_\bfu=W_\bfv=\bbI$, then \eref{eq:WuWv1} holds because $\bbE_{|\Psi\>\sim\Stab(d)} |\Psi\>\<\Psi|=\bbI/d$. 
	
	Next, suppose $W_\bfu$ and $W_\bfv$ commute, and at least one of them is not proportional to the identity operator. Then we can find a Clifford unitary $C\in \Cl(d)$ such that
	\begin{equation}
		C^{\dag}W_\bfu C = \omega_d^{r_1}Z^{s_1}, \quad C^{\dag}W_\bfv C = \omega_d^{r_2}Z^{s_2},
	\end{equation}
	where $r_1,s_1,r_2,s_2 \in \bbF_d$. Therefore,
	\begin{align}
		&\sum_{|\Psi\> \in \Stab(d)} \bigl(|\Psi\>\<\Psi| \<\Psi| W_\bfu |\Psi\> \<\Psi| W_\bfv |\Psi\>^*\bigr)=
		\sum_{|\Psi\> \in \Stab(d)}\bigl( |\Psi\>\<\Psi| \<\Psi| C\omega_d^{r_1}Z^{s_1}C^\dag |\Psi\> \<\Psi|  C\omega_d^{r_2}Z^{s_2}C^\dag |\Psi\>^*\bigr)\nonumber\\
		&=\sum_{|\Psi\> \in \Stab(d)} \bigl(C|\Psi\>\<\Psi| C^\dag \<\Psi| \omega_d^{r_1}Z^{s_1} |\Psi\> \<\Psi| \omega_d^{r_2}Z^{s_2}|\Psi\>^*\bigr)=\sum_{b\in \bbF_d} \bigl(C|b\>\<b| C^\dag \<b| \omega_d^{r_1}Z^{s_1} |b\> \<b| \omega_d^{r_2}Z^{s_2}|b\>^*\bigr)\nonumber\\
		&=C\omega_d^{r_1}Z^{s_1}\omega_d^{-r_2}Z^{-s_2} C^\dag=W_{\bfu-\bfv}. 
	\end{align}	
	Together with \eref{eq:WuWv2}, this equation implies \eref{eq:WuWv1} and completes the proof of \lref{lem:WuWv}. 
\end{proof}

For local Clifford measurements, the reconstruction map $\caM^{-1}$ in \eref{eq:reconlocal}  can be expressed as
\begin{equation}\label{eq:LocalReconstruct}
	\caM^{-1}= \left(\caD_{1/(d+1)}^{-1}\right)^{\otimes n},
\end{equation}
where $\caD_y^{-1}(\cdot)$ is the inverse of the single-qudit depolarizing channel $\caD_y(\cdot)=y\cdot+ (1-y)\frac{\tr(\cdot)}{d}I$. Note that $\caD_{1/(d+1)}^{-1}(\Ob)=(d+1)\Ob-\tr (\Ob) I$ for any linear operator $\Ob$ on $\caH_d$.  If $\Ob=W_u$ is a single-qudit Weyl operator with $u\in \bbF_d^2$, then 
\begin{equation}\label{eq:LocalReconstructQudit1}
	\caD^{-1}_{\sfrac{1}{(d\!+\!1)}}(W_{u}) =
	\begin{cases}
		W_u &\quad \text{if  } W_u \propto I,\\
		(d+1)W_u &\quad \text{otherwise}.
	\end{cases}
\end{equation}

\subsection{Proof of \pref{pro:LocalCl}}
\begin{proof}[Proof of \pref{pro:LocalCl}]	
	If $d=2$, then \pref{pro:LocalCl} holds according to Lemma~3 of \rcite{HuangKP20}.	
	
	Next, suppose $d$ is an odd prime and $\Ob$ is an $m$-local Weyl operator. Then $\Ob$ has the form $\Ob=\rme^{\rmi \phi}W_\bfu$, where $\phi$ is a real phase and $\bfu\in \bbF_d^{2n}$ with $|\bfu|=m$.	By virtue of \eqsref{eq:sndef}{eq:LocalReconstruct} we can deduce that
	\begin{align}\label{eq:recon weyl}
		\|\Ob\|_\sh^2 &=\|W_\bfu\|_\sh^2  = \max_{\sigma} \sum_{\bfb\in \bbF_d^n}\bbE_{U \sim \Cl(d)^{\otimes n}} \Biggl[\<\bfb|U\sigma U^{\dag}|\bfb\> \biggl|\<\bfb|U\Bigl(\caD^{-1}_{1/(d+1)}\Bigr)^{\otimes n}(W_\bfu)U^{\dag} |\bfb\>\biggr|^2\Biggr] \nonumber\\	
		& = (d+1)^{2m} \max_{\sigma} \sum_{\bfb\in \bbF_d^n}\bbE_{U \sim \Cl(d)^{\otimes n}}\Bigl[ \<\bfb|U\sigma U^{\dag}|\bfb\> \bigl|\<\bfb|UW_\bfu U^{\dag} |\bfb\>\bigr|^2\Bigr] \nonumber\\
		& = d^n(d+1)^{2m} \times \frac{1}{d^n(d+1)^m} \max_{\sigma} \tr\left(\sigma W_{\bfu-\bfu} \right) = (d+1)^m,
	\end{align}
	which confirms \pref{pro:LocalCl}.
	Here the third equality follows from \eref{eq:LocalReconstructQudit1} and the fact that $W_\bfu$ is $m$-local,  the fourth equality follows from \lref{lem:WuWv}, and the last equality holds because $W_{\bfu-\bfu}=\bbI$. 
\end{proof}

\subsection{Proof of \thref{thm:LocalCl}}\label{sup:SnLocal}
\begin{proof}[Proof of \thref{thm:LocalCl}]
	If  $d=2$, then  \eref{eq:local linear} follows from  Proposition 3 of \rcite{HuangKP20}, so it remains to  consider the case in which $d$ is an odd prime.
	
	We expand $\tOb$ in the Weyl basis as $\tOb = \sum_{\bfu \in \bbF_d^{2m}} \alpha_\bfu W_\bfu$. 
	According to \eref{eq:LocalReconstructQudit1}, we have
	\begin{equation}\label{eq:recon_m}
		\Bigl(\caD_{\sfrac{1}{(d\!\!+\!\!1)}}^{-1}\Bigr)^{\otimes m} (\tOb) = \sum_{\bfu \in \bbF_d^{2m}} (d+1)^{|\bfu|} \alpha_\bfu W_\bfu.
	\end{equation}
	Define $\scrV_1^*\subset \bbF_d^2$ and $\scrV_m^*\subset \bbF_d^{2m}$ as follows,
	\begin{equation}
		\begin{aligned}
			\scrV_1^*&:=\bigl\{(0,1)^\top,(1,1)^\top,\ldots,(d-1,1)^\top, (1,0)^\top\bigr\},\\
			\scrV_m^* &:= \bigl\{(s^z_1,s^z_2,\ldots,s^z_m,s^x_1,s^x_2,\ldots,s^x_m)^\top\,|\, (s^z_j,s^x_j)^\top \in \scrV_1^*  \  \forall 1\leq j \leq m\bigr\}.
		\end{aligned}	
	\end{equation}
	In conjunction with  \eqsref{eq:sndef}{eq:LocalReconstruct}  we can deduce that
	\begin{align}
		\|\Ob\|_\sh^2 & =	\| \tOb \|_\sh^2= \max_{\sigma} \bbE_{U \sim \Cl(d)^{\otimes m}} \sum_{\bfb\in \bbF_d^m} \Bigl[ \<\bfb|U\sigma U^{\dag}|\bfb\> \Bigl|\<\bfb|U\Bigl(\caD_{\sfrac{1}{(d\!\!+\!\!1)}}^{-1}\Bigr)^{\otimes m} (\tOb)U^{\dag} |\bfb\>\Bigr|^2 \biggr]\nonumber\\
		& = \max_{\sigma} \sum_{\bfb\in \bbF_d^m} \sum_{\bfu,\bfv \in \bbF_d^{2m}} \Bigl\{(d+1)^{|\bfu|+|\bfv|} \alpha_\bfu \alpha_\bfv^* \tr\bigl[\sigma \bbE_{U \sim \Cl(d)^{\otimes m}}  \bigl(U^\dag|\bfb\>\<\bfb|U \<\bfb|U W_\bfu U^\dag|\bfb\> \<\bfb|U W_\bfv U^\dag|\bfb\>^*\bigr) \bigr] \Bigr\} \nonumber \\	
		& = \max_{\sigma}  \sum_{\bfu,\bfv\in  \bbF_d^{2m}, \bfu\bowtie \bfv} \biggl[\frac{(d+1)^{|\bfu|+|\bfv|} }{(d+1)^{|\bfu\vee\bfv|}}\alpha_\bfu \alpha_\bfv^* \tr(\sigma W_{\bfu-\bfv})\biggr] =\left\|\sum_{\bfu,\bfv\in  \bbF_d^{2m}, \bfu\bowtie \bfv} \frac{(d+1)^{|\bfu|+|\bfv|} }{(d+1)^{|\bfu\vee\bfv|}}\alpha_\bfu \alpha_\bfv^*  W_{\bfu-\bfv}\right\| \nonumber \\	
		& =\left\|\sum_{\bfs \in \scrV_m^*} \sum_{\bfu,\bfv \triangleright \bfs}\frac{(d+1)^{|\bfu|+|\bfv|}}{(d+1)^m} \alpha_\bfu \alpha_\bfv^* W_{\bfu-\bfv}\right\| \leq \frac{1}{(d+1)^m} \sum_{\bfs \in \scrV_m^*} \left\|\sum_{\bfu,\bfv \triangleright \bfs} (d+1)^{|\bfu|+|\bfv|} \alpha_\bfu \alpha_\bfv^*  W_{\bfu-\bfv} \right\| \nonumber \\
		& = \frac{1}{(d+1)^m} \sum_{\bfs \in \scrV_m^*} \left\|\sum_{\bfu \triangleright \bfs} (d+1)^{|\bfu|} \alpha_\bfu  W_\bfu \right\|^2\leq \frac{1}{(d+1)^m} \sum_{\bfs \in \scrV_m^*} \left|\sum_{\bfu \triangleright \bfs} (d+1)^{|\bfu|} \alpha_\bfu \right|^2, \label{eq:LocalCliffordProof}
	\end{align}
	where each maximization is over all quantum states on $\caH_d^{\otimes m}$. Here the third equality follows from \eqref{eq:recon_m}, the fourth equality follows from \lref{lem:WuWv},
	and the sixth equality holds because 
	\begin{align}
		|\{\bfs\in \scrV_m^*\,| \,\bfu,\bfv \triangleright \bfs\}|=(d+1)^{m-|\bfu\vee\bfv|}. 
	\end{align}

	Next, by virtue of  the Cauchy-Schwartz inequality applied  to \eref{eq:LocalCliffordProof} we can deduce that
	\begin{align}
		\| \Ob\|_\sh^2 &\leq 	\frac{1}{(d+1)^m} \sum_{\bfs \in \scrV_m^*} \left|\sum_{\bfu \triangleright \bfs} (d+1)^{|\bfu|} \alpha_\bfu \right|^2  \leq \frac{1}{(d+1)^m} \sum_{\bfs \in \scrV_m^*} \left( \sum_{\bfu \triangleright \bfs} (d+1)^{|\bfu|}\right) \left( \sum_{\bfu \triangleright \bfs}(d+1)^{|\bfu|}|\alpha_\bfu|^2 \right)\nonumber\\
		& = d^{2m}\sum_{\bfs \in \scrV_m^*} \sum_{\bfu \triangleright \bfs} (d+1)^{|\bfu|-m}|\alpha_\bfu|^2  = d^{2m} \sum_{\bfu \in \bbF_d^{2m}} |\alpha_\bfu|^2 = d^m \|\tOb\|_2^2,
	\end{align} 
	which confirms  \thref{thm:LocalCl}. In deriving the above equalities we have taken into account the following facts,
	\begin{equation}
		\sum_{\bfu \triangleright \bfs} (d+1)^{|\bfu|} = \sum_{j=0}^m 
		\binom{m}{j}
		(d-1)^j(d+1)^j = d^{2m},\quad |\{\bfs\in \scrV_m^*\,|\, \bfu \triangleright \bfs \}|=(d+1)^{m-|\bfu|}. 
	\end{equation}
\end{proof}

To understand the connection between \pref{pro:LocalCl} and \thref{thm:LocalCl}, let
$\tOb=W_\bfq$ with $\bfq\in \bbF_d^{2m}$ be an $m$-local Weyl operator. Then $\Ob$ is also an $m$-local Weyl operator, and we have $\alpha_\bfu=1$ if $\bfu=\bfq$ and $\alpha_\bfu=0$ otherwise. So  both inequalities in \eref{eq:LocalCliffordProof}  are saturated and we recover \pref{pro:LocalCl}.

\section{Proof sketches of \thsref{thm:GlobalCl}-\ref{thm:Clifford+T}}\label{sup:GlobalClProofs}
In this section we sketch the proofs of \thsref{thm:GlobalCl}-\ref{thm:Clifford+T} based on our companion paper \cite{ZhuMY24}. 

Suppose the unitary ensemble $\caE$ used in shadow estimation forms a unitary 2-design. Denote by  $Q(\caE)$ and $\bQ(\caE)$  the third moment operator and normalized moment operator associated with the corresponding measurement ensemble, that is,
\begin{align}
	Q(\caE)&:=\frac{1}{D} \bbE_{U\sim \caE}\sum_{\bfb\in \bbF_d^n} \bigl(U^\dag |\bfb\>\<\bfb|U\bigr)^{\otimes 3},  \label{eq:QE3}\\
	\bQ(\caE)&:=\frac{D(D+1)(D+2)}{6}Q(\caE)=\frac{(D+1)(D+2)}{6} \bbE_{U\sim \caE}\sum_{\bfb\in \bbF_d^n} \bigl(U^\dag |\bfb\>\<\bfb|U\bigr)^{\otimes 3},
\end{align} 
which are Hermitian operators on $\caH^{\otimes 3}$ with  $\caH=\caH_d^{\otimes n}$. For the convenience of  discussion, we  label the three copies of $\caH$ by $A, B, C$, respectively. 
Then the shadow norm in \eref{eq:sndef} can be expressed as follows,
\begin{align}
	\| \Ob \|_\sh^2&=\max_{\sigma}  \bbE_{U\sim \caE} \sum_{\bfb\in \bbF_d^n} \Bigl[ \<\bfb|U\sigma U^\dag |\bfb\> \left|\<\bfb|U\caM^{-1}(\Ob)U^\dag |\bfb\>\right|^2\Bigr]
	=\max_{\sigma} D\tr\bigl\{Q(\caE)\bigl[\sigma\otimes\caM^{-1}(\Ob)\otimes\caM^{-1}(\Ob)^\dag\bigr]\bigr\}
	\nonumber\\
	&=\max_{\sigma} D(D+1)^2\tr\bigl[Q(\caE)\bigl(\sigma\otimes\Ob\otimes\Ob^{\dag}\bigr)\bigl]=\max_{\sigma} \frac{6(D+1)}{D+2}\tr\bigl[\bQ(\caE)\bigl(\sigma\otimes\Ob\otimes\Ob^{\dag}\bigr)\bigr]\nonumber\\
	&=\frac{6(D+1)}{D+2}\bigl\|\tr_{BC}\bigl[\bQ(\caE)\bigl(\bbI\otimes\Ob\otimes\Ob^{\dag}\bigr)\bigr]\bigr\|,
\end{align}
where the third equality holds because $\caM^{-1}(\Ob)=(D+1)\Ob$ given that $\caE$ is a 2-design and $\Ob$ is traceless. The above expression for the shadow norm has the same form as in Eq.~(14) in our companion paper \cite{ZhuMY24}. 

Next, we turn to shadow estimation based on the Clifford group supplemented by $T$ gates. 	Now the moment operator  $Q(\caE)$ is identical to the moment operator of a suitable Clifford orbit, as we shall see shortly. Recall that each $T$ gate on $\caH_d$  is determined by a cubic function according to Appendix B and the companion paper~\cite{ZhuMY24}. In addition, 
for each $T$ gate we can define a magic state as follows:
\begin{align}\label{eq:TMagic}
	|T\>:=TF|0\>=T|+\>,
\end{align}
where $|+\>:=F|0\>=\sum_{b\in \bbF_d}|b\>/\sqrt{d}$.  By definition, $|T\>$ can be generated from $|0\>$ by the magic gate $T$ following the Fourier gate $F$.
In addition, the conjugate $T^\dag$ is also a diagonal magic gate and the corresponding magic state reads
\begin{align}\label{eq:TdagMagic}
	|T^\dag\>=T^\dag F|0\>=T^\dag F^\dag|0\>,
\end{align}
where the second inequality holds because $ F|0\>= F^\dag|0\>=|+\>$. Incidentally, $T^\dag$ and $T$ have the same cubic character given that $\eta_3(-1)=1$ for any cubic character $\eta_3$ on $\bbF_d^{\times}$ (see  Appendix B and \rcite{ZhuMY24} for additional information).

\begin{lemma}\label{lem:MomentCl}
	Suppose  $\caE=\Cl(n,d)$; then 
	\begin{align}
		Q(\caE)=\bbE_{U\sim \caE}\bigl(U^\dag |\mathbf{0}\>\<\mathbf{0}|U\bigr)^{\otimes 3}, \label{eq:MomentCl}
	\end{align} 
	where $|\mathbf{0}\>$ is a shorthand for $|0\>^{\otimes n}$. 
\end{lemma}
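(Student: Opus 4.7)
The plan is to exploit the transitivity of the Clifford group on computational basis states, which is already witnessed by the shift subgroup of the Heisenberg-Weyl group, together with the left/right-invariance of the uniform measure on $\Cl(n,d)$.

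First I would observe that for every $\bfb=(b_1,\dots,b_n)^\top\in\bbF_d^n$ the tensor product of shift operators
\begin{equation}
X^{\bfb}:=X^{b_1}\otimes X^{b_2}\otimes\cdots\otimes X^{b_n}
\end{equation}
belongs to $\caW(n,d)\subset\Cl(n,d)$ and satisfies $X^{\bfb}|\mathbf{0}\>=|\bfb\>$. Consequently $|\bfb\>\<\bfb|=X^{\bfb}|\mathbf{0}\>\<\mathbf{0}|(X^{\bfb})^{\dag}$ with no phase ambiguity, and
\begin{equation}
\bigl(U^{\dag}|\bfb\>\<\bfb|U\bigr)^{\otimes 3}=\bigl((UX^{-\bfb})^{\dag}|\mathbf{0}\>\<\mathbf{0}|(UX^{-\bfb})\bigr)^{\otimes 3}.
\end{equation}

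Next I would apply the change of variable $U\mapsto U':=UX^{-\bfb}$ inside the expectation. Because $X^{-\bfb}\in\Cl(n,d)$ and the uniform (counting) measure on the finite group $\Cl(n,d)$ is right-invariant, this substitution leaves the expectation unchanged. Thus for every $\bfb\in\bbF_d^n$,
\begin{equation}
\bbE_{U\sim\caE}\bigl(U^{\dag}|\bfb\>\<\bfb|U\bigr)^{\otimes 3}=\bbE_{U'\sim\caE}\bigl((U')^{\dag}|\mathbf{0}\>\<\mathbf{0}|U'\bigr)^{\otimes 3},
\end{equation}
so each of the $D=d^n$ terms in the sum defining $Q(\caE)$ contributes the same operator. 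Summing over $\bfb$ produces a factor of $D$, which cancels the prefactor $1/D$ in the definition \eref{eq:QE3}, yielding \eref{eq:MomentCl}.

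The argument is essentially a one-line symmetry reduction, so there is no real obstacle; the only subtlety worth flagging is that one should work with the shift operators $X^{\bfb}$ rather than generic Weyl operators $W_{\bfu}$, in order to avoid spurious phases between $|\bfb\>$ and $X^{\bfb}|\mathbf{0}\>$. Once this convention is fixed, the proof reduces to invoking right-invariance of the Haar/uniform measure on the finite Clifford group.
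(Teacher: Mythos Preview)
Your approach is exactly that of the paper: write $|\bfb\>=X^{\bfb}|\mathbf{0}\>$, absorb the shift into the Clifford average, and cancel the factor $D$. One small slip: the identity
\[
U^{\dag}|\bfb\>\<\bfb|U=(UX^{-\bfb})^{\dag}|\mathbf{0}\>\<\mathbf{0}|(UX^{-\bfb})
\]
is false as written, since $(UX^{-\bfb})^{\dag}=X^{\bfb}U^{\dag}\neq U^{\dag}X^{\bfb}$. The correct rewriting is $U^{\dag}|\bfb\>\<\bfb|U=(X^{-\bfb}U)^{\dag}|\mathbf{0}\>\<\mathbf{0}|(X^{-\bfb}U)$, so the substitution should be $U'=X^{-\bfb}U$ and the relevant invariance is \emph{left}-invariance of the uniform measure on $\Cl(n,d)$. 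With this cosmetic fix your proof coincides with the paper's.
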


\begin{lemma}\label{lem:MomentClT}
	Suppose  $\caE=\caE\left(\{T_j\}_{j=1}^k\right)$ and $|\Psi\>=\bigl|T_1^\dag\bigr\>\otimes \cdots\otimes \bigl|T_k^\dag\bigr\>\otimes |0\>^{\otimes (n-k)}$, where $\bigl|T_j^\dag\bigr\>$ for $j=1, 2, \ldots, k$   are defined according to \eref{eq:TdagMagic}.  Then 
	\begin{align}
		Q(\caE)=\bbE_{U\sim \Cl(n,d)}\bigl(U^\dag |\Psi\>\<\Psi|U\bigr)^{\otimes 3}. \label{eq:MomentClT}
	\end{align} 
\end{lemma}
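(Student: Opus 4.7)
The plan is to express every element of $\caE$ as a product of a fixed magic layer with a random Clifford unitary, then absorb the $\bfb$-dependence into a Weyl twist that can be undone by left-invariance of the uniform measure on $\Cl(n,d)$. Concretely, I would parametrize each unitary in $\caE$ as $U=VC$, where $C$ is drawn uniformly from $\Cl(n,d)$ and
\begin{align}
V=(FT_1)\otimes\cdots\otimes(FT_k)\otimes I^{\otimes(n-k)}
\end{align}
is the deterministic magic layer. Substituting into the definition of the moment operator in \eref{eq:QE3} gives
\begin{align}
Q(\caE)=\frac{1}{D}\,\bbE_{C\sim\Cl(n,d)}\sum_{\bfb\in\bbF_d^n}\bigl(C^{\dag}V^{\dag}|\bfb\>\<\bfb|VC\bigr)^{\otimes 3},
\end{align}
so the task reduces to simplifying $V^{\dag}|\bfb\>\<\bfb|V$.

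I would carry out this simplification site by site. For $j>k$ the factor is just $X^{b_j}|0\>\<0|X^{-b_j}$. For $j\leq k$, the key ingredients are the Clifford commutation $F^{\dag}X=ZF^{\dag}$, which yields $F^{\dag}|b_j\>=Z^{b_j}F^{\dag}|0\>=Z^{b_j}|+\>$, combined with diagonality of $T_j$ in the computational basis, so that $T_j$ commutes with $Z^{b_j}$. Using $T_j^{\dag}|+\>=|T_j^{\dag}\>$ from \eref{eq:TdagMagic}, the $j$th tensor factor then collapses to $Z^{b_j}|T_j^{\dag}\>\<T_j^{\dag}|Z^{-b_j}$. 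Tensoring these identities across all $n$ sites I obtain
\begin{align}
V^{\dag}|\bfb\>\<\bfb|V=W_{\bfb}|\Psi\>\<\Psi|W_{\bfb}^{\dag},
\end{align}
for an explicit Weyl operator $W_{\bfb}\in\caW(n,d)$ depending on $\bfb$, with $|\Psi\>$ exactly as in the lemma statement.

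To close the argument I would invoke left-invariance of the uniform measure on $\Cl(n,d)$ under multiplication by elements of $\caW(n,d)\subset\Cl(n,d)$: for each fixed $\bfb$, the change of variables $C\to W_{\bfb}^{\dag}C$ leaves the Haar average unchanged, so
\begin{align}
\bbE_{C}\bigl(C^{\dag}W_{\bfb}|\Psi\>\<\Psi|W_{\bfb}^{\dag}C\bigr)^{\otimes 3}=\bbE_{C}\bigl(C^{\dag}|\Psi\>\<\Psi|C\bigr)^{\otimes 3},
\end{align}
which is independent of $\bfb$. Summing over the $D=d^n$ values of $\bfb$ then cancels the prefactor $1/D$ and delivers the right-hand side of \eref{eq:MomentClT}. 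There is no genuine analytic obstacle; the only care needed is in tracking phases and Clifford commutations in the per-site identities, and the same strategy specialised to $k=0$ reproduces \lref{lem:MomentCl}.
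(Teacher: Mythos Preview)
Your proposal is correct and follows essentially the same route as the paper: write $U=VC$, show $V^{\dag}|\bfb\>$ is a Weyl translate of $|\Psi\>$, and absorb the Weyl operator into the Clifford average by left-invariance. One small slip: with the paper's convention $F=\frac{1}{\sqrt{d}}\sum_{a,b}\chi(ab)|a\>\<b|$ one has $F^{\dag}X=Z^{-1}F^{\dag}$, so $F^{\dag}|b_j\>=Z^{-b_j}|+\>$ and the Weyl twist carries $Z^{-b_j}$ rather than $Z^{b_j}$ on the first $k$ sites; this is inconsequential for your argument since any $W_{\bfb}\in\caW(n,d)\subset\Cl(n,d)$ is absorbed the same way.
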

\Lref{lem:MomentCl} shows that the moment operator $Q(\caE)$ associated with the Clifford ensemble is identical to the counterpart of the orbit $\Stab(n,d)$ of stabilizer states (and similarly for the normalized moment operator). This result is expected  given that all computational-basis states form one orbit under the action of the Clifford group. It highlights the importance of the moment operator in studying shadow estimation.
The key properties of the  moment operator of $\Stab(n,d)$ have been clarified in our companion paper \cite{ZhuMY24} after systematic and in-depth study. Thanks to \Lref{lem:MomentCl}, \thsref{thm:GlobalCl} and \ref{thm:GlobalStabProj} follow from Theorems~4 and 3 in the companion paper, respectively. 
\Lref{lem:MomentClT} further shows that the  moment operator $Q(\caE)$ associated with  the ensemble  $\caE=\caE\left(\{T_j\}_{j=1}^k\right)$ is identical to the moment operator of a Clifford orbit generated from a magic state. Therefore, \thref{thm:Clifford+T} follows from Theorems~15 and 21 in the companion paper \cite{ZhuMY24}.	In the following proof we use $X^\bfb$ as a shorthand for $X^{b_1}\otimes \cdots\otimes X^{b_n}$. 
\begin{proof}[Proof of \lref{lem:MomentCl}]By virtue of the definition in \eref{eq:QE3} we can deduce that
	\begin{align}
		Q(\caE)&=\frac{1}{D} \bbE_{U\sim \caE}\sum_{\bfb\in \bbF_d^n} \bigl(U^\dag |\bfb\>\<\bfb|U\bigr)^{\otimes 3}
		=\frac{1}{D} \bbE_{U\sim \caE}\sum_{\bfb\in \bbF_d^n} \bigl(U^\dag X^{\bfb} |\mathbf{0}\>\<\mathbf{0}| X^{-\bfb}U\bigr)^{\otimes 3}
		=\bbE_{U\sim \caE}\bigl(U^\dag |\mathbf{0}\>\<\mathbf{0}|U\bigr)^{\otimes 3},
	\end{align} 
	which confirms \eref{eq:MomentCl}. Here the second equality holds because $|\bfb\>=X^\bfb|\mathbf{0}\>$, and the third equality holds because $X^\bfb\in \Cl(n,d)$.	
\end{proof}

\begin{proof}[Proof of \lref{lem:MomentClT}]
	Let $V=FT_1\otimes \cdots\otimes FT_k\otimes I^{\otimes(n-k)}$. Then  
	\begin{align} \label{eq:MomentClTproof}
		|\Psi\>=V^\dag|\mathbf{0}\>,\quad V^\dag X^\bfb&= V^\dag X^\bfb V V^\dag= \bigl(Z^{-b_1}\otimes\cdots\otimes  Z^{-b_k}\otimes X^{b_{k+1}}\otimes\cdots\otimes X^{b_n}\bigr)V^\dag,  
	\end{align}
	where the third equality holds because $T_j^\dag F^\dag XF T_j=T_j^\dag Z^{-1} T_j=Z^{-1}$ for $j=1,2,\ldots, k$. In addition, $\caE=\caE\left(\{T_j\}_{j=1}^k\right)=V\Cl(n,d)$ and
	\begin{align}
		Q(\caE)&=\frac{1}{D} \bbE_{U\sim \caE}\sum_{\bfb\in \bbF_d^n} \bigl(U^\dag |\bfb\>\<\bfb|U\bigr)^{\otimes 3}
		=\frac{1}{D} \bbE_{U\sim \Cl(n,d)}\sum_{\bfb\in \bbF_d^n} \bigl(U^\dag V^\dag X^\bfb|\mathbf{0}\>\<\mathbf{0}|X^{-\bfb}V  U\bigr)^{\otimes 3}\nonumber\\
		&=\frac{1}{D} \bbE_{U\sim \Cl(n,d)} \sum_{\bfb\in \bbF_d^n}\bigl(U_\bfb^\dag |\Psi\>\<\Psi|  U_\bfb\bigr)^{\otimes 3}= \bbE_{U\sim \Cl(n,d)} \bigl(U^\dag |\Psi\>\<\Psi|  U\bigr)^{\otimes 3},
	\end{align}
	where $U_\bfb=\bigl(Z^{b_1}\otimes\cdots\otimes  Z^{b_k}\otimes X^{-b_{k+1}}\otimes\cdots\otimes X^{-b_n}\bigr)U$ and the third equality follows from \eref{eq:MomentClTproof}. This equation confirms   \eref{eq:MomentClT} and completes the proof of \lref{lem:MomentClT}.
\end{proof}

\section{Additional numerical results}\label{sup:Numerical}

\begin{figure}[bp]
	\centering 
	\includegraphics[width=0.7\textwidth]{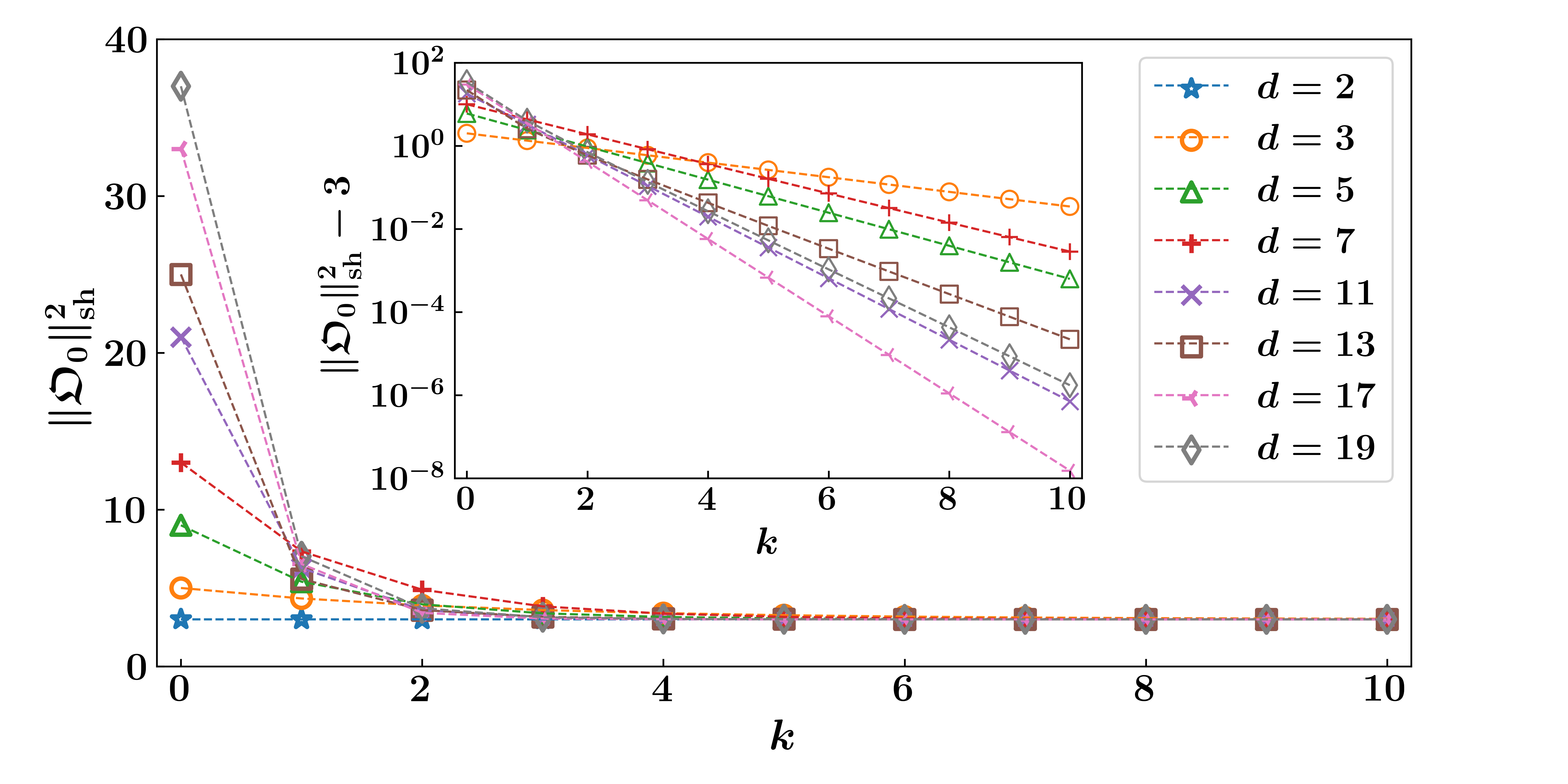}
	\caption{Shadow norm associated with the projector $\Ob$ onto an $n$-qudit  stabilizer state with respect to the Clifford circuit supplemented by $k$ canonical $T$ gates. Here $n=50$ and $\Ob_0$ denotes the traceless part of $\Ob$.}
	\label{fig:stabstatenorm2}
\end{figure}

\begin{figure}[tbp]
	\centering 
	\includegraphics[width=0.85\textwidth]{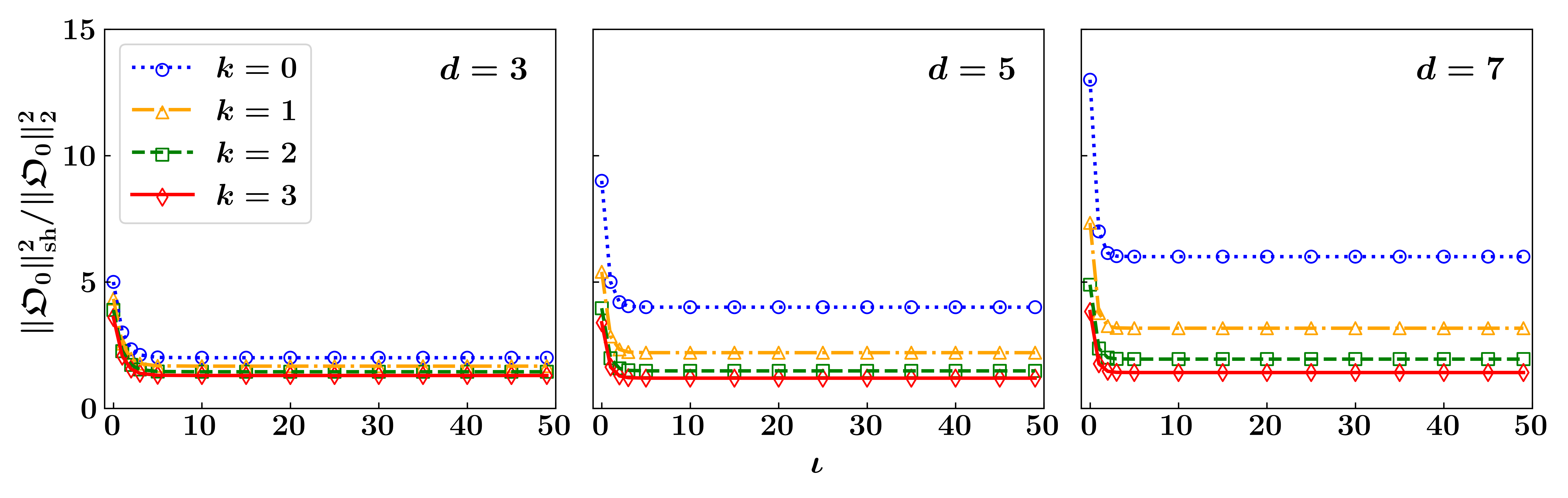}
	\caption{The ratio between the shadow norm $\|\Ob_0\|_\sh^2$ and $\|\Ob_0\|_2^2$. Here $\Ob$ is an $n$-qudit rank-$K$ stabilizer projector with $n=50$,  $K=d^\iota$, and $\|\Ob_0\|_2^2=K-(K^2/D)$, where $\iota$ is an integer and $0 \leq \iota \leq n-1$. The unitary ensemble is based on the Clifford circuit supplemented by $k$ canonical $T$ gates. 
	}
	\label{fig:stabprojnorm}
\end{figure}

\subsection{Shadow norms of stabilizer states and stabilizer projectors}
In the main text we have shown that, compared with the qubit case, the overhead of qudit shadow estimation based on the Clifford group is almost independent of the qudit number $n$.
Moreover, this overhead vanishes exponentially with the addition of $T$ gates; actually, a single $T$ gate can already suppress the overhead to a constant factor that is independent of the local dimension, as illustrated  in \fref{fig:stabstatenorm1}. Here we provide some additional results on the shadow norms associated with stabilizer states and stabilizer projectors. 
\Fref{fig:stabstatenorm2} shows the shadow norm associated with the projector $\Ob$ onto an $n$-qudit stabilizer state as a function of the local dimension $d$ and the number $k$ of canonical $T$ gates, where $\Ob_0$ denotes the traceless part of $\Ob$. The deviation $\|\Ob_0\|_\sh^2-3$ decreases exponentially with $k$, and it tends to decrease more quickly as the local dimension increases, which agrees with \thref{thm:Clifford+T}. 
Note that   $\|\Ob_0\|_\sh^2 \approx 3$ if the measurement primitive is a 3-design, which is the case in the qubit setting. Therefore, the Clifford circuit supplemented by a few $T$ gates can achieve almost the same performance as 3-design ensembles.
\Fref{fig:stabprojnorm} further shows the shadow norms associated with normalized stabilizer projectors.

\subsection{\label{sup:TgateImpacts}Impacts of different $T$ gates on shadow estimation}
When $d=1\mmod 3$, we can distinguish three types of $T$ gates depending on the cubic characters of the cubic coefficients of the underlying cubic polynomials as mentioned in Appendix B and discussed in more detail in our companion paper \cite{ZhuMY24}. Note that the cubic coefficients can be expressed as powers of a primitive element in $\bbF_d$. For reference, in \tref{tab:nu} we list a specific choice of a primitive element for $\bbF_d$ for each odd prime $d<50$ that we use in this paper and the companion paper \cite{ZhuMY24}.

\begin{table*}
	\renewcommand{\arraystretch}{1.8}
	\caption{\label{tab:nu} Specific choice of a primitive element for $\bbF_d$ with $d<50$.}	
	\centering
	\begin{tabular}{{p{1.2cm}<{\centering}|p{0.8cm}<{\centering}p{0.8cm}<{\centering}p{0.8cm}<{\centering}p{0.8cm}<{\centering}p{0.8cm}<{\centering}p{0.8cm}<{\centering}p{0.8cm}<{\centering}p{0.8cm}<{\centering}p{0.8cm}<{\centering}p{0.8cm}<{\centering}p{0.8cm}<{\centering}p{0.8cm}<{\centering}p{0.8cm}<{\centering}p{0.8cm}<{\centering}p{0.8cm}<{\centering}}}
		\hline\hline
		$d$ & $3$ & $5$ & $7$ & $11$ & $13$ & $17$ & $19$ & $23$ & $29$ & $31$ & $37$ & $41$ & $43$ & $47$\\
		\hline		
		$\nu$ & $2$ & $2$ & $3$ & $2$ & $2$ & $3$ & $2$ & $5$ & $2$ & $3$ & $2$ & $6$ & $3$ & $5$\\
		\hline\hline
	\end{tabular}
\end{table*}

\begin{table*}
	\renewcommand{\arraystretch}{1.8}
	\caption{\label{tab:Ttype} Optimal choices  of one and two $T$ gates (in minimizing the shadow norms) in fidelity estimation of $n$-qudit stabilizer states with $n=100$ and $d=7,13,19,31$. Here $\nu$ is a primitive element in the finite field $\bbF_d$ (see \tref{tab:nu}). Different  $T$ gates are distinguished by the cubic coefficients of the underlying cubic polynomials expressed as powers of $\nu$; exponents of $\nu$ are distinguished only modulo 3. }	
	\centering
	\begin{tabular}{{p{1.5cm}<{\centering}|p{3cm}<{\centering}p{3cm}<{\centering}p{3cm}<{\centering}p{3cm}<{\centering}}}
		\hline\hline
		$d$ & $\nu$ & $T$ & $2T$ \; (identical) & $2T$ \\
		\hline		
		$7$ & $3$ & $[\nu^2]$ & $[\nu^2, \nu^2]$ &  $[\nu, \nu^2]$ \\	
		$13$ & $2$ & $[1]$ & $[1, 1]$ &  $[1, \nu]$ \\
		$19$ & $2$ & $[\nu^2]$ & $[\nu^2, \nu^2]$ &  $[1, \nu^2]$ \\
		$31$ & $3$ & $[1]$ & $[1, 1]$ &   $[1, \nu^2]$ \\
		\hline\hline
	\end{tabular}
\end{table*}

Here we illustrate the impacts of different $T$ gates on the performance in shadow estimation. \Tref{tab:Ttype} lists the optimal choices of one and two $T$ gates in fidelity estimation of $n$-qudit stabilizer states with $n=100$ and  $d=7,13,19,31$. \Fref{fig:sn7} illustrates the significant impacts of different $T$ gates on the shadow norms of random two-qudit traceless (diagonal) observables and on the MSE in fidelity estimation of the 100-qudit GHZ state with $d=7$. The optimal choices of one and two $T$ gates are consistent with the results in \Tref{tab:Ttype}.

\begin{figure*}[tbp]
	\centering 
	\includegraphics[width=0.85\textwidth]{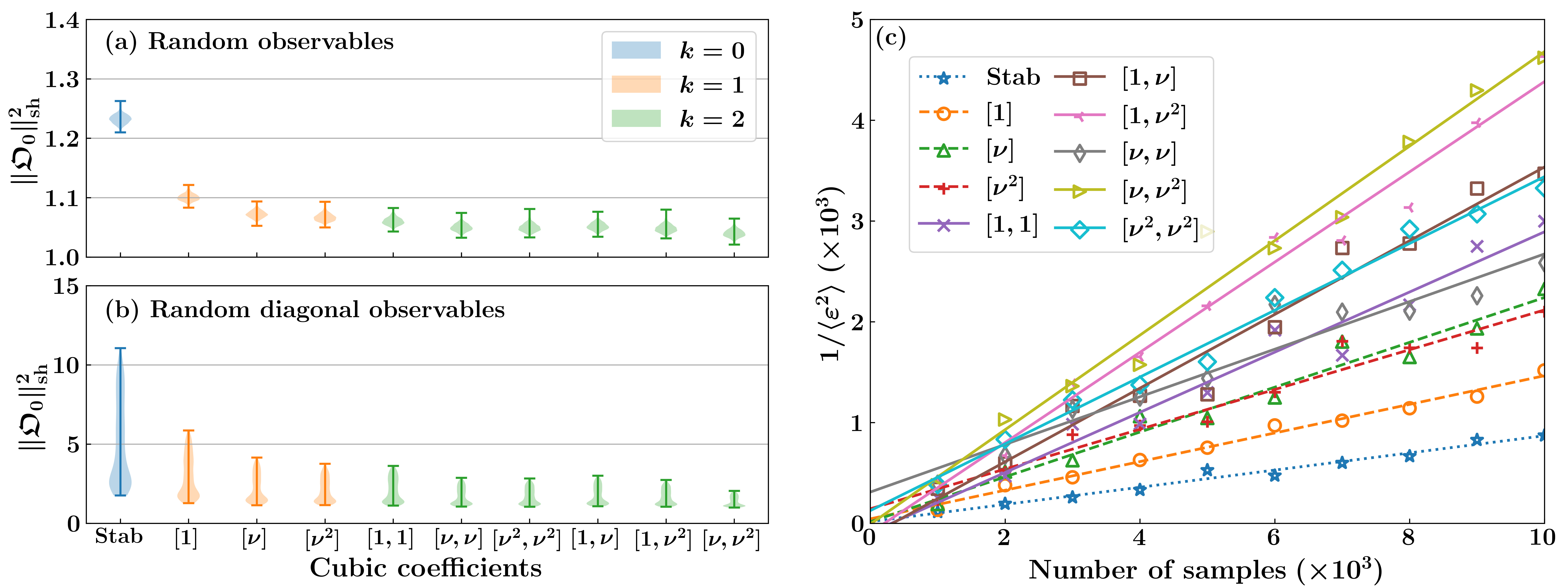}
	\caption{Impacts of different $T$ gates on qudit shadow estimation with $d=7$. The unitary ensemble $\caE$ is based on the Clifford circuit supplemented by $k$  qudit $T$ gates. The  case $k=0$ is labeled by `Stab' because  the corresponding measurement primitive is $\Stab(n,d)$; when $k=1,2$, 
		different $T$ gates are distinguished by the cubic coefficients expressed as powers of $\nu=3$ as in \tref{tab:Ttype}.  (a,b) Distributions of the shadow norms of random two-qudit traceless (diagonal)  observables  normalized with respect to the Hilbert-Schmidt norm. (c) Simulation results on the inverse MSE in estimating the fidelity of the $100$-qudit GHZ state. The MSE $\<\epsilon^2\>$ for each data point is the average over 100 runs. The dotted, dashed, and solid lines, determined by interpolation,
		correspond to the cases  $k=0,1,2$, respectively.}
	\label{fig:sn7}
\end{figure*}

\begin{figure*}[tbp]
	\centering 
	\includegraphics[width=0.8\textwidth]{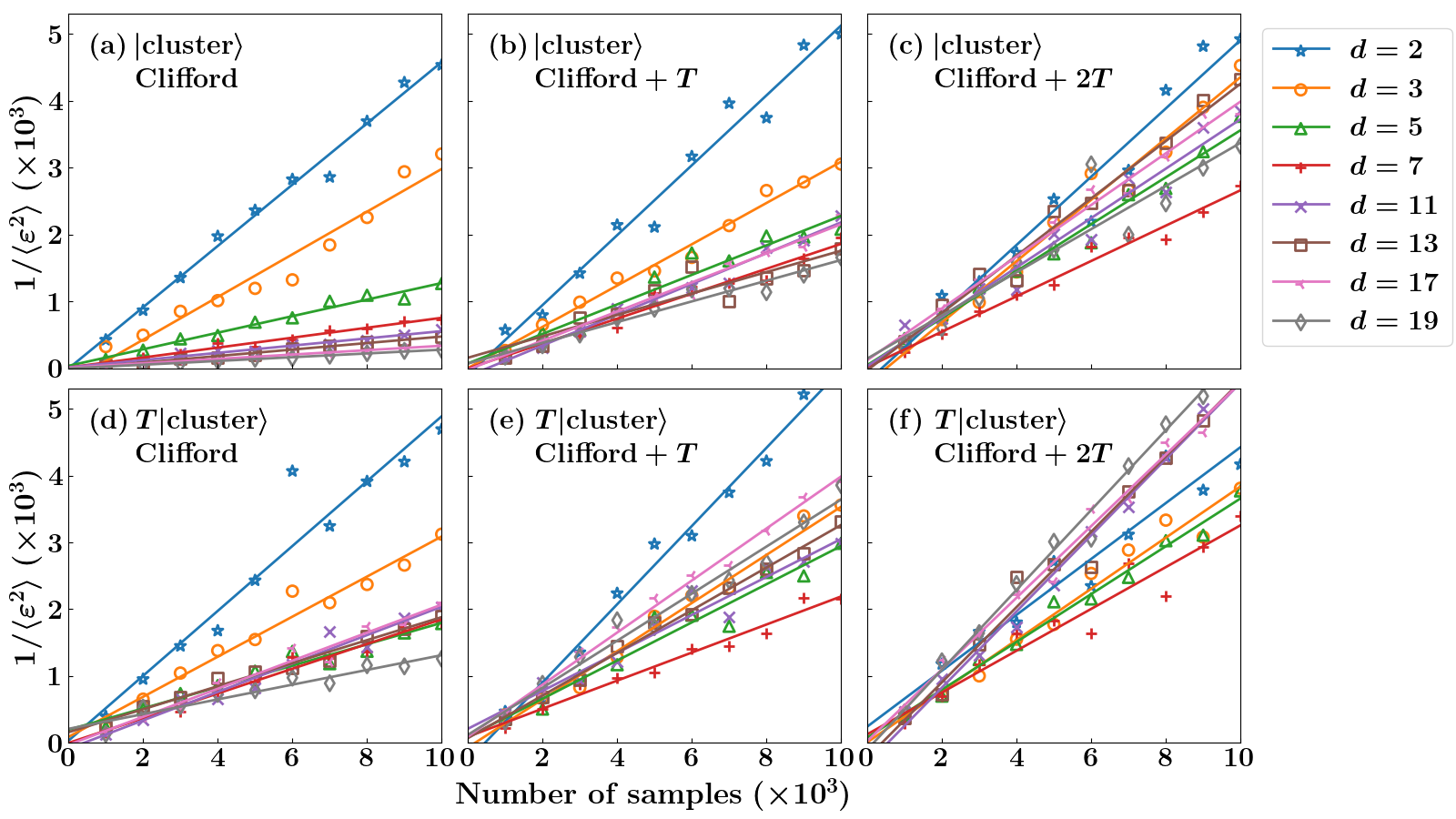}
	\caption{Simulation results on the inverse MSE in estimating the fidelity of the $100$-qudit cluster state $|\mathrm{cluster}\>$  defined on a $10\times 10$ square lattice with periodic boundary conditions. The estimation protocols are based on the Clifford circuit supplemented by up to two canonical $T$ gates. The MSE $\<\epsilon^2\>$ for each data point is the average over 100 runs, and the solid lines are determined by interpolation. Results on the state $T |\mathrm{cluster}\>$ are also shown for comparison.}
	\label{fig:cluster}
\end{figure*}

\subsection{Simulation results on cluster states}
Here we provide additional simulation results on fidelity estimation of cluster states based on the Clifford circuit supplemented by up to two canonical $T$ gates. \Fref{fig:cluster} shows the inverse MSE in estimating the fidelity of the 100-qudit cluster state  defined on a $10\times 10$ square lattice with periodic boundary conditions. The overall behavior is similar to the counterpart of the GHZ state as shown in \fref{fig:GHZ} in the main text. 

\begin{figure*}[tp]
	\centering 
	\includegraphics[width=0.8\textwidth]{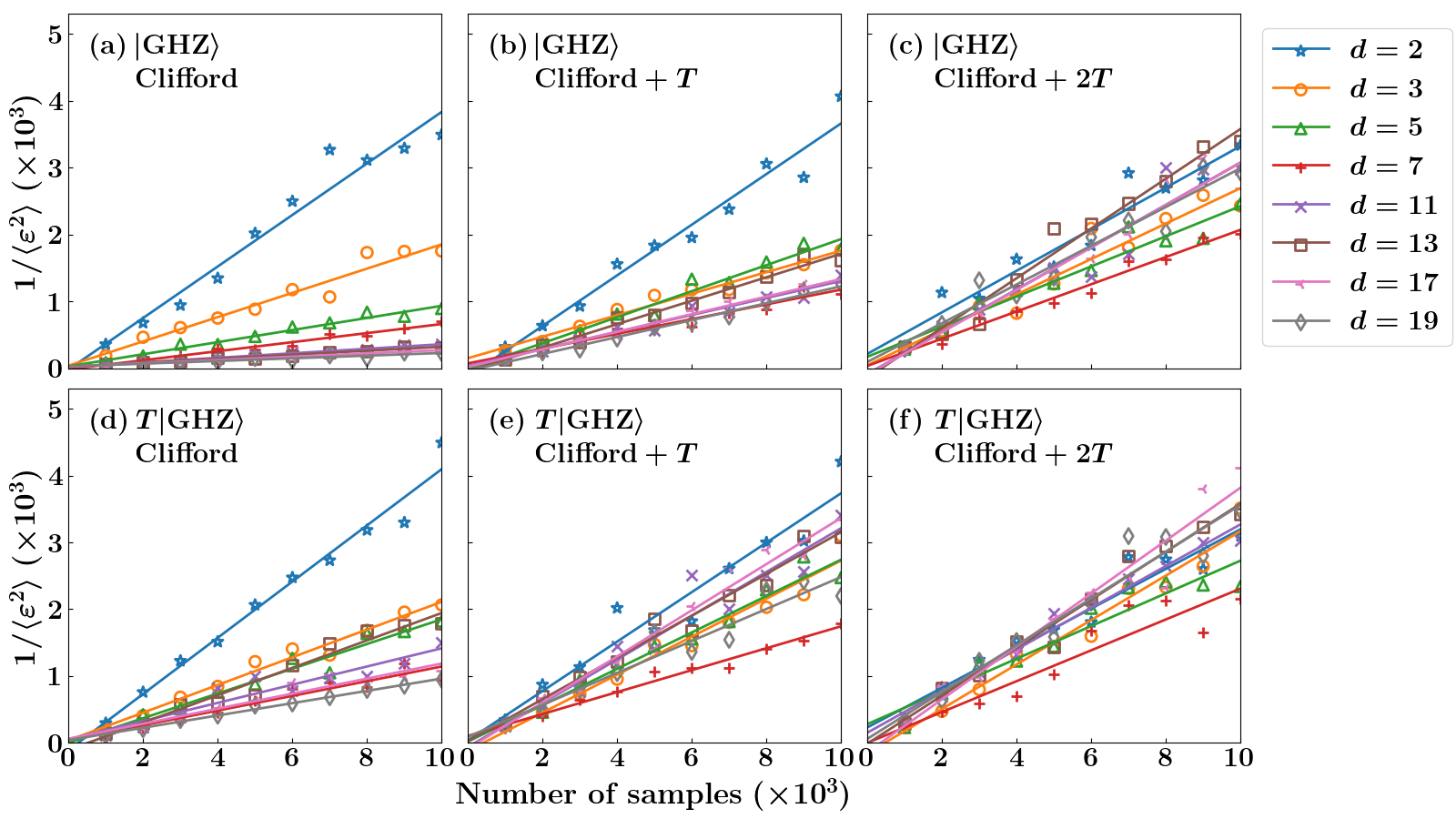}
	\caption{Simulation results on the inverse MSE in estimating the fidelity of the $100$-qudit GHZ state $|\GHZ\>$ using the median-of-means estimation with $L=10$. Here the raw data are the same as that employed in \fref{fig:GHZ} in the main text. 
		The MSE $\<\epsilon^2\>$ for each data point is the average over 100 runs, and the solid lines are determined by interpolation as before. Results on the state $T |\GHZ\>$ are also shown for comparison. }
	\label{fig:GHZ,mom}
\end{figure*}

\subsection{Median-of-means estimation}
So far the estimator employed in our numerical simulation is based on the empirical mean in \eref{eq:means}. As an alternative, we can also employ the median of means \cite{HuangKP20}. To this end, we first divide the $N$ samples into $L$ groups, each containing $N/L$ samples, assuming that $N$ is a multiple of $L$. For each group we can construct an  estimator based on the empirical mean. In this way, we can construct $L$ preliminary estimators. Now, the final estimator is the median of the $L$ preliminary estimators. Compared with the estimator in \eref{eq:means}, this alternative estimator can effectively reduce the probability of large deviation at the prize of a slightly larger MSE. 

\Fref{fig:GHZ,mom} illustrates the performance of this estimator with $L=10$ in estimating the fidelity of the $100$-qudit GHZ state based on the qudit Clifford group supplemented by up to two canonical $T$ gates.  Compared with the counterpart based on the empirical mean presented in \fref{fig:GHZ} in the main text, the MSE is larger by about 30\%, but the overall behavior is almost the same. Similar results apply to other numerical simulations. So our main conclusions do not change if the empirical mean is replaced by the median of means.

\section{Details of numerical simulation}\label{sup:Simulation}
Classical simulation of a quantum circuit has long been an important task for verification and validation of quantum devices.  The Gottesman-Knill theorem states that a Clifford circuit (also known as a stabilizer circuit) can be efficiently simulated on a classical computer. Later on, Aaronson and Gottesman proposed a practical simulation algorithm \cite{AaroG04}. Using the tableau representation, their algorithm runs in time $\caO(n^2)$ for both deterministic and random measurements, and in time $\caO(n^3)$ for computing the inner product between two stabilizer states, where $n$ is the qubit number \cite{AaroG04}. Later, several alternative approaches were introduced, including simulation methods based on the graph state representation \cite{AndeB06} and canonical forms of Clifford unitaries~\cite{Nest08}. In addition, a number of  sampling algorithms for the Clifford group were proposed~\cite{KoenS14,Berg21,BravM21}. The subgroup algorithm  in \rcite{KoenS14}
was later generalized to the qudit setting \cite{Hein21}. However, as far as we know, no explicit algorithm for simulating a qudit Clifford circuit has been implemented so far.

Towards the simulation of universal quantum circuits,  a number of works have studied  the simulation of a (qubit) Clifford circuit interleaved by magic gates. Currently, there are two main approaches  for dealing with magic gates: the first one is based on gadgetization \cite{PashRKB22,BravG16}, and the second one is based on low-rank stabilizer decomposition \cite{BravBCC19}. For both approaches, the computational cost is polynomial in the qubit number and exponential in  the number of magic gates.

For our purpose, we develop a complete simulation algorithm for qudit shadow estimation based on the Clifford group supplemented by $T$ gates. To this end, we generalize the tableau representation of stabilizer states \cite{AaroG04} and gadgetization method  \cite{BravG16,PashRKB22} to the qudit setting, and combine them with the Clifford sampling algorithm proposed in \rcite{Hein21}.
The simulation algorithm  we develop here may be useful to studying various other problems in quantum information processing and is thus of interest beyond the focus of this work.

\subsection{Generating matrices of qudit stabilizer states}
Before formulating the simulation algorithm,
we need a convenient representation of qudit stabilizer states and stabilizer projectors,  assuming that the local dimension $d$ is an odd prime.  Here we  first introduce a representation based on generating matrices, which is the basis for more sophisticated representation methods. 

Suppose $|\Psi\> \in \Stab(n,d)$ is an $n$-qudit stabilizer state, and $S_1, S_2,...,S_n$ form a minimal generating set of its stabilizer group. The 
$i$th  stabilizer generator can be expressed as follows,
\begin{equation}\label{eq:stabvec}
	S_i = \chi(r_i^u)W_{\bfu_i}= \chi(r^u_i)\bigotimes_{k=1}^n W(u^z_{i,k},u^x_{i,k}),
\end{equation}
where  the (column) vector $\bfu_i:= (u^z_{i,1},\dots,u^z_{i,n},u^x_{i,1},\dots,u^x_{i,n})^\top$ is the  \emph{stabilizer vector} of $|\Psi\>$ associated with $S_i$ and $r^u_i$ is the corresponding \emph{phase exponent}.
The \emph{generating matrix} $\caG$ of $|\Psi\>$ (with respect to the given generating set) is composed of such stabilizer vectors and phase exponents
as follows,
\begin{equation}\label{eq:GenMatrix}
	\caG = \left[\begin{array}{c|c|c}
		\caG^z & \caG^x & r 
	\end{array}\right]
	=\left[\begin{array}{ccc|ccc|c}
		u^z_{1,1} &  \cdots & u^z_{1,n}  &   u^x_{1,1} & \cdots & u^x_{1,n}  &  r^u_1  \\
		\vdots  &  \ddots & \vdots   &  \vdots  &  \ddots & \vdots   &  \vdots \\
		u^z_{n,1} &  \cdots & u^z_{n,n}  &  u^x_{n,1} & \cdots & u^x_{n,n}  &  r^u_{n}
	\end{array}
	\right],  
\end{equation}
where $\caG^z$ and $\caG^x$ encode the exponents of $Z$ and $X$, respectively, while $r=r(\caG)$ encodes the phase exponents. The submatrix of $\caG$ excluding the last column is called the \emph{reduced generating matrix} and denoted by $\tcaG$ henceforth. 

For the convenience of the following discussion, we use $\len(\caG)$ to denote the number of rows in $\caG$ and use $\<\caG\>$ as a shorthand for the stabilizer group $\<S_1,S_2,\ldots,S_n\>$. Note that $|\Psi\>$ and its stabilizer group  are uniquely determined by the generating matrix $\caG$. 
According to the properties of stabilizer generators, if we swap two rows of $\caG$, add one row to another, or multiply one row by any number in $\bbF_d^\times=\bbF_d \setminus\{0\}$, we get another generating matrix (with respect to a different generating set) of the same stabilizer state $|\Psi\>$. We shall call these `row operations'. 

Recall that any Clifford unitary $C\in \Cl(n,d)$ has the form $C=W_\bfg \mu(M)$ with $\bfg\in \bbF_d^{2n}$ and $M\in \Sp(2n,d)$ up to an overall phase factor [see \eref{eq:CaM}]. 
According to \eref{eq:CWC},  applying the Clifford unitary $C$ on $|\Psi\>$ is equivalent to  updating the generating matrix $\caG$ via the mapping 
\begin{equation}\label{eq:update_ur}
	\bfu_i \mapsto M \bfu_i, \quad r^u_i \mapsto r^u_i+[\bfg,M\bfu_i], \quad  i = 1,2,\dots,n.
\end{equation}
In numerical simulation of shadow estimation based on the Clifford group, we need to 
sample a random Clifford unitary. In the generating-matrix representation, this task  amounts  to  sampling a random symplectic matrix $M \in \Sp(2n,d)$ and a random symplectic vector $\bfg \in \bbF_d^n$. Here 
we  use the algorithm proposed in \rcite{Hein21} for sampling a symplectic matrix, which is a generalization of the subgroup algorithm for $\Sp(2n,2)$ proposed in \rcite{KoenS14}. The sampling of a symplectic vector is straightforward. 

The above results can be generalized to stabilizer projectors in a straightforward way. Notably, a stabilizer projector of rank $d^j$ can be represented by a generating matrix with $n-j$ rows. By convention all rows of the generating matrix are linearly independent, and similarly for the reduced generating matrix.

\subsection{Tableau representation of qudit stabilizer states}\label{sec:tableau}
The generating matrix offers a very simple representation of stabilizer states, but it is not so convenient for computing the inner product and for simulating a quantum measurement. To circumvent this problem, here we turn to a more sophisticated representation,  which contains valuable additional information and is more convenient in numerical  simulation. 
The \emph{tableau} $\caJ$ of a stabilizer state $|\Psi\>\in \Stab(n,d)$ is a $2n \times (2n+1)$ matrix over $\bbF_d$. The upper half of $\caJ$ is exactly the generating matrix $\caG$, which is composed of stabilizer vectors $\bfu_i$ and phase exponents $r^u_i$ as shown in \eref{eq:GenMatrix}. The lower half  is composed of $n$ row vectors of the form  $[\bfv_j^\top \,|\, r^v_j]$, where the vectors $\bfv_j\in \bbF_d^{2n}$  satisfy the following relations on symplectic products  \cite{AaroG04},
\begin{equation}\label{eq:commuterelation}
	[\bfu_i,\bfu_j] = 0, \quad [\bfv_i,\bfv_j] = 0, \quad [\bfu_i,\bfv_j] = \delta_{ij}, \quad i,j =1,2,\dots, n,
\end{equation}
and are called  \emph{destabilizer vectors} henceforth. The destabilizer vectors $\bfv_j$ and  corresponding phase exponents $r^v_j$ satisfy the same update rule  as $\bfu_i$ and $r^u_i$ as presented in \eref{eq:update_ur}. In addition, they determine $n$  Weyl operators as follows,
\begin{equation}
	D_j := \chi\bigl(r^v_j\bigr)W_{\bfv_j} = \chi\bigl(r^v_j\bigr)\bigotimes_{k=1}^n W\bigl(v^z_{j,k},v^x_{j,k}\bigr), \quad \quad j =1,2,\dots, n,
\end{equation}
which are called \emph{destabilizer generators}, in contrast with the $n$ stabilizer generators $S_j$ for $j=1,2,\ldots, n$. Apparently, row operations on the upper (or lower) half of the tableau $\caJ$ do not change the underlying stabilizer state $|\Psi\>$. 

For the convenience of the following discussion, the submatrix of $\caJ$ composed of the first $2n$ columns  is called the \emph{reduced tableau} of $|\Psi\>$ and is denoted by  $\tcaJ$.

\subsection{Representation of qudit stabilizer states using Lagrangian subspaces and characteristic vectors}
Stabilizer states can also be represented using Lagrangian subspaces \cite{Gros06,Hein21}, which are very useful for computing the inner products between stabilizer states. Recall that a subspace $\caV$ of $\bbF_d^{2n}$ is a \emph{Lagrangian subspace} if $\caV$ has dimension $n$ and the symplectic product vanishes in this subspace. Let $\caV$ be a subspace of $\bbF_d^{2n}$, the symplectic complement of $\caV$  is defined as 
\begin{align}
	\caV^\perp:=\bigl\{\bfu\in \bbF_d^{2n}\;|\; [\bfu,\bfv]=0\;\forall \bfv\in \caV\bigr\}. 
\end{align}
Note that $\caV^\perp=\caV$ iff $\caV$ is a Lagrangian subspace. 

Now, suppose  $\bfu_1,\bfu_2,\ldots,\bfu_n$ are $n$ linearly independent stabilizer vectors of a stabilizer state $|\Psi\>\in \Stab(n,d)$; then $\caL=\spa(\bfu_1,\bfu_2,\ldots,\bfu_n)$ is a Lagrangian subspace in $\bbF_d^{2n}$. In addition, there exists a vector $\bfm \in \bbF_d^{2n}$,  called a \emph{characteristic vector},  such that the projector $|\Psi\>\<\Psi|$ can be expressed as follows \cite{Gros06},
\begin{equation}\label{eq:Psi}
	|\Psi\>\<\Psi|=|\caL,\bfm\>\<\caL,\bfm| := |\caL|^{-1} \sum_{\bfu \in \caL} \chi([\bfm,\bfu])W_\bfu,
\end{equation}
where $|\caL|=d^n$. 	Note that the choice of $\bfm$ is not unique:
given  another vector $\bfm'$ in $\bbF_d^{2n}$, then $|\caL,\bfm'\>$ and $|\caL,\bfm\>$ represent the same stabilizer state iff
$\bfm'-\bfm \in \caL$. Therefore, 
the characteristic vector $\bfm$ has a $d^n$-fold degeneracy.
Based on the above representation, we can  simulate the Clifford transformation $C=W_\bfg \mu(M)$ with $\bfg\in \bbF_d^{2n}$ and $M\in \Sp(2n,d)$	by the following map
\begin{equation}
	\bfu_i \mapsto M\bfu_i, \quad \bfm \mapsto M\bfm+\bfg.
\end{equation}
In addition,  the overlap between two stabilizer states $|\caL_1,\bfm_1\>$ and $|\caL_2,\bfm_2\>$ reads \cite{Hein21}
\begin{equation}\label{eq:ssoverlap}
	|\<\caL_1,\bfm_1|\caL_2,\bfm_2\>|^2 = 
	\begin{cases}
		d^{\,\dim(\caL_1 \cap \caL_2)-n} & \quad \text{if  }  \bfm_1-\bfm_2 \in (\caL_1\cap \caL_2)^\perp, \\
		0 & \quad \text{otherwise}.
	\end{cases}
\end{equation}
Based on this result, we can also simulate the measurement process.

To apply the above representation in numerical simulation, we need to determine a characteristic vector $\bfm$ of the stabilizer state $|\Psi\>$ given the generating matrix in \eref{eq:GenMatrix}. 
Observe that $\bfm$ should satisfy the underdetermined system of linear equations
\begin{equation}
	[\bfm,\bfu_i] = r^u_i\quad \forall \ i = 1, 2, \dots, n,
\end{equation}
which can be solved by Gaussian elimination. This task can be greatly simplified if the destabilizer vectors $\bfv_i$ of $|\Psi\>$ associated with the stabilizer vectors $\bfu_i$ are available, in which case  $\bfm=-\sum_i r_i^u \bfv_i$ is an ideal choice for the characteristic vector.

\subsection{Simulation of a Clifford circuit}\label{sup:SimStab}
Here we introduce a simple  algorithm for simulating  a random qudit Clifford circuit based on the tableau representation, which generalizes the original simulation algorithm tailored to qubits  \cite{AaroG04}. To this end, first  we develop an algorithm for constructing the tableau representation for an arbitrary qudit stabilizer state. Inspired by the representation of stabilizer states based on  Lagrangian subspaces \cite{Gros06,Hein21}, then we develop a method based on Gaussian elimination for computing  the inner product and simulating the computational-basis measurement, which runs in time $\caO(n^3)$.
The overall simulation algorithm in data acquisition phase is presented in Algorithm~\ref{alg:sim_Clifford} below. The sampling of a random symplectic matrix in $\Sp(2n,d)$ is described in detail in Sec.~5.3.1 of \rcite{Hein21}, which we shall not repeat here. Other relevant notation and subroutines are introduced in the following subsections.

\begin{algorithm}
	\SetAlgoLined
	\SetKwInOut{Input}{Input}
	\SetKwInOut{Output}{Output}
	\Input{Stabilizer vectors $\bfs_1,\bfs_2,\dots,\bfs_n$ and phase exponents $p_1, p_2, \dots,p_n$ of the input state $|\Psi\>$}
	\Output{$n$-dit measurement outcome $\bfx$} 
	Construct a tableau $\caJ$ \tcp*{See \sref{sec:construct tableau}}
	Sample a random $M \in \Sp(2n,d)$ and $\bfg \in \bbF_d^n$\;
	Update $\caJ$ with $M$ and $\bfg$ \tcp*{See \sref{sec:tableau}}
	Sample an outcome $\bfx$ in the computational-basis measurement.  \tcp*[f]{See \sref{sec:zbasismeasure}}
	\caption{Data acquisition for a Clifford circuit}
	\label{alg:sim_Clifford}
\end{algorithm}

\subsubsection{Tableau construction}\label{sec:construct tableau}
Suppose $|\Psi\>\in \Stab(n,d)$ is a stabilizer state whose stabilizer group is generated by the $n$ stabilizer operators $\chi(p_1)W_{\bfs_1},\chi(p_2)W_{\bfs_2},\ldots,\chi(p_n)W_{\bfs_n}$, where $\bfs_1, \bfs_2, \ldots,\bfs_n \in \bbF_d^{2n}$ and $p_1, p_2, \ldots, p_n \in \bbF_d$ are the corresponding stabilizer vectors and phase exponents. Let $\caL=\spa(\bfs_1, \bfs_2, \ldots,\bfs_n)$ be the Lagrangian subspace associated with the stabilizer state $|\Psi\>$. 
Based on this information
we can construct a tableau representation $\caJ$ of $|\Psi\>$ in two steps: (1)~constructing stabilizer and destabilizer vectors; (2) tracking the phase exponents.

\textbf{(1)~Constructing stabilizer and destabilizer vectors:} 
In this step, we want to find  $n$ stabilizer vectors $\bfu_1,\bfu_2,\dots,\bfu_n$ and $n$ destabilizer vectors $\bfv_1,\bfv_2,\dots,\bfv_n$ that satisfy \eref{eq:commuterelation} and
such that  $\{\bfu_i\}_{i=1}^n$ spans the same Lagrangian subspace as $\{\bfs_i\}_{i=1}^n$
This is achieved by Algorithm~\ref{alg:tableau} introduced below. Let $\caL_0$ be the Lagrangian subspace associated with the basis state $|0\>^{\otimes n}$. 
The basic idea of Algorithm~\ref{alg:tableau} can be summarized as follows: find a symplectic transformation that maps $\caL_0$ to $\caL$, then the desired stabilizer vectors $\bfu_1,\bfu_2,\dots,\bfu_n$ and destabilizer vectors  $\bfv_1,\bfv_2,\dots,\bfv_n$ of $|\Psi\>$
can be constructed by applying the same symplectic transformation to the standard  stabilizer vectors  and destabilizer vectors of $|0\>^{\otimes n}$.  The overall computational complexity of Algorithm~\ref{alg:tableau} is $\caO(n^3)$.

The desired symplectic transformation can be expressed as a product of some elementary  symplectic transformations known as \emph{symplectic transvections} \cite{Hein21}. Given a vector $\mathbf{h} \in \bbF_d^{2m}$ (in our case $m=1,2,...,n$) and a scalar $\lambda \in \bbF_d^\times$, a symplectic transvection $T_{\lambda,\mathbf{h}}$ on $\bbF_d^{2m}$ can be defined via the map
\begin{equation}
	T_{\lambda,\mathbf{h}}(\bfx) = \bfx+\lambda[\bfx,\bfh]\bfh.
\end{equation}
The inverse transvection reads $T_{\lambda,\bfh}^{-1}= T_{-\lambda,\bfh}$. 
It is well known that symplectic transvections can generate the whole symplectic group.
More concretely, given any two nonzero vectors $\bfx,\bfy$ in  $\bbF_d^{2m}$,   there exist two vectors $\bfh_1,\bfh_2\in \bbF_d^{2m}$ and scalars $\lambda_1,\lambda_2\in\bbF_d$ such that 
\begin{equation}
	\bfy = T_{\lambda_1,\bfh_1}T_{\lambda_2,\bfh_2}(\bfx).
\end{equation}
This is Lemma 5.1 in \rcite{Hein21},  which also offers a constructive proof. So  it should be clear how the subroutine `find transvections' in Algorithm~\ref{alg:tableau} works. To simplify the  description,
the product $T_{\lambda_1,\bfh_1}T_{\lambda_2,\bfh_2}$ is abbreviated as $T$ in the pseudocode. Here we take the `local' convention, which means the
$(2j-1)$th and $2j$th entries of $\bfu$ denote the powers of $Z_j$ and $X_j$, respectively.
Let $\bfe_{2m}(i)\in \bbF_d^{2m}$ be the column vector with the $i$th entry equal to 1 and all other entries equal to 0. Note that the symplectic products in \eref{eq:commuterelation} are preserved under symplectic transformations. In each iteration in Algorithm~\ref{alg:tableau}, we have
$\bigl[\bfu^{(i+1)}_j, \bfe_{2(n\!+\!1\!-\!i)}(1)\bigr]=0$ for $j=i+1,i+2,\ldots, n$, which means the second entry of $\bfu^{(i+1)}_j$ must be zero. In addition, we can  set the first entry to zero without changing the Lagrangian subspace of interest. Furthermore, we can discard the first two entries and continue to look for transvections in  the symplectic group $\Sp(2(n\!-\!i),d)$ in the next iteration.

\begin{algorithm}
	\SetAlgoLined
	\SetKwInOut{Input}{Input}
	\SetKwInOut{Output}{Output}
	\Input{Stabilizer vectors $\bfs_1,\bfs_2,\dots,\bfs_n$}
	\Output{Stabilizer vectors  $\bfu_1,\bfu_2,\dots,\bfu_n$ and destabilizer vectors $\bfv_1,\bfv_2,\dots,\bfv_n$}
	Set $\bigl\{\bfu^{(1)}_1,\dots,\bfu^{(1)}_n\bigr\} =\{ \bfs_1,\bfs_2,\dots,\bfs_n\}$\;
	\For{$i = 1:n$}{
		Find transvections and construct $T_i$ such that $T_i\bfe_{2(n\!+\!1\!-\!i)}(1)=\bfu^{(i)}_i$\;
		\For{$j = i+1:n$}{
			$\bfu^{(i+1)}_j = T_i^{-1}\bfu^{(i)}_j$ and discard the first two entries of $\bfu^{(i+1)}_j$\;
		}
	} 
	$M=T_n$\;
	\For{$i = (n-1):1$}{
		$M =T_i (\mathds{1}_2 \bigoplus M)$\;
	}
	$\bfu_1,\bfu_2,\dots,\bfu_n$ and $\bfv_1,\bfv_2,\dots,\bfv_n$  correspond to the first $n$ columns and last $n$ columns of $M$, respectively.
	\caption{Constructing the reduced tableau}\label{alg:tableau}
\end{algorithm}

\textbf{(2) Tracking phase exponents:} Now we determine the phase exponents in the last column of the tableau $\caJ$. Note that $\caL=\spa(\bfs_1,\bfs_2,\dots,\bfs_n)=\spa(\bfu_1,\bfu_2,\dots,\bfu_n)$, and every $\bfs_i$  can be expressed as  a linear combination of $\bfu_1,\bfu_2,\dots,\bfu_n$ as follows, $\bfs_i = \sum_{j=1}^n c^j_i \bfu_j$, where
\begin{equation}
	c^j_i = [\bfs_i, \bfv_j]. 
\end{equation}
Therefore, the phase exponents $r^u_1,r^u_2,\dots, r^u_n$ in the last column of $\caJ$ should satisfy the  following system of linear equations:
\begin{equation}
	\sum_{j=1}^n c^j_i r^u_j = p_i, \quad i = 1,2,\dots,n,
\end{equation}
which can be solved in time  $\caO(n^3)$ by  Gaussian elimination. The phase exponents $\{r^v_i\}_{i=1}^n$ associated with destabilizer vectors  are actually redundant. For simplicity, we set as convention $r^v_{i} = 0$ for $i = 1, 2, \dots,n$.

\begin{algorithm}[tbp]
	\SetAlgoLined
	\SetKwInOut{Input}{Input}
	\SetKwInOut{Output}{Output}
	\Input{The Lagrangian subspace $\caL$ and characteristic vector $\bfm$ of the pre-measurement state}
	\Output{Measurement outcome $\bfx$}
	Find a basis $\{\mathbf{w}_i\}_{i=1}^{k_*}$ for  $\caL_0\cap \caL$ and add $n-k_*$ vectors $\bfy_1, \bfy_2,\ldots, \bfy_{n-k_*}$ to construct a basis for $\caL_0$\;
	Construct the equations according to \eref{eq:xui} and solve for $\tbfx$\;
	Return the latter half of $\tbfx$ as $\bfx$.
	\caption{Simulation of the computational-basis measurement}\label{alg:measurement}
\end{algorithm}

\subsubsection{Computational-basis measurement}\label{sec:zbasismeasure}
The simulation of a computational-basis measurement can be seen as a variant of computing the inner product of two stabilizer states. For this task it is more convenient to go back to the `global' convention and to use the representation based on Lagrangian subspaces and characteristic vectors.

Suppose the stabilizer state before the measurement is $|\Psi\>=|\caL,\bfm\>$, where $\caL$ and $\bfm$ are the corresponding Lagrangian subspace and characteristic vector. 
The  outcome  of the computational-basis measurement can be labeled by a vector $\bfx$ in $\bbF_d^n$, which corresponds to the basis state $|\bfx\>$. This basis state can also be expressed as
$|\bfx\>=|\caL_0,\tbfx\>$, where $\caL_0$ is  the Lagrangian subspace
associated with the computational basis and is
spanned by the $n$ vectors $\bfe_{2n}(1),\bfe_{2n}(2),...,\bfe_{2n}(n)$, and the characteristic vector $\tbfx$ can be chosen to be $\tbfx=(\mathbf{0};\bfx)$, where ";" means vertical concatenation instead of horizontal concatenation. If we perform the computational-basis measurement on the stabilizer state $|\caL,\bfm\>$, then the probability of obtaining outcome $\bfx$ reads
\begin{equation}
	p(\bfx)=|\<\caL,\bfm|\caL_0,\tbfx\>|^2.
\end{equation}

Next, we  introduce a scheme for sampling the $n$-dit measurement outcome $\bfx$. First, construct a basis $\{\bfw_i\}_{i=1}^{k_*}$ for $\caL_0\cap \caL$ using  the Zassenhaus algorithm \cite{LuksRW97} for example, where $k_* = \dim (\caL_0\cap \caL)$ and $0 \leq k_* \leq n$. Choose $n-k_*$ vectors $\bfy_1, \bfy_2,\ldots, \bfy_{n-k_*}$ in $\bbF_d^{2n}$ such that $\{\bfw_i\}_{i=1}^{k_*}\cup \{\bfy_j\}_{j=1}^{n-k_*}$ forms a basis for $\caL_0$. Then we can sample the measurement outcome $\bfx$ by solving the system of linear equations 
\begin{equation}\label{eq:xui}
	\begin{aligned}
		[\tbfx, \bfw_i] &= [\bfm,\bfw_i] \quad \text{for }  i=1, 2, \dots, k_*, \\
		[\tbfx, \bfy_j] &= c_{j}  \quad \text{for }  j=1,2,\dots, n-k_*,
	\end{aligned}
\end{equation}
where each $c_j$ is a random number in $\bbF_d$. Finally, we take the latter half of $\tbfx$, which is exactly $\bfx$. The overall procedure is summarized in Algorithm~\ref{alg:measurement}.

\subsection{Simulation of a Clifford circuit supplemented by $T$ gates}
\label{sec:Clifford+T}
In the main text, we focus on fidelity estimation using shadow estimation as a prototypical task. In this section, we shall describe how to classically simulate this procedure when the unitary ensemble corresponds to a random Clifford circuit supplemented by $T$ gates.

In the data acquisition stage, an input state goes through a layer of random Clifford gate $C_0 \in \Cl(n,d)$, a layer of $T$ gates, and a layer of Fourier gates, followed by the computational-basis measurement. If the input state can be prepared from $|0\>^{\otimes n}$ with a few layers of Clifford and $T$ gates and the total number of $T$ gates increases at most logarithmically with $n$, then this procedure can be efficiently simulated. That is to say, we can sample the measurement outcome $\bfx:=(x_1,x_2,\dots,x_n)^\top \in \bbF_d^n$ of this quantum circuit according to the probability distribution $\{p(\bfx)\}_{\bfx}$ on a classical computer, with computational cost polynomial in $n$ and exponential in  $t=t_1+t_2$, where $t_1$ is the number of $T$ gates used to prepare the input state, and $t_2$ is the number of $T$ gates used to realize the unitary ensemble for shadow estimation.

\subsubsection{Reverse gadgetization}
To simulate the circuit mentioned above, we generalize the gadgetization method  in \rcite{PashRKB22} to the qudit setting. For simplicity, we denote each diagonal magic gate liberally as $T$, regardless of its specific type.
For each $T$ gate we can define a single-qudit magic state $|T^\dag\>$ according to \eref{eq:TdagMagic}. 
Then the $T$ gate can be replaced equivalently by a `reversed' gadget, which is essentially an injection of the magic state $|T^\dag\>$ using the $CX$ gate in \eref{eq:Clgen}, followed by post-selection. 
Diagrammatically, the reversed gadget can be expressed as follows,
\begin{equation}
	\includegraphics[width=0.35\textwidth]{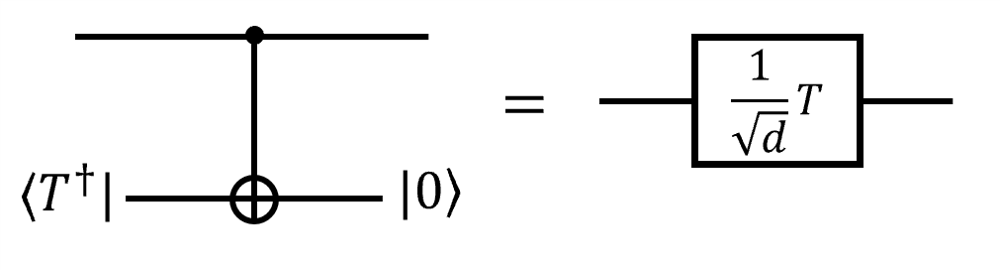}.
\end{equation}
Circuits in this section are read from right to left, so that we can derive the expressions of  intermediate quantum states directly from the circuit diagrams.

Suppose the input state is $ C_1 T^{\otimes s_1} \otimes C_2 T^{\otimes s_2}\cdots C_{l-1} T^{\otimes s_{l-1}}C_{l}|0\>^{\otimes n}$, where  $C_1,C_2,...,C_l \in \Cl(n,d)$ and $\sum_{i=1}^{l-1} s_i = t_1$, then the circuit $U$ we are to simulate can be expressed diagrammatically as in the upper plot in \fref{fig:qc}.  We assume that the $t_2$ qudit $T$ gates in the measurement primitive are applied to the first $t_2$  qudits without loss of generality. 
Since $C_0$ is selected randomly from $\Cl(n,d)$, $C_{1}$ can be absorbed into $C_0$.  The equivalent circuit based on  reverse gadgetization is shown in the lower plot in \fref{fig:qc}.

\begin{figure}[tbp]
	\centering
	\includegraphics[width=0.95\textwidth]{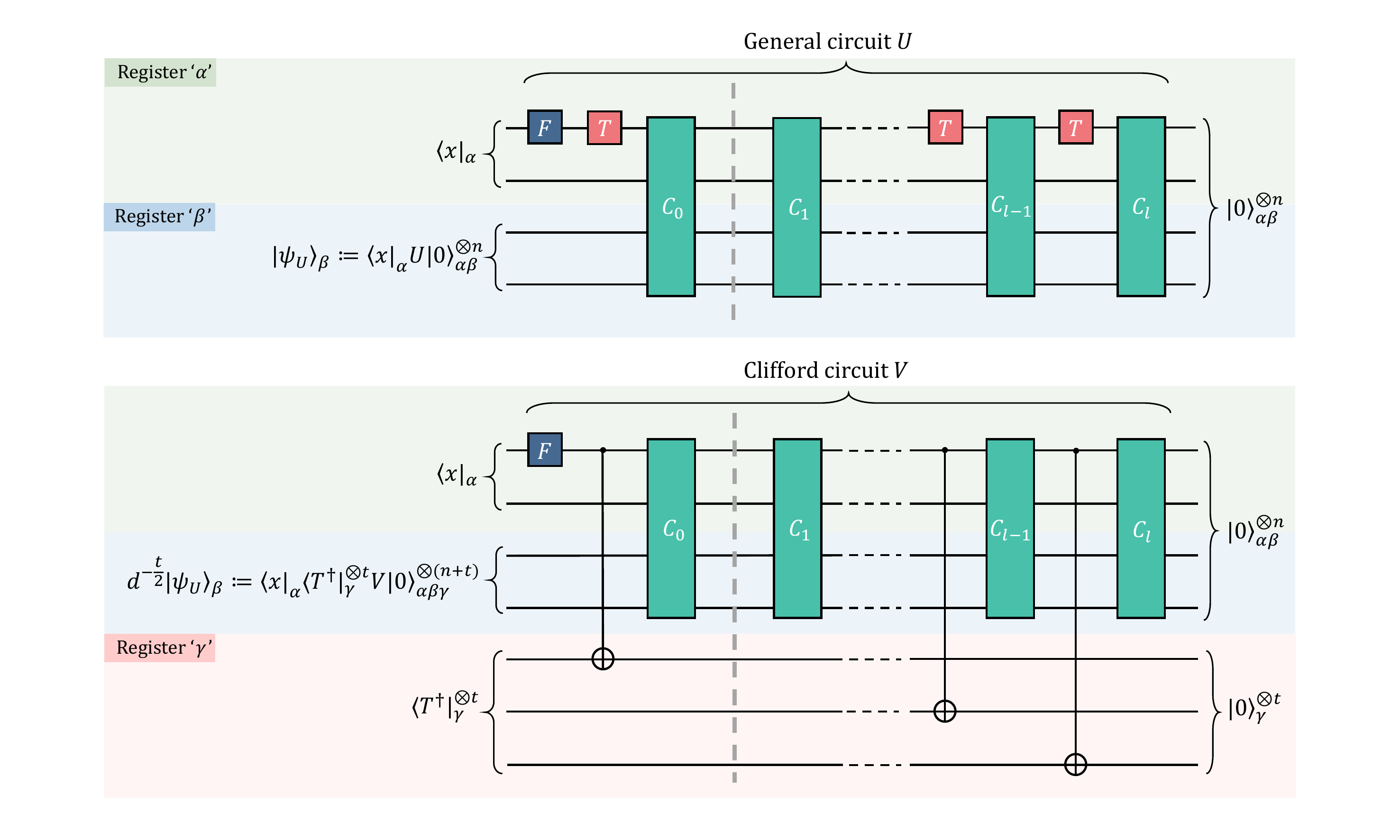}
	\caption{(upper) A Clifford circuit interleaved by $T$ gates. The gates on the right of the vertical dashed line are used to prepare the input state, while the gates on the left are used to realize a randomized quantum measurement employed in shadow estimation. 
		(lower) The post-selected circuit obtained after reverse gadgetization of all $T$ gates.}
	\label{fig:qc}
\end{figure}

\subsubsection{Main simulation algorithm}
To sample the measurement outcome $\bfx$, we sample the outcome on each qudit sequentially \cite{BravG16}, in which the sampling on the $j$th qudit with  $j = 2,3,\dots,n$ is determined by the conditional probabilities
\begin{equation}
	p(y\,|\,x_1,\dots,x_{j-1}):= \frac{p(x_1,\dots,x_{j-1},y)}{p(x_1,\dots,x_{j-1})},\quad y\in \bbF_d.
\end{equation}
In this way, to simulate each experimental shot, we  need to calculate the outcome probabilities $p(\bfx)$ for only $nd$ sequences $\bfx$ of length varying from $1$ to $n$, instead of all $d^n$ probabilities $p(\bfx)$ for $\bfx\in \bbF_d^n$.

Next, we illustrate how to evaluate the  probability $p(\bfx)$ of  the outcome $\bfx = (x_1,\dots,x_m)^\top$ of a measurement on the first  $m$ qudits with $1\leq m\leq n$. For the convenience of the following discussion, the first $m$ qudits are referred to as the measured register `$\alpha$', and the remaining $(n-m)$ qudits  are referred to as the marginalized register `$\beta$'; the identity operators on registers $\alpha$ and $\beta$ are denoted by $\bbI_\alpha$ and $\bbI_\beta$, respectively. Then the probability $p(\bfx)$ can be expressed as follows,
\begin{equation}
	p(\bfx) = \bigl\|\<\bfx|_\alpha U |0\>_{\alpha \beta}^{\otimes n}\bigr\|_2^2.
\end{equation}
After gadgetization, the circuit $U$ acting on $|0\>_{\alpha\beta}^{\otimes n}$ is re-expressed as a Clifford circuit $V$ acting on $|0\>_{\alpha\beta\gamma}^{\otimes (n+t)}$, with $t$ ancillary qudits in register `$\gamma$' post-selected on the state $|T^\dag\>^{\otimes t}$. Thus the above probability can be expressed as 
\begin{equation}\label{eq:px1}
	p(\bfx) = d^t \bigl\|\<\bfx|_\alpha \<T^\dag|_\gamma^{\otimes t} V |0\>_{\alpha\beta\gamma}^{\otimes (n+t)}\bigr\|_2^2=d^t\tr\left(\Pi_{\caG_0} \Pi_{\bfx,T}\right),
\end{equation}
where $\caG_{0}$ is the generating matrix of the stabilizer state $V|0\>_{\alpha\beta\gamma}^{\otimes n+t}$ and
\begin{equation}
	\Pi_{\caG_0} = V|0\>\<0|_{\alpha\beta\gamma}^{\otimes (n+t)}V^\dag, \quad \Pi_{\bfx,T} = |\bfx\>\<\bfx|_\alpha \otimes \bbI_\beta \otimes \bigl(|T^\dag\>\<T^\dag|\bigr)_\gamma^{\otimes t}.
\end{equation}

Let 
\begin{align}\label{eq:Pixgamma}
	\Lambda_{\bfx,\gamma}=\tr_{\alpha\beta}\bigl[\Pi_{\caG_0} \bigl(|\bfx\>\<\bfx|_\alpha \otimes \bbI_{\beta\gamma}\bigr)\bigr],
\end{align}
where $\bbI_{\beta\gamma}$ is the identity operator on registers $\beta$ and $\gamma$ together. Then the probability $p(\bfx)$ can be expressed as $p(\bfx)=d^t\tr[\Lambda_{\bfx,\gamma}(|T^\dag\>\<T^\dag|)_\gamma^{\otimes t}]$. If the partial trace in \eref{eq:Pixgamma} does not vanish, then $\Lambda_{\bfx,\gamma}$ is proportional to 
a stabilizer projector $\Pi_{\caG_{\bfx,\gamma}}$  acting on register $\gamma$, where $\caG_{\bfx,\gamma}$ is a generating matrix.  Eventually, $p(\bfx)$ can be expressed as follows,
\begin{equation}\label{eq:px2}
	p(\bfx)= 
	\begin{cases}
		d^{\xi-m}\tr\left[\Pi_{\caG_{\bfx,\gamma}} \bigl(|T^\dag\>\<T^\dag|\bigr)_\gamma^{\otimes t} \right] & \quad \text{if  } flag=\text{True}, \\
		0  & \quad \text{if  } flag=\text{False}.
	\end{cases}
\end{equation}
Here the generating matrix $\caG_{\bfx,\gamma}$ and relevant parameters, including $\xi$ and $flag$, can be determined by virtue of  Algorithm~\ref{alg:ConStab}  in \sref{sup:PartialTrace} below. 
After obtaining $\caG_{\bfx,\gamma}$, the probability $p(\bfx)$ can be computed directly in time $\caO\bigl(td^{k_\gamma+1}\bigr)$ with $k_\gamma = \len(\caG_{\bfx,\gamma}) \leq t$, where $t$ is the total number of single-qudit magic gates employed in the simulation; the computational cost usually increases exponentially with $t$.

The procedure for simulating a Clifford circuit supplemented by $T$ gates is summarized in Algorithm~\ref{alg:sim_Clifford+T}, where the `Constraining the stabilizers' subroutine in Line 8 is explained in detail in \sref{sup:PartialTrace} below. Naive implementation of Algorithm~\ref{alg:sim_Clifford+T} results in a time complexity of $\caO\bigl(nd(n+t)^3+ntd^{t+2}\bigr)$. However, we can make some optimization by reusing some intermediate results, which can reduce the complexity to $\caO\bigl((n+t)^3+(nd+t)d^{t+1}\bigr)$.

\begin{algorithm}[tb]
	\SetAlgoLined
	\SetKwInOut{Input}{Input}
	\SetKwInOut{Output}{Output}
	\Input{Circuit representation of $U$}
	\Output{$n$-dit measurement outcome $\bfx$} 
	Construct the generating matrix $\caG_0$ for $|0\>^{\otimes (n+t)}$ and apply $C_{l}, CX, C_{l-1},...$ sequentially\;
	Sample a random $M \in \Sp(2n,d)$ and $\bfg \in \bbF_d^{2n}$\;
	Update $\caG_0$ with $M$ and $\bfg$\;
	
	Update $\caG_0$ according to the actions of  $CX$ and $F$ gates on the relevant qudits\;
	\For{$m=1:n$}{
		\For{$y=0:d-1$}{
			$\bfy$ = concatenate($\bfx,y$)\;
			$flag$, $\xi$, $\caG_{\bfy,\gamma}$ = Constraining the stabilizers($\caG_0$,$\bfy$)\;
			\eIf{$flag$= True,}
			{$p(\bfy) =d^{\xi-m}\tr\bigl[\Pi_{\caG_{\bfy,\gamma}} (|T^\dag\>\<T^\dag|)_\gamma^{\otimes t} \bigr]$\;}
			{$p(\bfy) = 0$\;}
		}
		Sample $\bfy$ according to $\{p(\bfy)/p(\bfx)\}_{y=0}^{d-1}$, $\bfx \leftarrow \bfy$\;
	}
	\Return{$\bfx$.}
	\caption{Data acquisition for a Clifford circuit supplemented by $T$ gates}
	\label{alg:sim_Clifford+T}
\end{algorithm}

In principle, the Clifford circuit discussed in the previous section can also be simulated using Algorithm~\ref{alg:sim_Clifford+T} with some minor modification. When the number of $T$ gates to be gadgetized is $t=0$, the outcome probability $p(\bfx)$ is given directly by Algorithm~\ref{alg:ConStab} as
\begin{equation}
	p(\bfx)= 
	\begin{cases}
		d^{\xi-m} & \quad \text{if  } flag=\text{True}, \\
		0  & \quad \text{if  } flag=\text{False}.
	\end{cases}
\end{equation}
The corresponding time complexity for sampling one-dit outcome is $\caO(n^3d)$, and the overall complexity for sampling an $n$-dit outcome $\bfx$ is approximately $\caO(n^4d)$ by naive implementation. By reusing some intermediate results, we can reduce the overall complexity to $\caO(n^3d)$, but it is still not as efficient as the tableau method in \sref{sup:SimStab}. In addition, when no $T$ gates are involved, it is much more efficient to calculate the inner product between two stabilizer states using Lagrangian subspaces and characteristic vectors.

\subsubsection{Partial trace of stabilizer projectors}\label{sup:PartialTrace}
In this subsection we describe how to compute the partial trace in \eref{eq:Pixgamma}, which corresponds to the step `Constraining the stabilizers' in Algorithm~\ref{alg:sim_Clifford+T}. To this end, we need to  introduce some additional notation.
For any Weyl operator $W=\omega_d^a W_{u_1}\otimes \cdots \otimes W_{u_n}   \in \caW(n,d)$ with  $a \in \bbF_d$ and $u_1,u_2,\ldots,u_n \in \bbF_d^2$, we use $|W|_i$ to denote the $i$th tensor factor $W_{u_i}$ and $\omega(W)$ to denote the phase factor $\omega_d^a$. In addition, we use $|W|_\alpha$ to denote the sub-string of tensor factors associated with register `$\alpha$', that is, $|W|_\alpha:=\bigotimes_{i\in \alpha} W_{u_i}=\bigotimes_{i=1}^m W_{u_i}$, and similarly for $|W|_\beta$ and $|W|_\gamma$.

In order to contribute non-trivially to the partial trace in \eref{eq:Pixgamma}, a stabilizer operator $W\in \<\caG_{0}\>$ must satisfy the following constraints: 
\begin{enumerate}
	\item $|W|_i=I$ for each qudit  in register `$\beta$', that is, $m+1 \leq i \leq n$.	
	
	\item $|W|_i \in \{I,Z,\ldots,Z^{d-1}\}$ for each qudit in register `$\alpha$', that is, $1 \leq i \leq m$. 
\end{enumerate}
Note that the subset of stabilizer operators in $\<\caG_0\>$ that satisfy the above two constraints also forms a stabilizer group. Let $\caG_2$ be a generating matrix associated with the resulting stabilizer group and $\Pi_{\caG_2}$ the corresponding stabilizer projector. Let  $\xi=\len(\caG_2)$ be the number of rows of $\caG_2$ and let $S_j$ be the stabilizer operator associated with the $j$th row of $\caG_2$. Then $\Lambda_{\bfx,\gamma}$ in \eref{eq:Pixgamma} can be expressed as follows,
\begin{align}\label{eq:Pixgamma2}
	\Lambda_{\bfx,\gamma}&=\tr_{\alpha\beta}\bigl[\Pi_{\caG_0} \bigl(|\bfx\>\<\bfx|_\alpha \otimes \bbI_{\beta\gamma}\bigr)\bigr]
	=d^{\xi-n-t}\tr_{\alpha\beta}\bigl[\Pi_{\caG_2} \bigl(|\bfx\>\<\bfx|_\alpha \otimes \bbI_{\beta\gamma}\bigr)\bigr]\nonumber\\
	&=
	d^{-m-t}\sum_{W\in \<\caG_2\>}h_\bfx(W)=
	d^{-m-t}\prod_{j=1}^\xi \sum_{k_j\in \bbF_d}h_\bfx\Bigl(S_j^{k_j}\Bigr),
\end{align}
where $h_\bfx$ is a function from the stabilizer group $\<\caG_2\>$ to the Pauli group on register $\gamma$,
\begin{equation}
	h_\bfx(W):=d^{m-n}\tr_{\alpha\beta}\bigl[W \bigl(|\bfx\>\<\bfx|_\alpha \otimes \bbI_{\beta\gamma}\bigr)\bigr]=\omega(W) \<\bfx||W|_\alpha |\bfx\>  |W|_\gamma.
\end{equation}
Note that $\Lambda_{\bfx,\gamma}=0$ whenever there exists a stabilizer operator $W\in \<\caG_2\>$ such that $h_\bfx(W)=\omega^k\bbI_\gamma$ for some $k\in \bbF_d^\times$, where $\bbI_\gamma$ is the identity operator on register $\gamma$. 

Now, by performing suitable row operations on $\caG_2$ we can construct a new generating matrix $\caG_3$ such that $\tilde{\caG}_3[:,n+1\!:\!n+t]$ has right row echelon form. Then $\Lambda_{\bfx,\gamma}=0$ iff $h_\bfx(S_j')=\omega^k\bbI_\gamma$ for some stabilizer generator $S_j'$ associated with $\caG_3$ and $k\in \bbF_d^\times$. If such a stabilizer generator does not exist,  then $\Lambda_{\bfx,\gamma}$ is proportional to the stabilizer projector associated with the following set of stabilizer generators acting on register $\gamma$,
\begin{align}
	\bigl\{h_\bfx\bigl(S_j'\bigr)\; \big|\; h_\bfx\bigl(S_j'\bigr)\neq \bbI_\gamma\bigr\},
\end{align}
which can be encoded by a generating matrix $\caG_{\bfx,\gamma}$. More precisely, we have  
\begin{align}
	\Lambda_{\bfx,\gamma}=d^{\xi-m-t}\Pi_{\caG_{\bfx,\gamma}}.
\end{align}
Based on this observation we can devise an efficient algorithm for computing the partial trace in \eref{eq:Pixgamma2} and for determining the generating matrix $\caG_{\bfx,\gamma}$. Before presenting the algorithm, it is convenient to introduce three useful functions.

\textbf{(1) Slicing of $\caG$}: We denote by $\caG^z[i\!:\!j,k\!:\!l]$ the sub-matrix of $\caG^z$ that contains rows from $i$ to $j$ and columns from $k$ to $l$, and similarly for $\caG^x[i\!:\!j,k\!:\!l]$. For $\caG$ we define
\begin{equation}\label{eq:sliceG}
	\caG[i\!:\!j,k\!:\!l] := 
	\left[\begin{array}{ccc|ccc|c}
		u^z_{i,k} &  \cdots & u^z_{i,l}  &  u^x_{i,k} & \cdots & u^x_{i,l} & r^u_{i} \\
		\vdots  &  \ddots & \vdots  & \vdots  &  \ddots & \vdots  & \vdots \\
		u^z_{j,k} &  \cdots & u^z_{j,l} & u^x_{j,k} & \cdots & u^x_{j,l} & r^u_{j}
	\end{array}
	\right].
\end{equation}
When the argument in position $i \, (k)$ is omitted, we mean starting from the first row (column). When the argument in position $j \, (l)$ is omitted, we mean stopping on the last row (column).
Moreover, $\tcaG[i\!:\!j,k\!:\!l]$ denotes the submatrix of $\caG[i\!:\!j,k\!:\!l]$ without the last column, while $r(\caG)[i\!:\!j]$ denotes the last column of $\caG[i\!:\!j,k\!:\!l]$, that is,
\begin{equation}
	\tcaG[i\!:\!j,k\!:\!l] := 
	\left[\begin{array}{ccc|ccc}
		u^z_{i,k} &  \cdots & u^z_{i,l}  &  u^x_{i,k} & \cdots & u^x_{i,l}  \\
		\vdots  &  \ddots & \vdots & \vdots  &  \ddots & \vdots \\
		u^z_{j,k} &  \cdots & u^z_{j,l} & u^x_{j,k} & \cdots & u^x_{j,l}
	\end{array}
	\right],\quad 	
	r(\caG)[i\!:\!j]:=
	\begin{bmatrix}
		r^u_{i} \\ \vdots \\ r^u_{j}
	\end{bmatrix}.
\end{equation}

Suppose $i\leq j$ and $k\leq l$; we use $\caG^z[i\!:\!j,l\!:\!k]$ ($\caG^x[i\!:\!j,l\!:\!k]$) to denote the matrix obtained by rearranging the columns of $\caG^z[i\!:\!j,k\!:\!l]$ ($\caG^x[i\!:\!j,k\!:\!l]$) in reverse order, and define 
\begin{equation}
	\begin{aligned}
		\tcaG[i\!:\!j,l\!:\!k]&:=
		\left[\begin{array}{c|c}
			\caG^z[i\!:\!j,l\!:\!k] & \caG^x[i\!:\!j,l\!:\!k]
		\end{array}\right], \\
		\caG[i\!:\!j,l\!:\!k]&:=
		\left[\begin{array}{c|c|c}
			\caG^z[i\!:\!j,l\!:\!k] & \caG^x[i\!:\!j,l\!:\!k] & r(\caG)[i\!:\!j] 
		\end{array}\right].
	\end{aligned}
\end{equation}
Essentially, this means relabeling the qudits in reverse order.

\textbf{(2) Echelon forms of $\caG$}: We denote by $\Echz(\caG,[i\!:\!j,k\!:\!l])$ ($\Echx(\caG,[i\!:\!j,k\!:\!l])$) the generating matrix $\caG'$ obtained by row operations (swap, addition, and multiplication by integers in $\bbF_d^\times$) on $\caG$ such that $\caG'^z[i\!:\!j,k\!:\!l]$  ($\caG'^x[i\!:\!j,k\!:\!l]$) has (left row) echelon form. Here we assume that
only rows  from $i$ to $j$ are involved in the row operations, although 
all columns are updated simultaneously. Again, indices can be omitted if starting from the first row (column) or ending on the last row (column). In addition, we denote by $\REchz(\caG,[i\!:\!j,k\!:\!l])$ ($\REchx(\caG,[i\!:\!j,k\!:\!l])$) the generating matrix $\caG'$ obtained by row operations on $\caG$ such that $\caG'^z[i\!:\!j,k\!:\!l]$ ($\caG'^x[i\!:\!j,k\!:\!l]$) has right row echelon form, that is,  $\caG'^z[i\!:\!j,l\!:\!k]$ ($\caG'^x[i\!:\!j,l\!:\!k]$) has (left row) echelon form.

\textbf{(3) Cut function}: The cut function $\cutz(\caG,[i\!:\!j,k\!:\!l])$ is defined as the smallest row index  $c$  of $\caG$ with $i\leq c \leq j$ such that $\caG^z[c,k\!:\!l]$  has all entries equal to zero (if no row in this range has this property, then we set $\cutz(\caG,[i\!:\!j,k\!:\!l])=j+1$). In a similar way, we can define $\cutx(\caG,[i\!:\!j,k\!:\!l])$ by replacing $\caG^z$ with $\caG^x$.

As an illustration of the above functions,  we can update $\caG$ by row operations such that $\caG^x[:,k\!:\!l]$ has echelon form and then take the rows of $\caG$ with indices $c \geq \cutx(\caG,[:,k\!:\!l])$ to form a new generating matrix $\caG'$. In this way we can make sure that $|W|_i \in \{I,Z,\ldots,Z^{d-1}\}$ for  $W \in \<\caG'\>$ and  $k\leq i \leq l$. By applying  a similar procedure twice, we can construct the aforementioned generating matrix $\caG_2$. 

\begin{algorithm}[tb]
	\SetAlgoLined
	\SetKwInOut{Input}{Input}
	\SetKwInOut{Output}{Output}
	\Input{Generating matrix $\caG_0$, $m$-dit outcome $\bfx$}
	\Output{Boolean $flag$, integer $\xi$, generating matrix $\caG_{\bfx,\gamma}$}
	$\caG_0 \leftarrow \Echx(\caG_0,[:,1\!:\!n])$\;
	$c_1 =\cutx(\caG_0,[:,1\!:\!n])$, $\caG_1 = \caG_0[c_1\!:,:]$\;
	$\caG_1\leftarrow \REchz(\caG_1,[:,1\!:\!n])$\;
	$c_2 =\cutz(\caG_1,[:,m+1\!:\!n])$, $\caG_2 = \caG_1[c_2\!:,:]$\;
	$\caG_3\leftarrow\REchx(\caG_2,[:,n+1\!:\!n+t])$\;
	$c_3 =\cutx(\caG_3,[:,n+1\!:\!n+t])$\;
	$\caG_3\leftarrow\REchz(\caG_3,[c_3\!:,n+1\!:\!n+t])$\;
	$c_4 =\cutz(\caG_3,[c_3\!:,n+1\!:\!n+t])$\;
	$r(\caG_3)\leftarrow r(\caG_3)+ \caG^z_3[:,:\!m]\cdot \bfx$\;
	\eIf{$r(\caG_3)[c_4 \!:]$ contains a non-zero entry,}
	{$flag = \mathrm{False}$\;}{$flag = \mathrm{True}$\;}
	$\caG_{\bfx,\gamma} = \caG_3[:\!c_4-1,n+1\!:\! n+t]$\;
	\Return{$flag$, $\xi = \len(\caG_2)$, $\caG_{\bfx,\gamma}$.}
	\caption{Constraining the stabilizers}\label{alg:ConStab}
\end{algorithm}

The procedure for computing the partial trace in \eref{eq:Pixgamma2} and for determining the generating matrix $\caG_{\bfx,\gamma}$ is summarized in Algorithm~\ref{alg:ConStab}. 
Since the most time-consuming part  is the construction of echelon forms of generating matrices, which is based on Gaussian elimination, the time complexity of Algorithm~\ref{alg:ConStab} is approximately $\caO\bigl((n+t)^3\bigr)$. Note that some intermediate results featured in Algorithm~\ref{alg:ConStab}, including the generating matrices $\caG_1$, $\caG_2$, and $\caG_3$, can be used repeatedly in the sampling algorithm, namely, Algorithm~\ref{alg:sim_Clifford+T}. Based on this observation we can optimize our sampling algorithm, so as to reduce the time complexity. 

\end{document}